\providecommand{\tabularnewline}{\\}
\theoremstyle{plain}
\newtheorem{thm}{\protect\theoremname}[section]
\theoremstyle{remark}
\newtheorem{rem}[thm]{\protect\remarkname}
\theoremstyle{plain}
\newtheorem{prop}[thm]{\protect\propositionname}
\theoremstyle{plain}
\newtheorem{lem}[thm]{\protect\lemmaname}
\theoremstyle{plain}
\newtheorem{ass}[thm]{\protect\assumptionname}
\theoremstyle{definition}
\newtheorem{defn}[thm]{Definition}
\providecommand{\lemmaname}{Lemma}
\providecommand{\propositionname}{Proposition}
\providecommand{\remarkname}{Remark}
\providecommand{\theoremname}{Theorem}
\providecommand{\assumptionname}{Assumption}
\begin{document}
	
	\title{On the time dependence of\\the rate of convergence towards Hartree
		dynamics \\for interacting Bosons}
	
	\author{Jinyeop Lee\thanks{\url{jinyeoplee@kaist.ac.kr}}}
	\affil{Department of Mathematical Sciences, KAIST}
	\date{}
	\maketitle
	\begin{abstract}
		We consider interacting $N$-Bosons in three dimensions. It is known
		that the difference between the many-body Schr\"odinger evolution
		in the mean-field regime and the corresponding Hartree dynamics is
		of order $1/N$. We investigate the time dependence of the difference.
		To have sub-exponential bound, we use the results of time decay estimate for small initial data. We also refine time dependent bound for singular potential using Strichartz estimate.
		We consider the interaction potential $V(x)$ of type $\lambda\exp(-\mu|x|)|x|^{-\gamma}$
		for $\lambda\in\mathbb{R}$, $\mu\geq0$, and $0<\gamma<3/2$, which
		covers the Coulomb and Yukawa interaction.
	\end{abstract}
	
	\section{Introduction and the main results} \label{sec:intro}
	
	We consider a many-body particle system of $N$-Bosons with two body
	interaction via Coulomb type interaction or Yukawa type interaction,
	i.e, $V(x)=\lambda\exp(-\mu|x|)|x|^{-\gamma}$ with $\lambda\in\mathbb{R}$,
	$\mu\geq0$, and $0<\gamma<3/2$. The system can be described by a complex
	valued function $\psi_{N}=\psi_{N}(x_{1},\dots,x_{N}):(\mathbb{R}^{3})^{N}\to\mathbb{C}$,
	which is called wave function. The wave function $\psi_{N}$ for the Bosonic system is symmetric under the permutation of variables, i.e., for each $x_{i},x_{j}\in\mathbb{R}^{3}$ $1\leq i,j\leq N$,
	$\psi_{N}(\dots,x_{j},\dots,x_{i},\dots)=\psi_{N}(\dots,x_{i},\dots,x_{j},\dots)$.
	Our system is governed by the following Hamiltonian:
	\begin{equation}
	H_{N}=\sum_{j=1}^{N}-\Delta_{j}+\frac{1}{N-1}\sum_{i<j}V(x_{i}-x_{j}),\label{eq:N_body_Hamiltonian}
	\end{equation}
	and we call it a many-body mean-field Hamiltonian. 
	
	Now, suppose that the system is fully condensed, i.e., the initial
	wave function is given by
	\[
	\psi_{N}=\varphi^{\otimes N}
	\]
	with a one-body wave function $\varphi:\mathbb{R}^{3}\to\mathbb{C}$
	in some appropriate function space which will be described later.
	We want to argue that the system is almost condensed at the time $t\geq0$
	as well, i.e, 
	\begin{equation}
	\psi_{N,t}=e^{-\mathrm{i}H_{N}t}\psi_{N}\simeq\varphi_{t}^{\otimes N}\quad\text{for large }N\label{eq:Factorization_of_Wave}
	\end{equation}
	for some $\varphi_{t}:\mathbb{R}^{3}\to\mathbb{C}$. 
	
	Heuristically, from the point of view of particle $x_1$, it `feels' averaged potential 
	\[
	\frac{1}{N-1}\sum_{j=2}^{N}V(x_{1}-x_{j})
	\]
	from other particles.
	Since the Hamiltonian is symmetric under the permutation of the particles, the averaged potential is the same for every particle $x_j$.
	Thus, we can expect that $\varphi_{t}$ evolves according to the Hartree equation
	\begin{equation}
	\mathrm{i}\partial_{t}\varphi_{t}=-\Delta\varphi_{t}+(V*|\varphi_{t}|^{2})\varphi_{t}\label{eq:Hartree}
	\end{equation}
	with initial data $\varphi_{t=0}=\varphi$. Non-rigorous derivation of the Hartree equation can be found in literature. (See, e.g., Section 1 of~\cite{Chen2018}).
	
	To understand the `almost condensation' of the system at the time $t\geq0$ in a mathematically rigorous way, we proceed as follows.
	First, we consider the density matrix $\gamma_{N,t}=\left|\psi_{N,t}\right\rangle \left\langle \psi_{N,t}\right|$
	associated with $\psi_{N,t}$, which can be understood as the orthogonal projection
	onto $\psi_{N,t}$. More precisely, the kernel of $\gamma_{N,t}$ is given by
	\[
	\gamma_{N,t}(\mathbf{x};\mathbf{x}')=\psi_{N,t}(\mathbf{x})\overline{\psi_{N,t}(\mathbf{x})}.
	\]
	The $k$-particle marginal density is then defined through its kernel
	\begin{equation}
	\gamma_{N,t}^{\left(k\right)}(\mathbf{x}_{k};\mathbf{x}'_{k})=\int\mathrm{d}\mathbf{x}_{N-k}\gamma_{N,t}(\mathbf{x}_{k},\mathbf{x}_{N-k};\mathbf{x}'_{k},\mathbf{x}_{N-k}). \label{eq:Kernel_of_Marginal_Density}
	\end{equation}
	We now focus on the trace-norm distance between the one-particle marginal density $\gamma_{N,t}^{(1)}$ and the projection operator $|\varphi_{t}\rangle\langle\varphi_{t}|$.
	In particular,
	we will prove that
	\begin{equation}
	\operatorname{Tr}\left|\gamma_{N,t}^{(1)}-|\varphi_{t}\rangle\langle\varphi_{t}|\right|\leq\frac{C(t)}{N}\label{eq:trace_norm_bound}
	\end{equation}
	and find $C(t)$ according to the conditions on $V$. It is known that
	the optimal $N$-dependence for the rate of convergence is of
	$O(1/N)$. (See, e.g., \cite{Chen2011,Chen2018,Grillakis13,Grillakis17}.) 
	For the necessity of the trace-norm in \eqref{eq:trace_norm_bound}, we refer to \cite{knowles2010mean}, where it is also provided an example that explains why $L^2$-norm is counterintuitive.
	Moreover, if the initial many-body state is fully factorized,
	for every $t>0$, the evolved state is never close to the state $\varphi^{\otimes N}$ in the $L^2$-norm, except
	in the non interacting case. One can quote in this contest the several works aimed to find a norm-approximation of the many-body evolution, by taking into account fluctuations
	around the Hartree dynamics, see for example \cite{chen2012,Grillakis2010,Grillakis2011,Grillakis13,lewin2013,Rodnianski2009}, and the pioneering papers
	by Hepp and Ginibre-Velo \cite{Ginibre1979_1,Ginibre1979_2,Hepp1974}.
	
	Historically, Spohn \cite{Spohn1980} first proved that $\operatorname{Tr}\left|\gamma_{N,t}^{(1)}-|\varphi_{t}\rangle\langle\varphi_{t}|\right|\to0$
	as $N\to\infty$ for bounded potential. It was extended by Erd\H{o}s and Yau \cite{Erdos2001} to prove the same result for singular potential (including the Coulomb case) by using the BBGKY hierarchy. The rate of convergence, especially the $N$-dependence of the bound in \eqref{eq:trace_norm_bound}, has been intensively studied in last ten years. 
	First, a new method based on coherent state
	approach was introduced by Rodnianski and Schlein in \cite{Rodnianski2009} to give an explicit rate of convergence as in \eqref{eq:trace_norm_bound} with an $O(1/\sqrt{N})$ bound.
	The proof is based on the Fock space approach that was introduced by Hepp \cite{Hepp1974} and extended by Ginibre and Velo \cite{Ginibre1979_1,Ginibre1979_2}.
	Soon after \cite{Rodnianski2009}, Knowles and Pickl \cite{knowles2010mean} considered more singular
	interaction potentials and obtained similar estimates on the rate of convergence.
	The proof in \cite{knowles2010mean} is based on the use of projection operators in the $N$-particle space $L_s^2(\mathbb{R}^{3N})$, and allows for a large class of possibly time-dependent external potentials.
	The $O(1/N)$ rate of convergence, which is optimal in $N$-dependence, was proved by 
	Chen, Lee, and Schlein in \cite{Chen2011a} for the Coulomb case. It was later extended in \cite{Chen2018} to cover the case $V\in L^{2}+L^{\infty}$.
	We also remark that Hott \cite{Hott2018} pointed out the initial
	condition may stay in bigger space than $H^{1}(\mathbb{R}^{3})$ for
	$V(x)=\lambda|x|^{-\gamma}$ with $1<\gamma<3/2$.
	
	Unlike the $N$-dependence in the rate of the convergence, the time dependence of the bound has mostly been of order $e^{Kt}$ (or even worse) in most of the works mentioned above, with the exception of \cite{knowles2010mean}
	where the authors also showed that the rate of convergence can be uniform in time if the solution of the Hartree equation
	satisfies an integrability condition. For example, in \cite{Chen2018}, where the use of Strichartz estimates\footnote{For the details of the Strichartz estimate, see \cite{cazenave2003semilinear,tao2006nonlinear}} was the main strategy of the proof to generalize the interaction potential, the time dependence of the bound is of order $e^{Kt^{3/2}}$ which grows faster than $e^{Kt}$.
	
	It is in general harder to obtain the better bound in terms of $N$-dependence in \eqref{eq:trace_norm_bound} for more singular interaction potential, e.g. $\gamma>1$. On the contrary, it is typically more difficult to prove the better bound in terms of $t$-dependence for slowly decaying interaction potential or long range potential, e.g. $\gamma<1$. (Heuristically, we can also argue that the optimal bound can be proved relatively easily, since it decays sufficiently fast.) It was also remarked by Knowles and Pickl in \cite{knowles2010mean} that the time dependence
	can be removed for interaction potential with strong decay.
	Such a phenomenon is known in the Gross-Pitaevskii regime; we refer to the work of Chong \cite{Chong2016}, where the scattering results of the cubic nonlinear Schr\"odinger equation were used.
	For a inverse power law potential $V(x)=\lambda|x|^{-\gamma}$ with $0<\gamma<3/2$, however, one may not have the corresponding scattering result for the solution. Nevertheless, it is possible to use the Strichartz estimates as in \cite{Chen2018}, which can be regarded as a generalized time decay estimate in the time averaged sense. We remark that the time decay estimates of the Hartree equation has been deeply researched in many important works by Hayashi, Naumkin, and Ozawa \cite{HayashiNaumkin98,Hayashi01,HayashiOzawa87}. Moreover, the existence of the modified operator of the equation was studied, e.g., by Nakanishi \cite{Nakanishi02-2,Nakanishi02-1}.%\cite{HayashiNaumkin98,HayashiOzawa87,HayshiTsutsumi1987scattering,Hayashi88,Hayashi01}.
	
	A similar approach can also be applied to many-body semi-relativistic Schr\"odinger equations which describes a Boson star. Lee \cite{Lee2013} provide the optimal rate of convergence $O(1/N)$ for Coulomb interaction. Following the approach presented in this article, it is believed that one can obtain a corresponding bound for the semi-relativistic case by exploiting the properties of the mean-field solution. We refer to the work of Cho and Ozawa \cite{cho2006} for more detail on the solution of the semi-relativistic Hartree equation. The time dependence of the bound in the semi-relativistic case will be discussed in a future paper.
	
	In this article, we investigate the time dependence $C(t)$ in \eqref{eq:trace_norm_bound} by using the results of time decay estimates and Strichartz estimates for $V(x)=\lambda\exp(-\mu|x|)|x|^{-\gamma}$
	for $\lambda\in\mathbb{R}$, $\mu\geq0$ and $0<\gamma<3/2$. 
	More precisely, we prove that the bound in \eqref{eq:trace_norm_bound} is time-independent if the interaction constant is below a threshold, i.e., $|\lambda|<\lambda_c$ for some $\lambda_c =\lambda_c(\gamma,\mu)$. We also improve the time dependence on the bound for more singular potential with $1<\gamma<3/2$ and $\alpha\in[2\gamma/3,1)$ to $C_\alpha e^{Kt^{\gamma/\alpha}}$, which was $Ce^{Kt^{3/2}}$ in \cite{Chen2018}. For the exact Coulomb interaction case with $\gamma = 1$, we prove a bound that is a polynomial of $t$ whose degree is proportional to $\lambda$, hence sublinear in $t$ if $\lambda$ is sufficiently small. The bounds are collected in Table \ref{table:RoC_t}, which describes
	the time dependence of the rate of convergence. 
	
	\textbf{Notational Remark.}
	We use $\|f\|_{p}=\|f\|_{L^{p}(X)}$ for the standard
	$L^{p}$ norm of $f:X\to\mathbb{C}$.
	%and for the sake of simplicity we sometimes use $\|f\|$ to denote $\|f\|_{2}$. 
	We also use $\|f\|_{H^p}=\|f\|_{H^{p}(X)}$ for the standard
	$H^{p}$ norm of $f:X\to\mathbb{C}$.
	We denote $\|J\|_\mathrm{op}$ as an operator norm of an operator $J$.	
	In many lines of
	inequalities we will face constants $C$ here and there, note that
	the constant may differ line by line. The time dependent constant
	$C(t)$ also can differ line by line. Sometimes we may use
	$C_{\alpha}$ if we want to emphasize the dependence on a variable
	$\alpha$.
	%We use Japanese bracket $\langle t\rangle:=\sqrt{1+t^2}$ for time variables.
	We write $\mathcal{S}$ to denote the Schwartz space and $\mathcal{S}'$ to denote the dual space of $\mathcal{S}$.
	
	\begin{defn}
		We define a generalized Sobolev space, or weighted Sobolev
		space, such that
		\[
		H_{p}^{m,s}=\left\{ \phi\in\mathcal{S}':\|\phi\|_{m,s,p}=\|(1+|x|^{2})^{s/2}(1-\Delta)^{m/2}\phi\|_{p}<\infty\right\} 
		\]
		for $m,s\in\mathbb{R}$. We may simply write $H^{m,s}$ to denote
		$H_{2}^{m,s}$.
	\end{defn}
	
	Note that $H^{s,0}=H^{s}$ and $\varphi\in H^{0,k}$ implies $\widetilde{\varphi}\in H^{k}$.
	Moreover, because one can think of $|\varphi|^{2}$ as a probability
	distribution under normalization, if $\varphi\in H^{0,\gamma}$, one
	can understand that the $\gamma$-th moment of $|\varphi|^{2}$ is
	finite. 
	
	\begin{ass}\label{ass:cases} We assume initial data $\varphi$ for given $\lambda$, $\gamma$, and $\mu$ such that
		\begin{enumerate}
			\item for $|\lambda|\leq\lambda_c$ and $0<\gamma<1$, let $\varphi\in H^{5,0}\cap H^{0,5}$
			with $\|\varphi\|_{H^{5,0}}+\|\varphi\|_{H^{0,5}}=1$, 
			\item for $|\lambda|\leq\lambda_c$ and $1\leq\gamma<3/2$, let $\varphi\in H^{S,0}\cap H^{0,S}$
			with $\|\varphi\|_{H^{S,0}}+\|\varphi\|_{H^{0,S}}=1$ for $S>3/2$, 
			\item for $|\lambda|\leq\lambda_c$ and $\mu>0$, let $\varphi\in H^{S,0}\cap H^{0,S}$
			with $\|\varphi\|_{H^{S,0}}+\|\varphi\|_{H^{0,S}}=1$ for $S>3/2$, 
			\item for $\lambda>\lambda_c$, $\mu=0$ and $1\leq\gamma<3/2$, let $\varphi\in H^{2,0}\cap H^{0,2}$, 
			or
			\item for $\lambda>\lambda_c$ and $\mu>0$, let $\varphi\in H^{2,0}\cap H^{0,2}$.
			\item otherwise, let $\varphi\in H^{1}(\mathbb{R}^{3})$,
		\end{enumerate}
	\end{ass}
	
	\begin{thm}
		Assume that the potential $V(x)=\lambda\exp(-\mu|x|)|x|^{-\gamma}$ 
		with
		interaction constant $\lambda\in\mathbb{R}$ and
		positive $\mu\geq0$. Let $\lambda_c=\lambda_c(\mu,\gamma)$ be a threshold of interaction constant.
		\label{thm:classical_Main}
		Assume that $\varphi$ follows the Assumption \ref{ass:cases} for each case.
		Let $\varphi_{t}$ be the solution of the Hartree equation
		\[
		\mathrm{i}\partial_{t}\varphi_{t}=-\Delta\varphi_{t}+(V*|\varphi_{t}|^{2})\varphi_{t}
		\]
		with initial data $\varphi_{t=0}=\varphi$. Let $\psi_{N,t}=e^{-\mathrm{i}H_{N}t}\varphi^{\otimes N}$
		and $\gamma_{N,t}^{(1)}$ be the one-particle reduced density associated
		with $\psi_{N,t}$, as defined in \eqref{eq:Kernel_of_Marginal_Density}.
		Then there exists a time-dependent constant $C(t)$, depending only on $\varphi$,
		$\lambda$, $\mu$, and $t$ such that
		\begin{equation}
		\operatorname{Tr}\left|\gamma_{N,t}^{(1)}-|\varphi_{t}\rangle\langle\varphi_{t}|\right|\leq\frac{C(t)}{N}.\label{eq:rate_of_Conv_CN}
		\end{equation}
		Moreover, we can choose the time dependent factor $C(t)$ in \eqref{eq:rate_of_Conv_CN} as in the Table \ref{table:RoC_t} with constants $C$ and $K$ independent of $t$, arbitrary constant $\alpha\in[2\gamma/3,1)$, and $\lambda_c=\lambda_c(\mu,\gamma)$.
		\begin{center}
			\begin{table}[h]
				\caption{Time dependent factor $C(t)$ of the rate of convergence }
				\label{table:RoC_t}
				\centering{}
				\begin{tabular}{|c|c|c|c|c|}
					\hline 
					\multirow{2}{*}{} & \multicolumn{3}{c|}{$V(x)=\lambda|x|^{-\gamma}$} & \multicolumn{1}{c|}{$V(x)=\lambda\exp(-\mu|x|)|x|^{-\gamma}$, $\mu>0$}\tabularnewline
					\cline{2-5} 
					& $0<\gamma<1$ & $\gamma=1$ & $1<\gamma<3/2$ & $0<\gamma<3/2$\tabularnewline
					\hline 
					$\lambda>\lambda_c$ & $Ce^{Kt}$ & $Ce^{Kt^{1/3}}$ & $Ce^{Kt^{1-2\gamma/3}}$ & $C(1+t)^K$ \tabularnewline
					\hline 
					$|\lambda|\leq\lambda_c$ & $Ce^{Kt^{1-\gamma}}$ & $C(1+t)^{K}$ & $C$ & $C$ \tabularnewline
					\hline 
					$\lambda<-\lambda_c$ & $Ce^{Kt}$ & $Ce^{Kt}$ & $C_{\alpha}e^{Kt^{\gamma/\alpha}}$ & $Ce^{Kt}$\tabularnewline
					\hline 
				\end{tabular}
			\end{table}
			\par\end{center}
		
	\end{thm}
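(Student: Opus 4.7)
The plan is to reduce the trace-norm bound to a Gr\"onwall-type differential inequality for a scalar ``deviation'' functional $\beta_{N}(t)$, and then to read off $C(t)$ in each row of Table~\ref{table:RoC_t} from appropriate space-time estimates for the Hartree solution $\varphi_{t}$. For the reduction step I would follow the Fock-space / coherent-state approach of Rodnianski--Schlein as refined in \cite{Chen2011a,Chen2018}: embed $\psi_{N,t}$ via the Weyl operator centered at $\sqrt{N}\varphi_{t}$, pass to the fluctuation dynamics generated by a Bogoliubov-type Hamiltonian $\mathcal{L}_{N}(t)$ on Fock space, and use the standard identity that bounds $\operatorname{Tr}|\gamma_{N,t}^{(1)}-|\varphi_{t}\rangle\langle\varphi_{t}||$ by $CN^{-1}\langle\Omega,\mathcal{U}_{N}^{*}(t,0)\,\mathcal{N}\,\mathcal{U}_{N}(t,0)\Omega\rangle$, where $\mathcal{N}$ is the number operator and $\mathcal{U}_{N}$ is the fluctuation propagator. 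Alternatively, the projection-counting functional of Knowles--Pickl yields an equivalent reduction; either route converts \eqref{eq:rate_of_Conv_CN} into the task of bounding $\langle\Omega,\mathcal{U}_{N}^{*}\mathcal{N}\mathcal{U}_{N}\Omega\rangle$ by $C(t)$.

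Next I would differentiate this quantity in $t$ and estimate the cubic and quartic terms coming from the interaction $V(x_{1}-x_{2})$. The standard bookkeeping produces an inequality of the form
\[
\frac{d}{dt}\langle\Omega,\mathcal{U}_{N}^{*}\mathcal{N}\mathcal{U}_{N}\Omega\rangle \;\leq\; g(t)\,\bigl(\langle\Omega,\mathcal{U}_{N}^{*}\mathcal{N}\mathcal{U}_{N}\Omega\rangle+1\bigr),
\]
where $g(t)$ is a functional of $\varphi_{t}$ built from norms such as $\|V*|\varphi_{t}|^{2}\|_{\infty}$, $\|\varphi_{t}\|_{\infty}^{2}$, and, in the singular regime $1<\gamma<3/2$, suitable Strichartz mixed norms $\|\varphi_{t}\|_{L_{t}^{q}L_{x}^{r}}$ of the Hartree evolution. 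The $C(t)$ in \eqref{eq:rate_of_Conv_CN} is then essentially $\exp\!\bigl(\int_{0}^{t}g(s)\,ds\bigr)$ (or a polynomial analogue when the Strichartz H\"older split forces one to absorb a $t^{1/q}$-factor rather than exponentiate).

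The case distinction in Table~\ref{table:RoC_t} now comes from the regularity/decay input on $\varphi_{t}$. For $|\lambda|\le\lambda_{c}$, the weighted Sobolev hypothesis in Assumption~\ref{ass:cases}(1)--(3) is calibrated precisely so that the small-data dispersive/scattering results of Hayashi--Naumkin--Ozawa \cite{HayashiNaumkin98,Hayashi01,HayashiOzawa87} yield $\|\varphi_{t}\|_{\infty}\lesssim (1+t)^{-3/2}$, making $\int_{0}^{\infty}g(s)\,ds$ finite (time-independent bound) or at worst logarithmically growing (polynomial bound). For the Yukawa case $\mu>0$, the exponential decay of $V$ gives an extra factor that renders $g$ integrable in time regardless of $\lambda$. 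For the attractive regime $\lambda<-\lambda_{c}$, no decay is available; instead, I would use the global Strichartz estimates of \cite{Chen2018}, but optimizing the admissible pair $(q,r)$ using the free parameter $\alpha\in[2\gamma/3,1)$ to extract the improved exponent $\gamma/\alpha$ in $C_{\alpha}e^{Kt^{\gamma/\alpha}}$ (as opposed to the $3/2$ of \cite{Chen2018}).

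The main obstacle is this last refinement: getting $t^{\gamma/\alpha}$ in place of $t^{3/2}$ requires a careful interpolation between the dispersive Strichartz scale and the singularity scale of $V$, rather than the direct application of a single Strichartz pair as in \cite{Chen2018}. A secondary technical difficulty is verifying the threshold $\lambda_{c}=\lambda_{c}(\mu,\gamma)$ for the small-data regime: one needs to check that the $H^{5,0}\cap H^{0,5}$ (resp.\ $H^{S,0}\cap H^{0,S}$ for $S>3/2$) smallness propagates under the Hartree flow and is compatible with the time-decay hypotheses of the cited scattering results; this is where the weighted Sobolev spaces $H_{p}^{m,s}$ and the moment interpretation of $H^{0,k}$ play an essential role. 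Once these two ingredients are in place, Gr\"onwall applied to the differential inequality above, together with $\langle\Omega,\mathcal{N}\Omega\rangle=0$ at $t=0$, delivers \eqref{eq:rate_of_Conv_CN} with the $C(t)$ displayed in each row of Table~\ref{table:RoC_t}.
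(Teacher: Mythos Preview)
Your overall architecture---Fock-space fluctuation dynamics, Gr\"onwall on a number-operator functional driven by a rate $g(t)$, and case-by-case control of $\int_0^t g$ via Hayashi--Naumkin/Ozawa decay or Strichartz---matches the paper. Two points deserve correction, however.

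First, the reduction step as you state it does not yield $1/N$ for \emph{factorized} initial data. The inequality $\operatorname{Tr}|\gamma_{N,t}^{(1)}-|\varphi_t\rangle\langle\varphi_t||\le CN^{-1}\langle\Omega,\mathcal{U}^*\mathcal{N}\,\mathcal{U}\Omega\rangle$ is the coherent-state identity; passing from the coherent state to $\varphi^{\otimes N}=d_N P_N W(\sqrt{N}\varphi)\Omega$ costs an apparent $N^{1/2}$ (the Rodnianski--Schlein rate). The paper, following \cite{Chen2011a,Lee2013}, does not simply Gr\"onwall the number operator: it tests against an arbitrary Hermitian $J$, splits $\operatorname{Tr} J(\gamma_{N,t}^{(1)}-|\varphi_t\rangle\langle\varphi_t|)=E_t^1(J)+E_t^2(J)$, and for the dangerous linear term $E_t^2$ introduces an auxiliary dynamics $\widetilde{\mathcal{U}}$ generated by $\mathcal{L}_2+\mathcal{L}_4$ which \emph{preserves parity}. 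The parity cancellation $P_{2\ell}\widetilde{\mathcal{U}}^*\phi(J\varphi_t)\widetilde{\mathcal{U}}\Omega=0$, combined with the odd-sector smallness $\|P_{2k+1}W^*(\sqrt{N}\varphi)\frac{(a^*(\varphi))^N}{\sqrt{N!}}\Omega\|\lesssim d_N^{-1}N^{-1/2}$, is what recovers the missing $N^{-1/2}$. Your Knowles--Pickl alternative would also reach $1/N$, but the coherent-state route you describe needs this extra parity/comparison argument, which you omit.

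Second, your claim that for $\mu>0$ the exponential decay of $V$ makes $g$ integrable ``regardless of $\lambda$'' is false and contradicts Table~\ref{table:RoC_t}. The relevant $g$ is specifically $g(t)=\sup_x\|V(x-\cdot)\varphi_t\|_2$ (not $\|V*|\varphi_t|^2\|_\infty$), and without a dispersive decay estimate for $\varphi_t$ one only gets $\|V(x-\cdot)\varphi_s\|_2\le C\|\varphi_s\|_\infty$ and then Strichartz: $\int_0^t g\le C(1+t)$ in general (Case~3.4), or $\int_0^t g\le C\log(1+t)$ when $\lambda>0$ via Hayashi--Ozawa decay $\|\varphi_t\|_\infty\lesssim(1+t)^{-1/2}$ (Case~6). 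These give $Ce^{Kt}$ and $C(1+t)^K$ respectively, not a uniform constant. Only when $|\lambda|\le\lambda_c$ does the $(1+t)^{-3/2}$ decay make $g$ integrable.
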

	
	\begin{rem}
		Note that for $V(x)=\lambda|x|^{-\gamma}$ with $\lambda<-\lambda_c$,
		according to \cite{Chen2018}, the exponent of $t$ was $3/2$ which
		is the case $\alpha=2\gamma/3$. The current paper provides a better
		time growth rate. 
	\end{rem}
	
	\begin{rem}
		Notice that in the case of Coulomb interaction the exponent $K$ in the bound $C(1+t)^K$ is sufficiently small, for small enough $\lambda$.
		In the proof, we show that $K$ is proportional to $|\lambda|$, i.e., $K=k|\lambda|$
		for fixed $k>0$. Because we are dealing with $|\lambda|<\lambda_c$ with small $\lambda_c$, $K=\kappa|\lambda|$
		is also sufficiently small for some constant $\kappa$. Thus, even though it is written as a polynomial of $(1+t)$, it is actually sublinear in $(1+t)$.
	\end{rem}
	
	\begin{rem}
		In \cite{knowles2010mean}, the authors remarked that if $\|\varphi_t\|_{q_1}$ and $\|\varphi_t\|_{q_2}$ is integrable in $t$ over $\mathbb{R}$, then the time dependent factor is uniform in time, i.e. $C(t)<\infty$, where $V\in L^{p_1}(\mathbb{R}^3)+L^{p_2}(\mathbb{R}^3)$ and $1/2 = 1/p_i + 1/q_i$ for $i=1,2$. 
		They also noted that such an integrability condition describes a scattering regime and it requires an interaction potential with strong decay.
		The result of the current article suggests that the strong decay of $V$, i.e., large $\gamma$, may not be enough to guarantee the scattering behavior but one actually needs to consider %the trade-off between the decay of the interaction potential and
		the size of 
		the interaction constant $\lambda$.
		Intuitively, if the interaction constant is too large,
		the interactions between particles are hard to ignore even with strong decay.
		Thus,
		the particles cannot be asymptotically free even for large $t$, and one cannot expect the usual scattering behavior.
	\end{rem}
	
	\begin{rem}
		The result of Theorem \ref{thm:classical_Main} is expected to hold under more general assumptions on the many-body initial state, namely for initial states which exhibits condensation into a one particle orbital $\varphi$ (in the sense of the convergence of the one particle reduced density) but are not necessarily factorized. Unfortunately the method used in this paper is only relevant for factorized initial state. It would be interesting to see whether an improvement of the time dependence as in Theorem \ref{thm:classical_Main} could be also achieved using different methods, allowing for more general initial data. 
	\end{rem}
	
	We follow the approach in \cite{Chen2011,Chen2011a,Chen2018,Rodnianski2009} for the proof of Theorem \ref{thm:classical_Main}. In this method based on the analysis of the coherent states in the Fock space, the main obstacle is that a bound on the term 
	$\int_{0}^{t}\mathrm{d}s\,\|V(\cdot-x)\varphi_{s}\|_{2}$ is required. For this reason, we begin by establishing the time dependence of $\int_{0}^{t}\mathrm{d}s\,\|V(\cdot-x)\varphi_{s}\|_{2}$. The estimate is based on several time decay estimates of the solution of the Hartree equation.
	
	The rest of the paper is organized as follows: We will provide the estimates for $\int_{0}^{t}\mathrm{d}s\,\|V(\cdot-x)\varphi_{s}\|_{2}$ in Section 2. In Section 2.2, we will
	provide a sketch of proof of time decay estimates for Yukawa interaction,
	because it is a simple adjustment of previous results \cite{HayashiNaumkin98,HayashiOzawa87}.
	In Section 3, we briefly provide definitions and properties
	of Fock space which we are going to use. Section 4 is devoted to give
	proof of the main theorem. We have many useful bounds for operators in
	Fock space to prove the main theorem in Section 5. 
	While the most of the materials in Sections 3 through 5 are similar to those in the previous works \cite{Chen2011a,Chen2018,Rodnianski2009}, we do not omit them in the current paper in order to provide a logically complete explanation of our proof.
	
	\section{Properties of solution of mean-field equation}
	
	This section is devoted to provide time dependent or time independent
	bounds of $\int_{0}^{t}\mathrm{d}s\,\|V(\cdot-x)\varphi_{s}\|_{2}$
	for each case appeared in Table \ref{table:RoC_t}.
	
	\subsection{Time decay estimate of the Hartree equation for Coulomb type interaction\label{sec:prelim}}
	
	This section introduces time decay estimates of the Hartree equation.
	We will
	show that
	\[
	\int_{0}^{\infty}\mathrm{d}t\,\|V(\cdot-x)\varphi_{t}\|_{2}<C
	\]
	using time decay estimates for weakly attracting Hartree equation. 
	
	\begin{prop}
		\label{prop:timedecay}Suppose that $\varphi_{t}$ is a solution of
		\eqref{eq:Hartree}. Suppose that $\lambda$ in \eqref{eq:Hartree}
		is sufficiently small, the suitable size of $\lambda$ is depending
		on $\gamma$, $\mu$, and $\varphi$. We assume that
		\begin{enumerate}
			\item $\varphi\in H^{5,0}\cap H^{0,5}$ for $\mu=0$ and $0<\gamma<1$, 
			\item $\varphi\in H^{S,0}\cap H^{0,S}$ for $\mu=0$ and $1\leq\gamma<3/2$
			with $S>3/2$, or
			\item $\varphi\in H^{S,0}\cap H^{0,S}$ for $\mu>0$ and $0\leq\gamma<3/2$
			with $S>3/2$.
		\end{enumerate}
		Then there exists a unique global solution $\varphi_{t}$ of \eqref{eq:Hartree}
		such that 
		\[
		\|\varphi_{t}\|_{\infty}\leq C_{\lambda}(1+|t|)^{-3/2}.
		\]
	\end{prop}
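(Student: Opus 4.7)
The plan is to follow the small-data dispersive-decay strategy of Hayashi--Ozawa and Hayashi--Naumkin, adapted to the Yukawa/Coulomb-type kernels considered here. I would run a bootstrap in a Banach space whose norm simultaneously controls the regularity $\|\varphi_t\|_{H^{S,0}}$, the weighted-moment norm $\|J(t)^{S}\varphi_t\|_{2}$ expressed through the Galilean vector field $J(t)=x+2\mathrm{i}t\nabla$, and the time-weighted decay norm $\sup_{t}(1+|t|)^{3/2}\|\varphi_t\|_{\infty}$. The key structural input is the conjugation identity $J(t)=M(t)\,(2\mathrm{i}t\nabla)\,M(-t)$ with $M(t)=\exp(\mathrm{i}|x|^{2}/(4t))$, which implies $[J(t),\mathrm{i}\partial_{t}+\Delta]=0$. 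Combined with the Gagliardo--Nirenberg inequality applied to $M(-t)\varphi_t$, it yields
\[
\|\varphi_t\|_{\infty}=\|M(-t)\varphi_t\|_{\infty}\lesssim |2t|^{-3/2}\,\|\varphi_t\|_{2}^{1-3/(2S)}\,\|J(t)^{S}\varphi_t\|_{2}^{3/(2S)}
\]
for $S>3/2$, so the required $(1+|t|)^{-3/2}$ decay reduces, thanks to mass conservation, to a uniform-in-time bound on $\|J(t)^{S}\varphi_t\|_{2}$.

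For the propagation step I would apply $J(t)^{S}$ and $(1-\Delta)^{S/2}$ to the Duhamel formula
\[
\varphi_t=e^{\mathrm{i}t\Delta}\varphi-\mathrm{i}\int_{0}^{t}e^{\mathrm{i}(t-s)\Delta}\bigl(V*|\varphi_s|^{2}\bigr)\varphi_s\,\mathrm{d}s,
\]
take $L^{2}$ norms, and use a modified Leibniz rule for $J$ together with the commutator identity $[J(t),f]=2\mathrm{i}t\nabla f$ for multiplication operators. The convolution $V*|\varphi_s|^{2}$ is controlled in $L^{\infty}$ by Young's inequality in the Yukawa case (there $V\in L^{1}\cap L^{p}$ for suitable $p$ thanks to the exponential cut-off) and by Hardy--Littlewood--Sobolev in the pure-power case ($\|V*|\varphi_s|^{2}\|_{\infty}\lesssim\|\varphi_s\|_{p}^{2}$ with $1/p=1/2+\gamma/6$); an interpolation then writes $\|\varphi_s\|_{p}$ as a power of the conserved $L^{2}$-mass and of $\|\varphi_s\|_{\infty}$. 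Inserting the bootstrap hypothesis $\|\varphi_s\|_{\infty}\lesssim(1+s)^{-3/2}$ makes the nonlinear integrand decay strictly faster than $(1+s)^{-1}$ throughout the admissible range of $\gamma$, so the time integral stays uniformly bounded.

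The bootstrap then closes provided the total nonlinear contribution is at most $C|\lambda|$ times the bootstrap constant, and this is exactly what identifies the threshold $\lambda_{c}=\lambda_{c}(\mu,\gamma)$; combined with the standard local theory in $H^{S,0}\cap H^{0,S}$ one gets global existence together with the claimed decay by a continuity-in-time argument. The main obstacle is the case $\mu=0$ with $\gamma$ close to $3/2$, where the singular kernel $|x|^{-\gamma}$ forces a delicate balance between the derivative budget $S$, the moment budget and the Hardy/HLS exponents; one must verify simultaneously that $\lambda_{c}>0$ and that the available $S>3/2$ leaves enough room to absorb the loss coming from commuting $J(t)^{S}$ through $V*|\varphi_s|^{2}$. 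At the other end, the weaker decay of $V$ for $0<\gamma<1$ is what forces the higher regularity $S=5$ (and the matching five moments) in that regime, since the time integral of the long-range tail of the nonlinearity needs additional smoothness to be absorbed into the bootstrap.
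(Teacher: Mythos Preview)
Your bootstrap via the Galilean vector field $J(t)=x+2\mathrm{i}t\nabla$, Gagliardo--Nirenberg, and Duhamel with commutator estimates is the standard machinery behind the Hayashi--Naumkin results, and it would indeed close for sufficiently small $|\lambda|$; so the plan is sound. The paper, however, does not rerun any of this analysis. It simply quotes the small-data decay theorems of Hayashi--Naumkin \cite{HayashiNaumkin98,Hayashi01} (and a Yukawa variant obtained by the same fixed-point argument) as black boxes, and then observes that the substitution $\varphi=\widetilde{\varphi}\,(\epsilon'/M)^{-1}$ turns the Hartree equation with data of size $\epsilon'$ and coupling $\lambda$ into the same equation with data of size $M$ and coupling $\widetilde{\lambda}=\lambda(\epsilon'/M)^{2}$; thus ``small initial data with fixed $\lambda$'' is equivalent to ``normalized initial data with small $\widetilde{\lambda}$'', which is exactly the form of Proposition~\ref{prop:timedecay}. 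Your route is self-contained and makes the dispersive mechanism explicit, but it amounts to reproving the cited lemmas; the paper's route is a two-line rescaling that piggybacks on the literature. It is worth noting that the long-range case $0<\gamma<1$ that you flag as delicate is genuinely so---the Hayashi--Naumkin 2001 argument for that regime involves a modified phase and is not the straightforward bootstrap you sketch---which is another reason the paper opts to cite rather than reprove.
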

	
	\begin{prop}
		\label{prop:timedecay_defocus}Suppose that $\varphi_{t}$ is a solution
		of \eqref{eq:Hartree}. We assume that $\varphi\in H^{2,0}\cap H^{0,2}$
		for (i) $\lambda>0$, $\mu=0$ and $1<\gamma<3/2$ or (ii) $\lambda>0$, $\mu>0$ and $0<\gamma<3/2$. Then there exists a unique global
		solution $\varphi_{t}$ of \eqref{eq:Hartree} such that 
		\[
		\|\varphi_{t}\|_{\infty}\leq C_{\lambda}(1+|t|)^{-1/2}.
		\]
	\end{prop}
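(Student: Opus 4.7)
The plan is to adapt the weighted-$L^{2}$ scheme of Hayashi--Ozawa~\cite{HayashiOzawa87} and Hayashi--Naumkin~\cite{HayashiNaumkin98} to the potential $V(x)=\lambda\exp(-\mu|x|)|x|^{-\gamma}$ in the defocusing regime $\lambda>0$. The three main ingredients are: (i) conservation of mass and of the energy
\[
E[\varphi_{t}]=\|\nabla\varphi_{t}\|_{2}^{2}+\tfrac{1}{2}\int\!\!\int V(x-y)|\varphi_{t}(x)|^{2}|\varphi_{t}(y)|^{2}\,\mathrm{d}x\,\mathrm{d}y,
\]
which together with $\lambda>0$ forces $\|\varphi_{t}\|_{H^{1}}$ to stay uniformly bounded; (ii) the Galilean operator $J(t)=x+2\mathrm{i}t\nabla$, which commutes with $\mathrm{i}\partial_{t}+\Delta$ and which, via the identity $\nabla\widetilde\varphi_{t}=-\mathrm{i}(2t)^{-1}e^{-\mathrm{i}|x|^{2}/(4t)}J(t)\varphi_{t}$ for $\widetilde\varphi_{t}:=e^{-\mathrm{i}|x|^{2}/(4t)}\varphi_{t}$, converts weighted bounds into Sobolev bounds on the conjugated profile $\widetilde\varphi_{t}$; and (iii) a Gagliardo--Nirenberg-type interpolation on $\widetilde\varphi_{t}$ that turns bounds on $\|J(t)^{k}\varphi_{t}\|_{2}$ into pointwise decay.

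Concretely, I would first establish local well-posedness in $X=H^{2,0}\cap H^{0,2}$ by a Strichartz-based contraction on the Duhamel form of \eqref{eq:Hartree}; since $V$ lies in $L^{q_{1}}+L^{q_{2}}$ for admissible exponents (the Yukawa factor only helps when $\mu>0$), the nonlinearity is locally Lipschitz on $X$. Global existence and a uniform $H^{1}$ bound then follow from the conservation laws above. Next, applying $J(t)$ to the equation one gets
\[
(\mathrm{i}\partial_{t}+\Delta)\bigl(J(t)\varphi_{t}\bigr)=J(t)\bigl((V*|\varphi_{t}|^{2})\varphi_{t}\bigr),
\]
and the right-hand side splits into pieces of the form $(V*|\varphi_{t}|^{2})\,J(t)\varphi_{t}$ and $((xV)*|\varphi_{t}|^{2})\,\varphi_{t}$; estimating these via Hardy--Sobolev inequalities and the uniform $H^{1}$ bound yields at most linear growth $\|J(t)\varphi_{t}\|_{2}\leq C(1+|t|)$, and a parallel argument iterated on the second derivative controls $\|J(t)^{2}\varphi_{t}\|_{2}$. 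The bound $\|\varphi_{t}\|_{\infty}\leq C_{\lambda}(1+|t|)^{-1/2}$ is then obtained by an $H^{s}$ interpolation with $s>3/2$ applied to $\widetilde\varphi_{t}$, reading off the $t$-powers from the correspondence between $\|\nabla^{k}\widetilde\varphi_{t}\|_{2}$ and $(2t)^{-k}\|J(t)^{k}\varphi_{t}\|_{2}$.

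The main obstacle is the weighted estimate for $\|J(t)^{k}\varphi_{t}\|_{2}$ with $k=1,2$: the term $(xV)*|\varphi_{t}|^{2}$ forces $xV$ to be sufficiently integrable, which is precisely why the range $\gamma<3/2$ is imposed in the pure $|x|^{-\gamma}$ case~(i); for $\mu>0$ in case~(ii) the exponential tail of the Yukawa factor absorbs the extra $x$-weight without restricting $\gamma$ beyond $\gamma<3/2$. A secondary difficulty is closing the bootstrap for $J^{2}$ without picking up growth faster than $(1+|t|)^{2}$; in the defocusing case this is where the positivity of the potential energy (or, if needed, a Morawetz-type virial identity) must be invoked, so that the two weighted growths combine to produce exactly the rate $(1+|t|)^{-1/2}$ rather than a worse one.
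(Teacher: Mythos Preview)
Your proposal follows the Hayashi--Ozawa weighted-$L^{2}$ scheme, which is exactly what the paper relies on for this proposition. The paper itself gives no self-contained argument: it simply quotes \cite{HayashiOzawa87} for the $(1+|t|)^{-1/2}$ decay in the Coulomb case (Lemma~\ref{prop:Hayashi98}) and then, for the Yukawa potential, records in Lemma~\ref{prop:Yukawa_HayashiOzawa87} that the pointwise bound $e^{-\mu|x|}|x|^{-\gamma}\le C(\mu,\gamma)\,|x|^{-1}$ lets the same proof go through. So at the level of strategy your approach and the paper's coincide; you are unpacking what the paper only cites.

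Two technical comments on your sketch. First, the commutator of $J(t)$ with the Hartree nonlinearity does \emph{not} produce an $(xV)*|\varphi_{t}|^{2}$ term. Because $F(\varphi)=(V*|\varphi|^{2})\varphi$ is gauge invariant, one has
\[
J(t)F(\varphi)=(V*|\varphi|^{2})\,J(t)\varphi+\bigl(V*2\mathrm{i}\,\mathrm{Im}(\overline{\varphi}\,J(t)\varphi)\bigr)\varphi,
\]
so only $V$ itself (not $xV$) has to be controlled, via the same Hardy-type estimates that give the $H^{1}$ theory. The restriction $\gamma<3/2$ enters through $V\in L^{2}_{\mathrm{loc}}$ rather than through integrability of $xV$. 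Second, the growth rates you quote are not sharp enough to close: linear growth of $\|J(t)\varphi_{t}\|_{2}$ and quadratic growth of $\|J(t)^{2}\varphi_{t}\|_{2}$ would, after the Gagliardo--Nirenberg step, only give boundedness of $\|\varphi_{t}\|_{\infty}$, not $(1+|t|)^{-1/2}$. The crucial input from \cite{HayashiOzawa87} is the pseudo-conformal identity in the defocusing regime, which yields $\|J(t)\varphi_{t}\|_{2}^{2}\le C(1+|t|)$; you allude to this in your last paragraph (``positivity of the potential energy \ldots\ must be invoked''), but it should be the main engine of the argument rather than a contingency.
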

	
	To prove this theorem, we are going to use small data scattering theory
	for Hartree dynamics; Hayashi and Namukin found that:
	\begin{lem}[Hayashi and Namukin 98']
		\label{prop:HayashiNaumkin98}We assume that $\varphi\in H^{S,0}(\mathbb{R}^{n})\cap H^{0,S}(\mathbb{R}^{n})$
		and $\|\varphi\|_{S,0}+\|\varphi\|_{0,S}=\epsilon'<\epsilon$, where
		$\epsilon$ is sufficiently small and $n/2<S<p=1+2/n$. Then there
		exists a unique global solution $\varphi_{t}$ to the Hartree equation
		\eqref{eq:Hartree}, with 
		\[
		V(x)=\lambda|x|^{-1}+\mu|x|^{-\delta}
		\]
		for $1<\delta<n$, such that 
		\[
		\varphi_{t}\in C(\mathbb{R},H^{S,0}\cap H^{0,S})
		\]
		and
		\[
		\|\varphi_{t}\|_{\infty}\leq C\epsilon'(1+|t|)^{-3/2}.
		\]
	\end{lem}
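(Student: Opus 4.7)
The plan is to prove the lemma via a bootstrap (contraction) argument built on the Galilean vector field $J(t)=x+2\mathrm{i}t\nabla$. Since $[J(t),\,\mathrm{i}\partial_t+\Delta]=0$, the operator $J(t)$ plays, for the Schr\"odinger equation, the role that Klainerman vector fields play for the wave equation: it is the natural device for trading regularity and spatial moments of the initial data for pointwise time decay of the solution.

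First I would establish the linear decay inequality
\[
\|u(t)\|_{\infty}\leq C\,|t|^{-n/2}\bigl(\|u(t)\|_{H^{S}}+\|J(t)^{S}u(t)\|_{2}\bigr),\qquad S>n/2,
\]
which follows from the classical factorization $e^{\mathrm{i}t\Delta}=M(t)D(t)\mathcal{F}M(t)$ (quadratic phase, dilation, Fourier transform) together with Sobolev embedding; the key point is that this factorization conjugates $J(t)$ to multiplication by $x$, so that $\|J(t)^{S}u\|_{2}$ coincides with the $H^{0,S}$ norm of the conjugated profile. Specialising to $t=0$ where $J(0)=x$ exploits the hypothesis $\varphi\in H^{S,0}\cap H^{0,S}$.

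Next I would close a bootstrap on the quantity
\[
\mathcal{M}(t)=\sup_{0\leq s\leq t}\Bigl((1+s)^{n/2}\|\varphi_{s}\|_{\infty}+\|\varphi_{s}\|_{H^{S}}+\|J(s)^{S}\varphi_{s}\|_{2}\Bigr).
\]
Energy estimates for $\|\varphi_{s}\|_{H^{S}}$ and $\|J(s)^{S}\varphi_{s}\|_{2}$ along the Hartree flow reduce, thanks to the commutation with the free propagator, to $L^{2}$ bounds on the commutators $[\nabla^{S},\,V\ast|\varphi|^{2}]\varphi$ and $[J(s)^{S},\,V\ast|\varphi|^{2}]\varphi$. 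For the Riesz-type kernels $V=\lambda|x|^{-1}+\mu|x|^{-\delta}$ with $1<\delta<n$, these are handled by Hardy--Littlewood--Sobolev and fractional Leibniz inequalities, rewriting $J(V\ast|\varphi|^{2})$ by moving $x$ onto $|\varphi|^{2}$ through the homogeneity of $V$. Each resulting term carries two pointwise factors of $\varphi$ bounded in $L^{\infty}$, each contributing a decay $(1+s)^{-n/2}$ from the bootstrap, yielding a time integrand of order $(1+s)^{-n}$ which is integrable for $n=3$.

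The main obstacle is precisely this nonlinear commutator estimate: $J(s)$ does not commute with multiplication by $V\ast|\varphi|^{2}$, and the Coulomb singularity $|x|^{-1}$ obstructs a naive Leibniz expansion, so one must carefully combine Hardy-type inequalities with the homogeneity identity $x\cdot\nabla V=-V$ (for the $|x|^{-1}$ piece) to redistribute derivatives and weights between $V$ and $|\varphi|^{2}$. Once these commutators are dominated by $C\,\mathcal{M}(t)^{3}(1+s)^{-n}$, integrating in time gives $\mathcal{M}(t)\leq C\bigl(\epsilon'+\mathcal{M}(t)^{3}\bigr)$, and for $\epsilon'<\epsilon$ sufficiently small a standard continuity argument closes the bootstrap for all $t\in\mathbb{R}$, producing the unique global solution in $C(\mathbb{R};H^{S,0}\cap H^{0,S})$ with the asserted decay $\|\varphi_{t}\|_{\infty}\leq C\epsilon'(1+|t|)^{-n/2}$.
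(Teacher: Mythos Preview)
The paper does not supply a proof of this lemma; it simply cites Theorem~1.1 of \cite{HayashiNaumkin98}. Your proposal sketches the actual machinery behind that theorem---the Galilei operator $J(t)=x+2\mathrm{i}t\nabla$, the $M D\mathcal{F} M$ factorisation of $e^{\mathrm{i}t\Delta}$, and a small-data bootstrap on $\|\varphi_t\|_{H^S}+\|J(t)^S\varphi_t\|_2$---so in outline you are aligned with the cited source.

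There is, however, a genuine gap in how you close the bootstrap. Your claim that the time integrand in the energy estimates decays like $(1+s)^{-n}$ is correct for the short-range piece $\mu|x|^{-\delta}$ with $\delta>1$, but it fails for the Coulomb piece $\lambda|x|^{-1}$. After the cancellation of the diagonal term $(V\ast|\varphi|^2)J^S\varphi$ (which drops out of $\mathrm{Im}\langle J^S\varphi,\cdot\rangle$ because $V\ast|\varphi|^2$ is real), the surviving commutator terms carry at least one factor $J^a\varphi$ with $a\geq1$ inside the convolution, for which you have no $L^\infty$ decay---only $L^2$ control. Estimating e.g.\ $\bigl(|x|^{-1}\ast\mathrm{Re}(\overline{\varphi}\,J\varphi)\bigr)J^{S-1}\varphi$ in $L^2$ via Hardy--Littlewood--Sobolev and H\"older gives $\|\varphi_s\|_{6}\|J\varphi_s\|_{2}\|J^{S-1}\varphi_s\|_{2}$, and under the bootstrap $\|\varphi_s\|_{6}\lesssim\|\varphi_s\|_2^{1/3}\|\varphi_s\|_\infty^{2/3}\lesssim(1+s)^{-1}$. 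The time integral therefore produces a factor $\log(1+t)$, and the inequality $\mathcal{M}(t)\leq C\bigl(\epsilon'+\mathcal{M}(t)^3\bigr)$ does not follow. The identity $x\cdot\nabla V=-V$ that you invoke concerns the local singularity of $|x|^{-1}$, not this long-range obstruction to time integrability.

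Hayashi and Naumkin remove the logarithmic loss not by a sharper commutator bound but by a \emph{phase correction}: they run the bootstrap on a modified profile, factoring out an explicit logarithmic-in-$t$ phase built from $|x|^{-1}\ast|\widehat{w}|^{2}$ in the Fourier picture, which absorbs the non-integrable $(1+s)^{-1}$ contribution and leaves an integrable remainder. Without this modification (or an equivalent device) your continuity argument cannot close whenever $\lambda\neq0$, so this ingredient must be added to your outline.
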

	
	\begin{proof}
		See Theorem 1.1 of \cite{HayashiNaumkin98}.
	\end{proof}
	
	\begin{lem}[Hayashi and Naumkin 01']
		\label{prop:HayashiNaumkin01}We assume that $\varphi\in H^{5,0}(\mathbb{R}^{n})\cap H^{0,5}(\mathbb{R}^{n})$
		and $\|\varphi\|_{5,0}+\|\varphi\|_{0,5}=\epsilon'<\epsilon$, where
		$\epsilon$ is sufficiently small . Then there exists a unique global
		solution $\varphi_{t}$ the Hartree equation \eqref{eq:Hartree},(with
		\[
		V(x)=\lambda|x|^{-\delta}
		\]
		for $0<\delta<1$, such that 
		\[
		\|\varphi_{t}\|_{\infty}\leq C\epsilon'(1+|t|)^{-3/2}.
		\]
	\end{lem}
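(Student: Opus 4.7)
Because this lemma is a restatement of Hayashi and Naumkin's 2001 small-data scattering theorem, my plan is to replay the vector-field-plus-factorization scheme used for Lemma \ref{prop:HayashiNaumkin98}, modifying only those steps that are sensitive to the long-range nature of $V(x)=\lambda|x|^{-\delta}$ with $\delta<1$. The scaffolding is a fixed-point argument for the Duhamel operator
\[
\Phi(\varphi)_t=\mathrm{e}^{\mathrm{i}t\Delta}\varphi-\mathrm{i}\int_0^t \mathrm{e}^{\mathrm{i}(t-s)\Delta}(V*|\varphi_s|^2)\varphi_s\,\mathrm{d}s
\]
inside a ball of radius $M\epsilon'$ in the Banach space
\[
X=\Bigl\{\varphi_\cdot:\sup_t\bigl(\|\varphi_t\|_{H^{5,0}}+\|J(t)^5\varphi_t\|_2\bigr)+\sup_t(1+|t|)^{3/2}\|\varphi_t\|_\infty<\infty\Bigr\},
\]
where $J(t)=x+2\mathrm{i}t\nabla$ is the Galilean vector field.

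The two structural ingredients that make the scheme work are (i) the commutation $[\mathrm{i}\partial_t+\Delta,J(t)]=0$, which lets me move $J(t)^k$ across the Duhamel integral and control it by $L^2$ energy / Strichartz methods, and (ii) the MDFM factorization $\mathrm{e}^{\mathrm{i}t\Delta}=M(t)D(t)\mathcal{F}M(t)$, with $M(t)\phi(x)=\mathrm{e}^{\mathrm{i}|x|^2/(4t)}\phi(x)$ and $D(t)\phi(x)=(2\mathrm{i}t)^{-3/2}\phi(x/(2t))$, which gives for $|t|\geq 1$ the dispersive-type inequality
\[
\|\varphi_t\|_\infty\leq C|t|^{-3/2}\bigl(\|\varphi_t\|_{H^{5,0}}+\|J(t)^5\varphi_t\|_2\bigr)
\]
after Sobolev embedding in three dimensions. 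The nontrivial step is to commute $J(t)^k$ through the cubic nonlinearity using the Leibniz rules $J(t)(fg)=(J(t)f)g+f\cdot 2\mathrm{i}t\nabla g$ and a dual identity for $[J(t),V*\,\cdot\,]$, and then to estimate the resulting cubic expressions of the form $\|(V*(\psi_1\overline{\psi_2}))\psi_3\|_2$ with $\psi_i\in\{\varphi_s,J(s)^j\varphi_s:j\leq 5\}$.

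The main obstacle, and what forces the stronger regularity $H^{5,0}\cap H^{0,5}$ in place of the $H^{S,0}\cap H^{0,S}$ ($S>3/2$) of Lemma \ref{prop:HayashiNaumkin98}, is that $|x|^{-\delta}$ with $\delta<1$ is not integrable at infinity, so Hardy-Littlewood-Sobolev by itself does not bound $V*(\psi_1\overline{\psi_2})$. I would split $V=V\mathbf{1}_{|x|\leq 1}+V\mathbf{1}_{|x|>1}$: the singular core is handled by HLS as in \cite{HayashiNaumkin98}, while the bounded tail is paired with the weighted $L^1$ norm $\int(1+|x|)^{2\delta}|\varphi_s|^2\,\mathrm{d}x\lesssim \|\varphi_s\|_{0,5}^2$, which is itself controlled by $\|J(s)^5\varphi_s\|_2$ through the MDFM identity. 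Inserting the bootstrap hypothesis $\|\varphi_s\|_\infty\leq M\epsilon'(1+|s|)^{-3/2}$ then bounds the Duhamel remainder by $(M\epsilon')^3\int_0^t(1+s)^{-3+\delta}\,\mathrm{d}s$, which is uniformly bounded in $t$ thanks to $\delta<1$; choosing $\epsilon'$ a small enough multiple of $1/M^2$ closes the fixed point, delivers global existence in $X$, and in particular the stated $(1+|t|)^{-3/2}$ decay in $L^\infty$. A complete execution is given in \cite{Hayashi01}.
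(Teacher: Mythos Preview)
The paper's own proof is a one-line citation: ``See Theorem 1.1 of \cite{Hayashi01}. If we put $n=3$ and $p=\infty$ for our discussion, we get the result.'' Your proposal ends at the same place but supplies a sketch of the actual Hayashi--Naumkin machinery (the Galilean vector field $J(t)$, the MDFM factorization of $e^{\mathrm{i}t\Delta}$, and a contraction in a weighted space), and you correctly isolate the reason the long-range case $\delta<1$ forces the stronger regularity $H^{5,0}\cap H^{0,5}$: the tail of $|x|^{-\delta}$ is not handled by Hardy--Littlewood--Sobolev alone and must be paired with weighted control coming from $J(t)^5$.

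One caveat worth flagging: the straight Duhamel contraction you describe is slightly optimistic for $\delta<1$. In the genuinely long-range regime the nonlinear phase does not integrate, and the argument in \cite{Hayashi01} does not close as a naive fixed point on the Duhamel map $\Phi$; it requires a normal-form/phase-modification step before the cubic remainder becomes time-integrable. Your integrand $(1+s)^{-3+\delta}$ is the outcome after that reduction, not before. Since you ultimately defer to \cite{Hayashi01} for the complete execution, this does not invalidate the proposal, but the sketch as written hides the one nontrivial idea that distinguishes the $\delta<1$ case from Lemma~\ref{prop:HayashiNaumkin98}.
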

	
	\begin{proof}
		See Theorem 1.1 of \cite{Hayashi01}. If we put $n=3$ and $p=\infty$ for our discussion, we get the result.
	\end{proof}
	
	\begin{lem}[Hayashi and Ozawa 87']
		\label{prop:Hayashi98}We assume that $\varphi\in H^{2,0}(\mathbb{R}^{n})\cap H^{0,2}(\mathbb{R}^{n})$.
		Then there exists a unique global solution $\varphi_{t}$ of the Hartree
		type equation \eqref{eq:Hartree}, with 
		\[
		V(x)=|x|^{-1}
		\]
		Then,
		\[
		\|\varphi_{t}\|_{\infty}\leq C(1+|t|)^{-1/2}.
		\]
	\end{lem}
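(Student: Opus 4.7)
The lemma is stated as a result from Hayashi and Ozawa's 1987 paper, and given the pattern set by Lemmas \ref{prop:HayashiNaumkin98} and \ref{prop:HayashiNaumkin01} the most economical proof is simply to quote their theorem. Nevertheless, if I were to reconstruct the argument, my plan would be to combine a standard global well-posedness result with the vector-field method that is tailored to the Schr\"odinger group. First I would establish global existence in \(H^{2}\) by Strichartz-based local well-posedness (since \(V=|x|^{-1}\in L^{2}+L^{\infty}\) in three dimensions) together with mass and energy conservation; the positivity of the Coulomb self-energy gives a uniform \(H^{1}\) bound that propagates to \(H^{2}\) by persistence of regularity, so no finite-time blow-up can occur.

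The main step is the weighted estimate, which I would carry out using the Galilei-invariant operator \(J(t)=x+2\mathrm{i}t\nabla=U(t)\,x\,U(-t)\), where \(U(t)=\mathrm{e}^{\mathrm{i}t\Delta}\). Since \(J\) commutes with \(\mathrm{i}\partial_{t}+\Delta\), applying it to the Hartree equation gives
\[
(\mathrm{i}\partial_{t}+\Delta)(J\varphi_{t})=J\bigl((V*|\varphi_{t}|^{2})\varphi_{t}\bigr),
\]
and a similar identity for \(J^{2}\varphi_{t}\). Using the Leibniz-type rule for \(J\) acting on products and the conservation of mass, one derives a closed Gr\"onwall inequality of the form \(\tfrac{\mathrm{d}}{\mathrm{d}t}\|J^{k}\varphi_{t}\|_{2}\leq C\,\|V*|\varphi_{t}|^{2}\|_{\infty}\,\|J^{k}\varphi_{t}\|_{2}+\text{l.o.t.}\) for \(k=1,2\); the Coulomb convolution is controlled by the Hardy-Littlewood-Sobolev inequality plus the \(H^{1}\) bound from energy conservation, yielding at most polynomial growth of \(\|J(t)^{k}\varphi_{t}\|_{2}\).

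To convert this weighted control into a pointwise decay rate, I would use the factorization \(U(t)=M(t)D(t)\mathcal{F}M(t)\) with \(M(t)f(x)=\mathrm{e}^{\mathrm{i}|x|^{2}/4t}f(x)\), which together with Sobolev embedding in three dimensions yields
\[
\|\varphi_{t}\|_{\infty}\leq C|t|^{-3/2}\|U(-t)\varphi_{t}\|_{H^{0,s}}\qquad\text{for any }s>3/2.
\]
Because the assumed data only provides \(H^{0,2}\) regularity and the \(J\)-growth is polynomial, a Gagliardo-Nirenberg interpolation between the controlled \(H^{0,2}\) norm (via \(J^{2}\)) and the conserved \(L^{2}\) norm costs one full power of \(t\) compared with the free dispersive estimate, producing exactly the stated \((1+|t|)^{-1/2}\) rate.

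The hard part, and the reason the conclusion is strictly weaker than those of Lemmas \ref{prop:HayashiNaumkin98} and \ref{prop:HayashiNaumkin01}, is the long-range nature of the Coulomb potential: the mean-field convolution \(V*|\varphi_{t}|^{2}\) decays only like \(|x|^{-1}\) at infinity, so the remainder terms arising when \(J\) is commuted through \((V*|\varphi_{t}|^{2})\varphi_{t}\) are not integrable in time. This obstruction prevents one from closing the argument at the free dispersive rate \((1+|t|)^{-3/2}\) and is the fundamental reason one must settle for the weaker \((1+|t|)^{-1/2}\) decay recorded in the statement.
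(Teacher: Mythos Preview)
Your proposal is correct and matches the paper exactly: the paper's entire proof is the single line ``See Theorem 1.1 \cite{HayashiOzawa87},'' which is precisely what you identify as the most economical route. Your additional reconstruction via the Galilei operator $J(t)=x+2\mathrm{i}t\nabla$ and the $MD\mathcal{F}M$ factorization is a faithful outline of the Hayashi--Ozawa argument and goes well beyond what the paper itself provides.
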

	
	\begin{proof}
		See Theorem 1.1 \cite{HayashiOzawa87}.
	\end{proof}
	
	Notice that Lemma \ref{prop:HayashiNaumkin98} and \ref{prop:HayashiNaumkin01} were proven under the condition of small initial data.
	We will interpret
	(or convert) this result into the case of generic initial data with
	weak interaction. The strategy is the following:
	
	We substitute $\varphi$ with $\widetilde{\varphi}/(\epsilon'/M)$
	for suitable constant $M>0$. Then $\widetilde{\varphi}$ solves the partial differential equation
	\[
	\mathrm{i}\partial_{t}\widetilde{\varphi}_{t}=-\Delta\widetilde{\varphi}_{t}+((\epsilon'/M)^{2}V*|\widetilde{\varphi}_{t}|^{2})\widetilde{\varphi}_{t}
	\]
	with initial data $\|\widetilde{\varphi}\|_{\gamma,0}+\|\widetilde{\varphi}\|_{0,\gamma}=M$.
	Now, letting $\widetilde{\lambda}:=\lambda\epsilon'^{2}/M^{2}$,
	\[
	(\epsilon'/(M))^{2}V=\frac{\lambda\epsilon'^{2}}{M^{2}}\frac{e^{-\mu|x|}}{|x|^{\gamma}}=\widetilde{\lambda}\frac{e^{-\mu|x|}}{|x|^{\gamma}}.
	\]
	Note that $\epsilon'=\epsilon'(\lambda)$ was small enough and $M>0$
	was arbitrarily chosen. Hence, we have new Hatree equation
	\[
	\mathrm{i}\partial_{t}\widetilde{\varphi}_{t}=-\Delta\widetilde{\varphi}_{t}+(\widetilde{\lambda}\frac{e^{-\mu|x|}}{|x|^{\gamma}}*|\widetilde{\varphi}_{t}|^{2})\widetilde{\varphi}_{t}
	\]
	with small interaction constant $\widetilde{\lambda}$ such that $|\widetilde{\lambda}|\leq\lambda_c=\lambda_c(\mu,\gamma,M)$. Therefore,
	using this `interpretation', we have Proposition \ref{prop:timedecay}
	and Proposition \ref{prop:timedecay_defocus}.

	\subsection{Time decay estimates of the Hartree equation for Yukawa type interaction}
	
	In this section, we provide decay estimates for Yukawa type interaction potential.
	Since the proofs will closely follow \cite{HayashiNaumkin98} and \cite{HayashiOzawa87}, we only provide
	the sketch of proofs. For time decay estimates, heuristically, the main difficulty stems from  
	attractive, long-range interaction potential; if the range of the interaction is short enough, then `far sides' of wave
	function would not interact with each other. Hence, if there is a time decay estimate
	for Coulomb interaction, one can also expect that there is a similar bound for Yukawa type interaction. 
	Even though the explanation here is rather heuristic, this can be made rigorous as in the following lemmas, whose proofs are based on fixed point arguments.
	
	\begin{lem}
		\label{prop:Yukawa}We assume that $\varphi\in H^{S,0}(\mathbb{R}^{n})\cap H^{0,S}(\mathbb{R}^{n})$
		and $\|\varphi\|_{S,0}+\|\varphi\|_{0,S}=\epsilon'<\epsilon$,where
		$\epsilon$ is sufficiently small and $3/2<S<5/3$. Then there exists
		a unique global solution $\varphi_{t}$ of the Hartree equation
		\eqref{eq:Hartree}, with 
		\[
		V(x)=\frac{\lambda e^{-\mu|x|}}{|x|^{\gamma}}
		\]
		for $\mu>0$ and $0<\gamma<3/2$, such that 
		\[
		\varphi_{t}\in C(\mathbb{R},H^{S,0}\cap H^{0,S})
		\]
		and
		\[
		\|\varphi_{t}\|_{\infty}\leq C\epsilon'(1+|t|)^{-3/2}.
		\]
	\end{lem}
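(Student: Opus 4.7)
The plan is to follow the contraction-mapping scheme of Hayashi--Naumkin used to establish Lemma \ref{prop:HayashiNaumkin98}, with adjustments that account for the exponential factor in the Yukawa potential. Since $e^{-\mu|x|}/|x|^{\gamma}$ differs from $|x|^{-\gamma}$ only through a bounded factor which actually improves decay at infinity, the analysis ought to go through essentially unchanged; the only genuine work is to handle the singularity at the origin together with the weighted norms indexed by $S\in(3/2,5/3)$.

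First I would write the Duhamel formula
\[
\varphi_{t}=e^{\mathrm{i}t\Delta}\varphi-\mathrm{i}\int_{0}^{t}e^{\mathrm{i}(t-s)\Delta}\bigl((V*|\varphi_{s}|^{2})\varphi_{s}\bigr)\,\mathrm{d}s,
\]
and seek a solution in the complete metric space
\[
X=\Bigl\{\varphi\in C(\mathbb{R};H^{S,0}\cap H^{0,S}):\,\|\varphi\|_{X}:=\sup_{t}\|\varphi_{t}\|_{H^{S,0}}+\sup_{t}\|J(t)^{S}\varphi_{t}\|_{2}+\sup_{t}(1+|t|)^{3/2}\|\varphi_{t}\|_{\infty}\leq C\epsilon'\Bigr\},
\]
where $J(t)=x+2\mathrm{i}t\nabla$ is the Galilean operator. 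The announced $L^{\infty}$ decay is extracted from the standard factorization of $e^{\mathrm{i}t\Delta}$ through multiplications by a quadratic phase (so that $J(t)$ plays the role of $\nabla$ up to a unitary conjugation), combined with a Gagliardo--Nirenberg estimate on $\|J(t)^{S}\varphi_{t}\|_{2}$ and $\|\varphi_{t}\|_{H^{S,0}}$, exactly as in \cite{HayashiNaumkin98}.

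Next I would split the potential as $V=V_{\mathrm{sg}}+V_{\mathrm{rg}}$, where $V_{\mathrm{sg}}(x)=V(x)\mathbf{1}_{|x|\leq1}$ carries the local singularity and $V_{\mathrm{rg}}(x)=V(x)\mathbf{1}_{|x|>1}$ is smooth with exponential decay. For $0<\gamma<3/2$ we have $V_{\mathrm{sg}}\in L^{p}$ for every $1\leq p<3/\gamma$; in particular the Hardy--Littlewood--Sobolev and Young inequalities give the same bounds on $V_{\mathrm{sg}}*|\varphi|^{2}$ that Hayashi--Naumkin use for the pure inverse power in \cite{HayashiNaumkin98}. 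The regular piece $V_{\mathrm{rg}}$ lies in $L^{1}\cap L^{\infty}$ together with all of its derivatives, and its contribution to the nonlinearity is controlled by a direct application of Young's inequality. One then verifies that the Duhamel map sends $X$ into itself and is a contraction for $\epsilon'$ sufficiently small, producing the global solution as its unique fixed point.

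The main obstacle will be the commutator $[J(t)^{S},V*|\varphi|^{2}]\varphi$, which involves $\nabla V$ and hence the more singular factor $e^{-\mu|x|}\bigl(|x|^{-\gamma-1}+\mu|x|^{-\gamma}\bigr)$. Keeping this term inside $X$ forces $S<5/3$, exactly the threshold appearing in Lemma \ref{prop:HayashiNaumkin98}, which explains the range $3/2<S<5/3$ in the hypothesis; the lower bound $S>3/2$ is needed for the Sobolev embedding into $L^{\infty}$ that closes the $H^{S,0}$ estimate. Once this commutator is bounded via the decomposition of $V$ and a careful interpolation between $\|\varphi_{t}\|_{\infty}$, $\|J(t)^{S}\varphi_{t}\|_{2}$ and $\|\varphi_{t}\|_{H^{S,0}}$, the closure of the fixed-point argument and the bound $\|\varphi_{t}\|_{\infty}\leq C\epsilon'(1+|t|)^{-3/2}$ follow as stated.
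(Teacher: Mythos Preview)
Your proposal is correct and follows the same route as the paper: both reduce the Yukawa case to Hayashi--Naumkin's small-data fixed-point argument for the pure power $|x|^{-\gamma}$, observing that the extra factor $e^{-\mu|x|}$ only helps. The paper's ``proof'' is in fact much terser than yours---it simply notes that the two convolution estimates $\||x|^{-\gamma}*|u|\|_{L^{p}}<\infty$ and $\|(-t^{2}\Delta)^{s/2}|x|^{-\gamma}*|u|\|_{L^{p}}<\infty$ on which \cite{HayashiNaumkin98} hinges continue to hold with $|x|^{-\gamma}$ replaced by $e^{-\mu|x|}|x|^{-\gamma}$---whereas you spell out the Duhamel map, the norm $\|\cdot\|_{X}$, and the role of the Galilean operator $J(t)$ that underlie those estimates.
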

	
	\begin{proof}[Idea of proof]
		Because the proof in \cite{HayashiNaumkin98} relies on the fact
		that
		\[
		\||x|^{-\gamma}*|u|\|_{L^{p}(\mathbb{R}^{n})}<\infty
		\]
		and
		\[
		\|(-t^{2}\Delta)^{s/2}|x|^{-\gamma}*|u|\|_{L^{p}(\mathbb{R}^{n})}<\infty
		\]
		for some $t>0$, $0<s<1$, $1\leq\gamma<3/2$, and $n\in\mathbb{Z}$,
		we have
		\[
		\|\exp(-\mu|x|)|x|^{-\gamma}*|u|\|_{L^{p}(\mathbb{R}^{n})}<\infty
		\]
		and
		\begin{align*}
		&\|(-t^{2}\Delta)^{s/2}\exp(-\mu|x|)|x|^{-\gamma}*|u|\|_{L^{p}(\mathbb{R}^{n})}\\ &\quad\leq\|(-t^{2}\Delta)^{s/2}\exp(-\mu|x|)|x|^{-\gamma}*|u|\|_{L^{p}(\mathbb{R}^{n})}<\infty.
		\end{align*}
	\end{proof}
	
	\begin{lem}
		\label{prop:Yukawa_HayashiOzawa87}We assume that $\varphi\in H^{2,0}(\mathbb{R}^{n})\cap H^{0,2}(\mathbb{R}^{n})$.
		Then there exists a unique global solution $\varphi_{t}$ of the Hartree
		type equation \eqref{eq:Hartree}, with 
		\[
		V(x)= \frac{\lambda e^{-\mu|x|}}{|x|^\gamma}
		\]
		Then,
		\[
		\|\varphi_{t}\|_{\infty}\leq C(1+|t|)^{-1/2}.
		\]
	\end{lem}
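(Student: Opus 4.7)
The plan is to repeat verbatim the strategy of the proof sketch of Lemma \ref{prop:Yukawa}, but starting from the Coulomb result of Hayashi--Ozawa (Lemma \ref{prop:Hayashi98}) rather than from the Hayashi--Naumkin small data result. The decay rate $(1+|t|)^{-1/2}$ in the conclusion is exactly what the pseudo-conformal identity produces for a generic initial datum in $H^{2,0}\cap H^{0,2}$ without any smallness assumption, and it is weaker by a factor of $(1+|t|)$ than the free-evolution decay because we cannot afford to linearize around $e^{\mathrm{i}t\Delta}\varphi$.

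Concretely, I would cast \eqref{eq:Hartree} as a Duhamel integral equation and run a fixed point argument in the complete metric space used in \cite{HayashiOzawa87}, which is built from $\|\cdot\|_{H^{2,0}}$, $\|\cdot\|_{H^{0,2}}$, and the weighted $L^\infty$ norm $\sup_t(1+|t|)^{1/2}\|\cdot\|_\infty$; the pseudo-conformal generator $J(t)=x+2\mathrm{i}t\nabla$ converts the $H^{0,2}$ weight at time zero into a factor of $(1+|t|)$ at later times. The two analytic ingredients driving the Coulomb argument are the free dispersive bound $\|e^{\mathrm{i}t\Delta}f\|_\infty\leq C|t|^{-3/2}\|f\|_1$ and a family of convolution estimates of the form
\begin{equation*}
\Bigl\|(-t^2\Delta)^{s/2}|x|^{-\gamma}*|u|\Bigr\|_{L^p(\mathbb{R}^3)}\leq C\|u\|_Y,
\end{equation*}
together with their commutators with $J(t)$. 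The first is independent of $V$; the second transfers to the Yukawa setting because the pointwise domination $0\leq e^{-\mu|x|}\leq 1$ lets us replace $|x|^{-\gamma}$ by $e^{-\mu|x|}|x|^{-\gamma}$ without loss, exactly as in the proof of Lemma \ref{prop:Yukawa}.

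The main obstacle is that \cite{HayashiOzawa87} is written for $\gamma=1$, whereas the statement allows the full range $0<\gamma<3/2$ needed by Proposition \ref{prop:timedecay_defocus}. For $0<\gamma\leq 1$ this is not serious: the singularity at the origin is at most Coulombic and the exponential factor controls the tail, so the Hayashi--Ozawa estimates apply a fortiori. For $1<\gamma<3/2$ the origin is more singular, but the sharp Hardy inequality $\||x|^{-\gamma}u\|_2\leq C_\gamma\|(-\Delta)^{\gamma/2}u\|_2$ is valid precisely up to $\gamma<3/2$, which is enough to bound $\|(V*|\varphi|^2)\varphi\|_{H^2}$ inside the contraction. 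Once the fixed point yields a unique global solution $\varphi_t\in C(\mathbb{R};H^{2,0}\cap H^{0,2})$, the pseudo-conformal identity combined with three-dimensional Sobolev embedding delivers the claimed bound $\|\varphi_t\|_\infty\leq C(1+|t|)^{-1/2}$.
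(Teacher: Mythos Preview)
Your approach is essentially the paper's: reduce to the Coulomb result of Hayashi--Ozawa (Lemma~\ref{prop:Hayashi98}) and carry over their pseudo-conformal argument. The paper's own proof is a single sentence: it notes the pointwise bound $e^{-\mu|x|}|x|^{-\gamma}<C\,|x|^{-1}$ for some $C=C(\mu,\gamma)$ and then simply says ``follow \cite{HayashiOzawa87}.''

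You are in fact more careful than the paper here. The pointwise domination $e^{-\mu|x|}|x|^{-\gamma}\le C|x|^{-1}$ is only valid for $0<\gamma\le 1$ (near the origin $e^{-\mu|x|}|x|^{-\gamma}\sim |x|^{-\gamma}$, which beats $|x|^{-1}$ when $\gamma>1$), so the paper's one-line sketch does not literally cover the range $1<\gamma<3/2$ that Proposition~\ref{prop:timedecay_defocus} requires. You correctly flag this and propose the fractional Hardy inequality $\||x|^{-\gamma}u\|_2\le C_\gamma\|(-\Delta)^{\gamma/2}u\|_2$, valid precisely for $\gamma<3/2$, to close the estimates in that regime. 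That is a genuine addition over the paper's sketch.

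One minor correction: you describe \cite{HayashiOzawa87} as a fixed point argument and import the convolution estimates $\|(-t^2\Delta)^{s/2}|x|^{-\gamma}*|u|\|_{L^p}$ from the proof of Lemma~\ref{prop:Yukawa}. Those fractional-derivative convolution bounds are from the Hayashi--Naumkin small-data papers \cite{HayashiNaumkin98,Hayashi01}, not from \cite{HayashiOzawa87}. The Hayashi--Ozawa argument is an a priori estimate: conservation of mass and energy plus the pseudo-conformal identity for $J(t)=x+2\mathrm{i}t\nabla$ give a uniform bound on $\|J(t)\varphi_t\|_2$, and then the factorization $|\varphi_t|=|e^{\mathrm{i}|x|^2/4t}\varphi_t|$ together with Gagliardo--Nirenberg yields $\|\varphi_t\|_\infty\le C|t|^{-1/2}$. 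No contraction is run at this stage. This does not affect your conclusion, but it means the ``fixed point in a weighted $L^\infty$ space'' framing is not quite the right mechanism to invoke.
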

	
	\begin{proof}[Idea of proof]
		Noting that $e^{-\mu|x|}|x|^{-\gamma}<C|x|^{-1}$ for some $C=C(\mu,\gamma)$. We follow the proof of \cite{HayashiOzawa87}.
	\end{proof}
	
	\subsection{On the time dependence of $\int_{0}^{t}\mathrm{d}s\,\|V(x-\cdot)\varphi_{s}\|_{2}$}
	
	We are going to prepare for Section \ref{sec:comparison}.
	Proposition \ref{prop:key_estimate} below is the key lemma to improve the time dependence of the
	Lemma presented in Section \ref{sec:comparison}. The proof of Proposition \ref{prop:key_estimate} is based on the following
	two lemmas.
	\begin{lem}[Boundedness of $H^{1}$-norm of $\varphi_{t}$]
		\label{lem:bdd_H1} For the solution $\varphi_{t}$ of the Hartree equation
		\eqref{eq:Hartree} for $V\in L^{2}+L^{\infty}$, $\varphi_{0}\in H^{1}(\mathbb{R}^{3})$
		for the Hartree equation then there exist constant $C$ depending only
		on $\varphi_{0}$ and $V$ such that
		\[
		\|\varphi_{t}\|_{H^{1}(\mathbb{R}^{3})}\leq C.
		\]
	\end{lem}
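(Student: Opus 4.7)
The plan is to combine the two conservation laws of the Hartree flow (mass and energy) with a subcritical Gagliardo--Nirenberg interpolation to control $\|\nabla\varphi_t\|_2$ uniformly in $t$; together with mass conservation this gives the $H^1$ bound.

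First, multiplying \eqref{eq:Hartree} by $\overline{\varphi_t}$, integrating, and taking the imaginary part yields mass conservation $\|\varphi_t\|_2 = \|\varphi_0\|_2$. Since $V$ is real and even, the energy functional
$$E[\varphi] := \|\nabla\varphi\|_2^2 + \tfrac{1}{2}\int V(x-y)\,|\varphi(x)|^2|\varphi(y)|^2\,dx\,dy$$
is conserved along the flow, so that
$$\|\nabla\varphi_t\|_2^2 = E[\varphi_0] - \tfrac{1}{2}\int V(x-y)\,|\varphi_t(x)|^2|\varphi_t(y)|^2\,dx\,dy.$$
These identities are rigorous on $H^1$ solutions, which exist globally for $V\in L^2+L^\infty$ by the standard local well-posedness theory (using Strichartz estimates, see \cite{Chen2018,cazenave2003semilinear}) combined with the a priori bound we now derive.

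Next I decompose $V = V_1 + V_2$ with $V_1 \in L^2$ and $V_2 \in L^\infty$. The $V_2$-piece of the potential energy is at most $\tfrac{1}{2}\|V_2\|_\infty \|\varphi_0\|_2^4$ by mass conservation. For the $V_1$-piece, Cauchy--Schwarz gives $\|V_1 * |\varphi_t|^2\|_\infty \leq \|V_1\|_2\,\|\varphi_t\|_4^2$, and the 3D Gagliardo--Nirenberg inequality $\|\varphi_t\|_4^2 \leq C\|\varphi_t\|_2^{1/2}\|\nabla\varphi_t\|_2^{3/2}$ then yields
$$\left|\int V_1(x-y)|\varphi_t(x)|^2|\varphi_t(y)|^2\,dx\,dy\right| \leq C\|V_1\|_2\,\|\varphi_0\|_2^{5/2}\,\|\nabla\varphi_t\|_2^{3/2}.$$

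Writing $x := \|\nabla\varphi_t\|_2$, one arrives at an inequality of the form $x^2 \leq A + B x^{3/2}$ with $A,B$ depending only on $\varphi_0$ and $V$. Because the exponent $3/2 < 2$ is subcritical, Young's inequality $Bx^{3/2} \leq \tfrac{1}{2}x^2 + CB^4$ absorbs the nonlinear term and yields $\|\nabla\varphi_t\|_2 \leq C$ uniformly in $t$; combined with mass conservation this is the claimed $H^1$ bound. The only non-routine technical point will be the justification of energy conservation (and of the finiteness of $E[\varphi_0]$ for general $V\in L^2 + L^\infty$ acting on $H^1$ data); the finiteness follows from the same Young--Gagliardo--Nirenberg estimate applied at $t=0$, while conservation is standard via a density argument approximating $\varphi_0$ by Schwartz data.
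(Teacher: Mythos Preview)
Your argument is correct and is precisely the standard conservation-law proof: mass and energy conservation combined with the subcritical Gagliardo--Nirenberg bound on the $L^2$-part of $V$, yielding $x^2\le A+Bx^{3/2}$ for $x=\|\nabla\varphi_t\|_2$. The paper does not give its own proof but simply cites Lemma~2.1 of \cite{Chen2018}, where exactly this energy argument is carried out, so your approach coincides with the intended one.
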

	
	\begin{proof}
		See Lemma 2.1 of \cite{Chen2018}.
	\end{proof}
	
	\begin{lem}[Strichartz estimate for $V\in L^2$]\label{lem:decaying}
		Suppose that $V\in L^2(\mathbb{R}^3)$. Let $\varphi_{t}$ be the solution of the Hartree equation \eqref{eq:Hartree} with initial data $\varphi_{0}=\varphi\in H^{1}(\mathbb{R}^3)$, then there exists a constant $C$, depending only on $\left\Vert \varphi\right\Vert _{H^{1}}$ and $\|V\|_{L^2}$, such that
		\[\|\varphi_t\|_{L^2((0,T),L^\infty)} \leq { C \sqrt{1+ {{T}} }}.\]
	\end{lem}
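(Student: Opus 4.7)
The plan is to combine a Sobolev embedding with the endpoint Strichartz estimate of Keel--Tao, using Lemma~\ref{lem:bdd_H1} to control the nonlinearity uniformly in time. Because the pair $(q,r)=(2,\infty)$ is forbidden in three dimensions, I cannot apply Strichartz directly in $L^2_tL^\infty_x$. Instead I exploit Morrey's embedding $W^{1,6}(\mathbb{R}^3)\hookrightarrow L^\infty(\mathbb{R}^3)$ to reduce matters to the admissible endpoint pair $(2,6)$:
\[
\|\varphi_t\|_{L^2((0,T),L^\infty_x)}
\;\lesssim\;
\|\varphi_t\|_{L^2((0,T),L^6_x)}
+\|\nabla\varphi_t\|_{L^2((0,T),L^6_x)}.
\]

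Write Duhamel's formula $\varphi_t=e^{\mathrm{i}t\Delta}\varphi-\mathrm{i}\int_0^t e^{\mathrm{i}(t-s)\Delta}F_s\,\mathrm{d}s$ with $F_s=(V*|\varphi_s|^2)\varphi_s$ and likewise for $\nabla\varphi_t$. Applying the endpoint Strichartz estimate with dual admissible pair $(2,6/5)$ gives
\[
\|\varphi_t\|_{L^2_tL^6_x}+\|\nabla\varphi_t\|_{L^2_tL^6_x}
\;\lesssim\;
\|\varphi\|_{H^1}
+\|F\|_{L^2_tL^{6/5}_x}
+\|\nabla F\|_{L^2_tL^{6/5}_x}.
\]
It remains to bound the two source terms by a constant times $\sqrt{T}$.

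For the first, I use H\"older and Young ($1+1/3=1/2+5/6$):
\[
\|F_s\|_{L^{6/5}_x}
\leq \|V*|\varphi_s|^2\|_{L^3_x}\|\varphi_s\|_{L^2_x}
\leq \|V\|_{L^2}\|\varphi_s\|_{L^{12/5}_x}^2\|\varphi_s\|_{L^2_x}.
\]
Mass conservation together with the Gagliardo--Nirenberg interpolation $\|\varphi_s\|_{L^{12/5}}\lesssim\|\varphi_s\|_{L^2}^{3/4}\|\varphi_s\|_{L^6}^{1/4}$ and the Sobolev embedding $H^1\hookrightarrow L^6$ yield $\|F_s\|_{L^{6/5}_x}\leq C$ uniformly in $s$ by Lemma~\ref{lem:bdd_H1}. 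For the gradient, expand $\nabla F_s=(V*\nabla|\varphi_s|^2)\varphi_s+(V*|\varphi_s|^2)\nabla\varphi_s$, use $|\nabla|\varphi_s|^2|\leq 2|\varphi_s||\nabla\varphi_s|$, and repeat the same Young--H\"older--Sobolev chain; the worst term is bounded by $C\|V\|_{L^2}\|\varphi_s\|_{H^1}^3$, again uniform in $s$ by Lemma~\ref{lem:bdd_H1}. Integrating these uniform bounds over $(0,T)$ yields $\|F\|_{L^2_tL^{6/5}_x}+\|\nabla F\|_{L^2_tL^{6/5}_x}\leq C\sqrt{T}$, and combining with the homogeneous contribution $\|\varphi\|_{H^1}$ produces the desired bound $C\sqrt{1+T}$.

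The only delicate point is verifying that the Sobolev--Young--H\"older chain closes at the endpoint $(2,6/5)$ without losing a factor that grows in $T$. Everything hinges on having the source term uniformly bounded pointwise in $s$ in the \emph{space} $L^{6/5}$; mass conservation handles the $L^2$ factor for free, but the $L^6$ (or $L^{12/5}$) factors must be absorbed into the a priori $H^1$ bound of Lemma~\ref{lem:bdd_H1} rather than picking up powers of $t$. Provided this is done carefully, the rest is bookkeeping and the result follows.
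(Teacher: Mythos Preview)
Your proposal is correct and follows essentially the same route as the paper: reduce $L^2_tL^\infty_x$ to $L^2_tW^{1,6}_x$ via Sobolev/Morrey, apply the endpoint Strichartz estimate with dual pair $(2,6/5)$, and bound the nonlinearity $\|(V*|\varphi_s|^2)\varphi_s\|_{W^{1,6/5}_x}$ uniformly in $s$ by the H\"older--Young--interpolation chain together with mass conservation and Lemma~\ref{lem:bdd_H1}. The paper's argument is identical in structure (it cites \cite{cazenave2003semilinear} rather than writing out Duhamel, and says ``Riesz--Thorin'' where you say ``Gagliardo--Nirenberg''), so there is nothing substantively different to compare.
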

	
	\begin{proof}
		We closely follow \cite[Theorem 2.3.3]{cazenave2003semilinear} for the proof of the lemma. The result for $V\in L^{2}+L^{\infty}$ is in the proof of Lemma \ref{lem:bdd_H1} and here we remove terms for $L^\infty$ part of $V$. From the Sobolev inequality and the Strichartz's estimate,
		we have
		\begin{equation}
		\begin{aligned}
		\|\varphi_t\|_{L^2((0,T),L^\infty)} &\leq C \|\varphi_t\|_{L^2((0,T),W^{1,6})}\\
		&\leq C\|\varphi_0\|_{H^1} + C\|(V*|\varphi_t|^2)\varphi_t\|_{L^2((0,T),W^{1,6/5})}.
		\end{aligned}\label{eq:Strischartz_div}
		\end{equation}
		From the definition of the Sobolev norm,
		\begin{align}	
		&\|(V*|\varphi_t|^2)\varphi_t\|_{L^2((0,T),W^{1,6/5})} \label{eq:(V*phi^2)phi} \\
		&\quad\leq C \|(V*|\varphi_t|^2)\varphi_t\|_{L^2((0,T),{L^{6/5}})} + C \|\nabla ((V*|\varphi_t|^2)\varphi_t)\|_{L^2((0,T),{L^{6/5}})}. \notag
		\end{align}
		We first focus on the spacial integral; integration with respect to the time variable $t$ will be considered later. In the first term in the right-hand side of \eqref{eq:(V*phi^2)phi}, the integrand of the spatial integral is bounded by
		\begin{equation}
		\begin{aligned}
		\|(V*|\varphi_t|^2)\varphi_t\|_{L^{6/5}}
		&\leq \|V*|\varphi_t|^2\|_{L^3}\|\varphi_t\|_{L^2} \leq \|V\|_{L^2}\||\varphi_t|^2\|_{L^{6/5}}\|\varphi_t\|_{L^2}  \\
		&\leq  \|V\|_{L^2}\|\varphi_t\|_{L^{12/5}}^2\|\varphi_t\|_{L^2}
		\leq  \|V\|_{L^2}\|\varphi_t\|_{L^{2}}^{5/2}\|\varphi_t\|_{L^6}^{1/2}\\
		&\leq \|V\|_{L^2}\|\varphi_t\|_{L^2}^{5/2}\|\varphi_t\|_{H^1}^{1/2},
		\end{aligned}
		\end{equation}
		where we used H\"older's inequality, Young's inequality, and Riesz-Thorin Theorem. Similarly, we decompose the integrand of the second term in the right-hand side \eqref{eq:(V*phi^2)phi} into two parts and find that
		\begin{align*}
		&\|\nabla ((V*|\varphi_t|^2)\varphi_t)\|_{L^2((0,T),{L^{6/5}})}\\
		&\quad\leq \| (V*(\nabla|\varphi_t|^2))\varphi_t\|_{L^2((0,T),{L^{6/5}})} +\| (V*|\varphi_t|^2)(\nabla\varphi_t)\|_{L^2((0,T),{L^{6/5}})}.
		\end{align*}
		We again apply H\"older's inequality, Young's inequality, and Riesz-Thorin Theorem to get
		\begin{align*}
		\| V*(\nabla |\varphi_t|^2 ) \varphi_t\|_{L^{6/5}} &\leq \| V*(\nabla|\varphi_t|^2)\|_{L^{3}}\|\varphi_t\|_{L^2}
		\leq  C \|V\|_{L^2}\|\overline{\varphi_t}\nabla\varphi_t\|_{L^{6/5}}\|\varphi_t\|_{L^2}\\
		&\leq  C \|V\|_{L^2}\|\varphi_t\|_{L^{3}}\|\nabla\varphi_t\|_{L^{2}}\|\varphi_t\|_{L^2}
		\leq C \|V\|_{L^2}\|\varphi_t\|_{L^{2}}^{3/2}\|\varphi_t\|_{H^1}^{3/2}, \\
		\| (V*|\varphi_t|^2)(\nabla\varphi)\|_{L^{6/5}}
		&\leq \|V*|\varphi_t|^2\|_{L^{3}}\|\nabla\varphi_t\|_{L^2}
		\leq \|V\|_{L^2}\|\varphi_t\|_{L^2}^{3/2}\|\varphi_t\|_{H^1}^{3/2
		}.
		\end{align*}
		Thus, after taking $L^2$-norm according to \eqref{eq:Strischartz_div} with respect to the time variable $t$, with the mass conservation $\|\varphi_t\|_{L^2}=1$ and Lemma \ref{lem:bdd_H1}, we conclude that
		\[
		\|\varphi_t\|_{L^2((0,T),L^\infty)} \leq  C \sqrt{1 + {{T}}}.
		\]
	\end{proof}
	
	\begin{prop}[Key estimate]
		\label{prop:key_estimate}Suppose that $\varphi_{s}$ a solution
		of \eqref{eq:Hartree} with initial data $\varphi$ satisfies Assumption \ref{ass:cases}. We have
		\begin{equation}
		\int_{0}^{t}\mathrm{d}s\,\|V(x-\cdot)\varphi_{s}\|_{2}\leq C(t).\label{eq:key_estimate}
		\end{equation}
		where $C(t)=C(\varphi_{0},V,t)$ is depends only on initial data
		$\varphi_{0}$, interaction potential $V$ and time $t$, given in Table \ref{table:RoC_t}.
	\end{prop}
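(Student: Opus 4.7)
The plan is a case-by-case analysis following the rows and columns of Table~\ref{table:RoC_t}. In every case I split the potential as $V=V^{\mathrm{in}}+V^{\mathrm{out}}$ with $V^{\mathrm{in}}=V\mathbf{1}_{\{|y|\le 1\}}$ and $V^{\mathrm{out}}=V\mathbf{1}_{\{|y|>1\}}$, and apply H\"older's inequality to each piece: $\|V^{\bullet}(x-\cdot)\varphi_s\|_2\le\|V^{\bullet}\|_{p}\|\varphi_s\|_{q}$ with $1/p+1/q=1/2$ (the shift by $x$ drops out by translation invariance of the $L^p$ norm). Since $\gamma<3/2$, the singular piece is square-integrable, $V^{\mathrm{in}}\in L^2$, so $\|V^{\mathrm{in}}\varphi_s\|_2\le C\|\varphi_s\|_\infty$. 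For the long-range piece, the exponential decay ($\mu>0$) gives $V^{\mathrm{out}}\in L^p$ for every $p\in[1,\infty]$, whereas for $\mu=0$ one only has $V^{\mathrm{out}}\in L^p$ for $p>3/\gamma$ (with $V^{\mathrm{out}}\in L^\infty$ of norm $|\lambda|$). For each admissible $q$, mass conservation yields the interpolation $\|\varphi_s\|_q\le\|\varphi_s\|_\infty^{1-2/q}\|\varphi_s\|_2^{2/q}=\|\varphi_s\|_\infty^{1-2/q}$.

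For the small-coupling rows $|\lambda|\le\lambda_c$ I feed in the pointwise time decay $\|\varphi_s\|_\infty\le C(1+s)^{-3/2}$ supplied by Propositions~\ref{prop:timedecay}--\ref{prop:timedecay_defocus} and by Lemmas~\ref{prop:Yukawa}--\ref{prop:Yukawa_HayashiOzawa87}. The resulting integrand is then $\lesssim (1+s)^{-3(1-2/q)/2}$ with $q$ chosen as large as the integrability constraint on $V^{\mathrm{out}}$ permits. When $\mu>0$ any $q$ is admissible, yielding an integrable power and hence the uniform bound $C$; when $\mu=0$ and $1<\gamma<3/2$ one can still pick $q>6$, again uniform; for the Coulomb borderline $\gamma=1$ one is pushed to $q=6$, the integrand behaves like $(1+s)^{-1}$, and after tracking the $\lambda$-dependence of the H\"older constants one obtains the polynomial $C(1+t)^{K}$ with $K\propto|\lambda|$; for $0<\gamma<1$ the admissible $q$ is strictly below $6$, so the decay is only $(1+s)^{-\gamma}$, which after integration is dominated by $Ce^{Kt^{1-\gamma}}$ as listed.

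For the large-coupling rows the pointwise decay is no longer available, and I replace it by the uniform $H^1$-bound of Lemma~\ref{lem:bdd_H1} together with the Strichartz-type control $\|\varphi_s\|_{L^2((0,T),L^\infty)}\le C\sqrt{1+T}$ of Lemma~\ref{lem:decaying}. When $\mu>0$, $V\in L^2$, hence $\|V\varphi_s\|_2\le\|V\|_2\|\varphi_s\|_\infty$, and Cauchy--Schwarz gives $\int_0^t\|\varphi_s\|_\infty\,ds\le\sqrt{t}\,\|\varphi_s\|_{L^2_sL^\infty_x}\le C(1+t)$, matching the $C(1+t)^K$ entry (and $\le Ce^{Kt}$ for the attractive row). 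When $\mu=0$ I keep the $V^{\mathrm{in}}+V^{\mathrm{out}}$ decomposition: the singular $V^{\mathrm{in}}$ is handled by Strichartz as above, while the long-range $V^{\mathrm{out}}\in L^p$ is paired with $\|\varphi_s\|_q\le\|\varphi_s\|_\infty^{\theta}\|\varphi_s\|_6^{1-\theta}$, with $\|\varphi_s\|_6$ controlled by Lemma~\ref{lem:bdd_H1} and the $\|\varphi_s\|_\infty^{\theta}$ factor estimated in $L^{2/\theta}_s$ via Lemma~\ref{lem:decaying}. Tuning the interpolation exponent against the integrability constraint $p>3/\gamma$ delivers the improved bound $C_{\alpha}e^{Kt^{\gamma/\alpha}}$ in the attractive singular cell, for arbitrary $\alpha\in[2\gamma/3,1)$, in place of the $e^{Kt^{3/2}}$ of~\cite{Chen2018} (which corresponds to the boundary choice $\alpha=2\gamma/3$). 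The main technical obstacle lies precisely in this bookkeeping: the attractive row $\lambda<-\lambda_c$, $1<\gamma<3/2$ requires a careful balancing of the H\"older splitting scale, the interpolation exponent, and the Strichartz integrability exponent to extract the quoted $t^{\gamma/\alpha}$, while the Coulomb borderline $\gamma=1$ row forces explicit propagation of the $|\lambda|$-dependence through all estimates in order to realise $K\propto|\lambda|$ rather than a fixed polynomial degree.
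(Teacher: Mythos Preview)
Your overall architecture (case analysis, H\"older $+$ interpolation, feeding in either pointwise dispersive decay or Strichartz) is the same as the paper's, but two concrete points break down and prevent you from reaching the sharp entries of Table~\ref{table:RoC_t}.

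\medskip
\textbf{The fixed splitting radius is too rigid.} Your decomposition $V=V^{\mathrm{in}}+V^{\mathrm{out}}$ at $|y|=1$ does not yield the claimed $(1+s)^{-\gamma}$ decay of $\|V(x-\cdot)\varphi_s\|_2$ in the small-$\lambda$ cells $0<\gamma\le 1$, $\mu=0$. Take the Coulomb borderline $\gamma=1$: you write ``one is pushed to $q=6$'', but the pair $(p,q)=(3,6)$ is \emph{not} admissible since $V^{\mathrm{out}}=\lambda|x|^{-1}\mathbf{1}_{\{|x|>1\}}\notin L^3$. You are forced to $p>3$, hence $q<6$, hence a decay strictly slower than $(1+s)^{-1}$; the integral is then $\sim t^{\epsilon}$ rather than $\log(1+t)$, and you never obtain the polynomial $(1+t)^{K}$. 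The same loss of an $\epsilon$ in the exponent spoils the entry $e^{Kt^{1-\gamma}}$ for $0<\gamma<1$. The paper's fix is to split at a \emph{time-dependent} radius $f(s)$, obtaining
\[
\|V(x-\cdot)\varphi_s\|_2^2 \;\le\; C\,\|\varphi_s\|_\infty^{2}\,f(s)^{\,3-2\gamma}\;+\;C\,f(s)^{-2\gamma},
\]
and then choosing $f(s)=1+s$ (resp.\ $f(s)=(1+s)^{1/3}$) to balance the two terms and hit the exact power $(1+s)^{-2\gamma}$ (resp.\ $(1+s)^{-2\gamma/3}$). The moving radius is the missing idea.

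\medskip
\textbf{Pointwise decay is available in the defocusing large-$\lambda$ row.} You state that for the large-coupling rows ``the pointwise decay is no longer available'' and fall back on Strichartz alone. This is not correct for $\lambda>\lambda_c$: Proposition~\ref{prop:timedecay_defocus} (and Lemma~\ref{prop:Yukawa_HayashiOzawa87}) give $\|\varphi_s\|_\infty\le C(1+s)^{-1/2}$ for \emph{all} $\lambda>0$, with no smallness assumption. The paper combines this with the variable-radius split to obtain the sub-exponential entries $e^{Kt^{1/3}}$, $e^{Kt^{1-2\gamma/3}}$ and $(1+t)^{K}$ in the top row. Your Strichartz-only route gives at best $\int_0^t\|V\varphi_s\|_2\,ds\le C(1+t)$; since the table records $C(t)\sim\exp\bigl(\int_0^t\|V\varphi_s\|_2\,ds\bigr)$, this is $e^{Kt}$, not the $(1+t)^{K}$ you claim to ``match'' in the Yukawa cell.
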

	\begin{rem}
		Strichartz estimate was used to obtain
		\[
		\int_{0}^{t}\mathrm{d}s\sup_{x}\|V(\cdot-x)\varphi_{s}\|_{2}\leq C(1+t)^{3/2}
		\]
		in \cite{Chen2018}.
		Here we use Proposition \ref{prop:key_estimate} so that
		\[
		\int_{0}^{t}\mathrm{d}s\sup_{x}\|V(\cdot-x)\varphi_{s}\|_{2}\leq C(t).
		\]
	\end{rem}
	\begin{proof}[Proof of Proposition \ref{prop:key_estimate}]
		
		Throughout this proof, (i) for $|\lambda|<\lambda_c$, we use the time decay estimate to prove a sub-exponential bound in time, and (ii) for $|\lambda|>\lambda_c$, we prove an exponential (or slightly bigger) bound in time without time decay estimate.
		
		For Coulomb cases, we consider the following: For a fixed $x\in\mathbb{R}^{3}$, let $B_{r}=\{y\in\mathbb{R}^{3}:|x-y|\leq r\}$
		be the ball centered at $x$ with radius $r$.
		By H\"older inequality, the fact that $|x-y|^{-2\gamma}<1$
		for $y\in B_{1}^{c}$, Sobolev embedding, and Lemma \ref{lem:bdd_H1},
		we have
		\begin{align}
		\frac{1}{\lambda^{2}}\|V(x-\cdot)\varphi_{s}\|_{2}^{2} & =\int\mathrm{d}y\,\frac{|\varphi_{s}(y)|^{2}}{|x-y|^{2\gamma}}\nonumber  =\int_{B_{f(s)}}\mathrm{d}y\,\frac{|\varphi_{s}(y)|^{2}}{|x-y|^{2\gamma}}+\int_{B_{f(s)}^{c}}\mathrm{d}y\,\frac{|\varphi_{s}(y)|^{2}}{|x-y|^{2\gamma}}\nonumber \\
		& \leq C\|\varphi_{s}\|_{\infty}^{2}\left(\int_{B_{f(s)}}\mathrm{d}y\,\frac{1}{|x-y|^{2\gamma}}\right)+C\left(f(s)\right)^{-2\gamma}\|\varphi_{s}\|_{2}^{2}\label{eq:V_generic_bdd_decay}
		\end{align}
		for 
		a positive valued function $f(s)$ with arbitrary $s>0$, which will be determined later.
		
		Note that
		\[
		\int_{B_{f(s)}}\mathrm{d}y\,\frac{1}{|x-y|^{2\gamma}}=4\pi\int_{0}^{f(s)}r^{2-2\gamma}\mathrm{d}r=\frac{4\pi}{3-2\gamma}\left(f(s)\right)^{3-2\gamma}
		\]
		implies, by time decay estimate, that
		\begin{align*}
		\frac{1}{\lambda^{2}}\|V(x-\cdot)\varphi_{s}\|_{2}^{2} & \leq C\|\varphi_{s}\|_{\infty}^{2}\left(f(s)\right)^{3-2\gamma}+C\left(f(s)\right)^{-2\gamma}\\
		&\leq C(1+s)^{-3}\left(f(s)\right)^{3-2\gamma}+C\left(f(s)\right)^{-2\gamma}.
		\end{align*}
		By letting $f(s)=1+s$, we get
		\begin{equation}
		\|V(x-\cdot)\varphi_{s}\|_{2}^{2}\leq C(1+s)^{-2\gamma}.\label{eq:small_lambda}
		\end{equation}
		
		\noindent \textbf{Case 1.} $V(x)=\lambda|x|^{-\gamma}$ with $0<\gamma\leq1$
		and $\lambda\in\mathbb{R}$.
		
		From H\"older inequality and Hardy inequality, we get
		\begin{align*}
		\frac{1}{\lambda^{2}}\|V(x-\cdot)\varphi_{s}\|^2_{2} & =\int\mathrm{d}y\,|x-y|^{-2\gamma}|\varphi_{s}(y)|^{2} \\
		&=\int\mathrm{d}y\,|x-y|^{-2\gamma}|\varphi_{s}(y)|^{2\gamma}\cdot|\varphi_{s}(y)|^{2-2\gamma}\\
		& \leq\left(\int\mathrm{d}y\,|x-y|^{-2}|\varphi_{s}(y)|^{2}\right)^{\gamma}\left(\int\mathrm{d}y\,|\varphi_{s}(y)|^{2}\right)^{1-\gamma}\\
		&\leq C\|\varphi_{s}\|_{H^{1}}^{2\gamma}\|\varphi_{s}\|_{L^{2}}^{2-2\gamma}\\
		& \leq C\|\varphi_{s}\|_{H^{1}}^{2} \leq C.
		\end{align*}
		
		\noindent \textbf{Case 2.} $V(x)=\lambda|x|^{-\gamma}$ with $0<\gamma<1$
		and $|\lambda|\leq\lambda_c$.
		
		We have from \eqref{eq:small_lambda} that 
		\[
		\int_{0}^{t}\mathrm{d}s\,\|V(x-\cdot)\varphi_{s}\|_{2}\leq\int_{0}^{t}\mathrm{d}s\,(1+s)^{-\gamma}\leq\frac{1}{1-\gamma}(1+t)^{1-\gamma}.
		\]
		Then
		\[
		\exp\left(\int_{0}^{t}\mathrm{d}s\,\|V(x-\cdot)\varphi_{s}\|_{2}\right)\leq\exp\left(K(1+t)^{1-\gamma}\right).
		\]
		
		\noindent \textbf{Case 3.1.} $V(x)=\lambda|x|^{-\gamma}$ with $\gamma=1$
		and $|\lambda|\leq\lambda_c$.
		
		From Kato's inequality and \eqref{eq:small_lambda},
		
		\[
		\int_{0}^{t}\mathrm{d}s\,\|V(x-\cdot)\varphi_{s}\|_{2}\leq\int_{0}^{t}\mathrm{d}s\,2\sqrt{\pi}|\lambda|(1+s)^{-1}\leq2\sqrt{\pi}||\lambda|\log(1+t).
		\]
		Then
		\[
		\exp\left(\int_{0}^{t}\mathrm{d}s\,\|V(x-\cdot)\varphi_{s}\|_{2}\right)\leq(1+t)^{2\sqrt{\pi}||\lambda|}.
		\]
		
		\noindent \textbf{Case 3.2.} $V(x)=\lambda|x|^{-\gamma}$ with $1<\gamma<3/2$
		and $|\lambda|\leq\lambda_c$.
		
		From \eqref{eq:small_lambda},
		\[
		\int_{0}^{t}\mathrm{d}s\,\|V(x-\cdot)\varphi_{s}\|_{2}\leq C\int_{0}^{t}\mathrm{d}s\,(1+s)^{-\gamma}\leq \frac{C}{\gamma -1}.
		\]
		Thus
		\[
		\exp\left(\int_{0}^{t}\mathrm{d}s\,\|V(x-\cdot)\varphi_{s}\|_{2}\right)\leq C.
		\]
		
		\noindent \textbf{Case 3.3.} $V(x)=\lambda|x|^{-\gamma}$ with $0<\gamma<3/2$
		and $\lambda\in\mathbb{R}$. 
		
		Let $2\gamma/3\leq\alpha=\alpha(\gamma)<1$
		so that
		\begin{align}
		\frac{1}{\lambda^{2}}\|V(x-\cdot)\varphi_{s}\|^2_{2} & =\int\mathrm{d}y\,\frac{|\varphi_{s}(y)|^{2}}{|x-y|^{2\gamma}}\nonumber \\
		& =\int_{B_{1}}\mathrm{d}y\,\frac{|\varphi_{s}(y)|^{2}}{|x-y|^{2\gamma}}+\int_{B_{1}^{c}}\mathrm{d}y\,\frac{|\varphi_{s}(y)|^{2}}{|x-y|^{2\gamma}}.\label{eq:coulomb_gamma_big}
		\end{align}
		Note that by H\"older inequality with a pair $(\frac{3\alpha}{2\gamma},\frac{3\alpha}{3\alpha-2\gamma})$,
		we have
		\begin{align*}
		\int_{B_{1}}\mathrm{d}y\,\frac{|\varphi_{s}(y)|^{2}}{|x-y|^{2\gamma}} & \leq\left\Vert |x-\cdot|^{{-2\gamma}}\right\Vert _{\frac{3\alpha}{2\gamma}}\||\varphi_{s}|^{2}\|_{\frac{3\alpha}{3\alpha-2\gamma}} \leq\left(\int_{B_{1}}\mathrm{d}y\,|x-\cdot|^{-3\alpha}\right)^{\frac{2\gamma}{3\alpha}}\|\varphi_{s}\|_{\frac{6\alpha}{3\alpha-2\gamma}}^{2}.
		\end{align*}
		Since $\alpha<1$, the first factor $\left(\int_{B_{1}}\mathrm{d}y\,|x-\cdot|^{-3\alpha}\right)^{2\gamma/3\alpha}=:C_{\alpha}<\infty$.
		By Riesz--Thorin theorem
		\[
		\|\varphi_{s}\|_{\frac{6\alpha}{3\alpha-2\gamma}}\leq\|\varphi_{s}\|_{6}^{\frac{3\alpha-2\gamma}{\alpha}}\|\varphi_{s}\|_{\infty}^{\frac{2\gamma-2\alpha}{\alpha}}.
		\]
		Thus
		\[
		\int_{B_{1}}\mathrm{d}y\,\frac{|\varphi_{s}(y)|^{2}}{|x-y|^{2\gamma}}\leq C \|\varphi_{s}\|_{\infty}^{4(\gamma-\alpha)/\alpha}.
		\]
		Next, we bound the second term of \eqref{eq:coulomb_gamma_big}
		using that $|x-y|^{-2\gamma}\leq1$ for $y\in B_{1}^{c}$ so that
		\[
		\int_{B_{1}^{c}}\mathrm{d}y\,\frac{|\varphi_{s}(y)|^{2}}{|x-y|^{2\gamma}}\leq\int_{B_{1}^{c}}\mathrm{d}y\,|\varphi_{s}(y)|^{2}\leq\|\varphi_{s}\|_{2}^{2}\leq C.
		\]
		Hence,
		\begin{align*}
		\frac{1}{\lambda^{2}}\|V(x-\cdot)\varphi_{s}\|^{2}_{2} & \leq C\|\varphi_{s}\|_{\infty}^{4(\gamma-\alpha)/\alpha}+C.
		\end{align*}
		Now we have, using H\"older inequality in time and Strichartz estimate,
		\begin{align*}
		\int_{0}^{t}\mathrm{d}s\,\|V(x-\cdot)\varphi_{s}\|_{2} & \leq C\int_{0}^{t}\mathrm{d}s\,(\|\varphi_{s}\|_{\infty}^{2(\gamma-\alpha)/\alpha}+1)\\
		&\leq C\left(\int_{0}^{t}\mathrm{d}s\right)^{(2\alpha-\gamma)/\alpha}\left(\int_{0}^{t}\mathrm{d}s\,\|\varphi_{s}\|_{\infty}^{2}\right)^{2(\gamma-\alpha)/2\alpha}+Ct\\
		& \leq C(1+t)^{\frac{2\alpha-\gamma}{\alpha}+\frac{2\gamma-2\alpha}{\alpha}}+Ct
		\leq \max\{C(1+t)^{\gamma/\alpha},Ce^t\}
		\end{align*}
		for any $\alpha\in[2\gamma/3,1)$.
		
		\noindent \textbf{Case 3.4.} $V(x)=\lambda\exp(-\mu|x|)|x|^{-\gamma}$
		with $0<\gamma<3/2$, $\mu>0$, and $\lambda\in\mathbb{R}$.
		
		Note that then $V\in L^2$. By H\"older inequality and Sobolev embedding,
		\begin{align*}
		\frac{1}{\lambda^{2}}\|V(x-\cdot)\varphi_{s}\|^{2}_{2} & =\int\mathrm{d}y\,\frac{e^{-2\mu|x-y|}}{|x-y|^{2\gamma}}|\varphi_{s}(y)|^{2}\\ &\leq\|\varphi_{s}(y)\|_{\infty}^{2}\int\mathrm{d}y\,\frac{e^{-2\mu|x-y|}}{|x-y|^{2\gamma}} \leq C\|\varphi_{s}\|_{\infty}^{2}.
		\end{align*}
		Then by Cauchy--Schwarz inequality and Lemma \ref{lem:decaying} (Strichartz estimate), we get
		\begin{align*}
		\int_{0}^{t}\mathrm{d}s\,\|V(x-\cdot)\varphi_{s}\|_{2} & \leq C\int_{0}^{t}\mathrm{d}s\,\|\varphi_{s}\|_{\infty}
		\leq C\left(\int_{0}^{t}\mathrm{d}s\right)^{1/2}\\
		&\left(\int_{0}^{t}\mathrm{d}s\|\varphi_{s}\|_{\infty}^2\right)^{1/2}
		\leq C (1+t).
		\end{align*}
		
		\noindent \textbf{Case 4.} $V(x)=\lambda\exp(-\mu|x|)|x|^{-\gamma}$
		with $\mu>0$, $0<\gamma<3/2$, and $|\lambda|\leq\lambda_c$.
		
		Using H\"older inequality,
		\begin{align*}
		\frac{1}{\lambda^{2}}\|V(x-\cdot)\varphi_{s}\|^{2}_{2} & =\int\mathrm{d}y\,\frac{e^{-2\mu|x-y|}|\varphi_{s}(y)|^{2}}{|x-y|^{2\gamma}} \\&\leq C\|\varphi_{s}\|_{\infty}^{2}\int\mathrm{d}y\,\frac{e^{-2\mu|x-y|}}{|x-y|^{2\gamma}} \leq C(1+s)^{-3},
		\end{align*}
		hence we get a time independent bound
		\[
		\int_{0}^{t}\mathrm{d}s\,\|V(x-\cdot)\varphi_{s}\|_{2}\leq\int_{0}^{t}\mathrm{d}s\,C(1+s)^{-3/2}\leq C.
		\]
		
		\noindent \textbf{Case 5.} $V(x)=\lambda|x|^{-\gamma}$ with $1\leq\gamma<3/2$
		and $\lambda>0$.
		
		From \eqref{eq:V_generic_bdd_decay},
		\begin{align*}
		\frac{1}{\lambda^{2}}\|V(x-\cdot)\varphi_{s}\|^{2}_{2} & \leq C\|\varphi_{s}\|_{\infty}^{2}\left(f(s)\right)^{3-2\gamma}+C\left(f(s)\right)^{-2\gamma} \\
		&\leq C(1+s)^{-1}\left(f(s)\right)^{3-2\gamma}+C\left(f(s)\right)^{-2\gamma}.
		\end{align*}
		By putting $f(s)={(1+s)^{1/3}}$, we find that
		\[
		\|V(x-\cdot)\varphi_{s}\|^{2}_{2}\leq C(1+s)^{-2\gamma/3},
		\]
		hence
		\[
		\exp\left(\int_{0}^{t}\mathrm{d}s\,\|V(x-\cdot)\varphi_{s}\|_{2}\right)\leq C\exp\left(K(1+t)^{1-2\gamma/3}\right).
		\]
		
		\noindent \textbf{Case 6.} $V(x)=\lambda\exp(-\mu|x|)|x|^{-\gamma}$
		with $0<\gamma<3/2$, $\mu>0$, and $\lambda>0$.
		
		Similarly to \eqref{eq:V_generic_bdd_decay},
		\begin{align*}
		\frac{1}{\lambda^{2}}\|V(x-\cdot)\varphi_{s}\|^{2}_{2}
		& \leq C\|\varphi_{s}\|_{\infty}^{2}\left(\int_{B_{f(s)}}\mathrm{d}y\,\frac{e^{-2\mu|x-y|}}{|x-y|^{2\gamma}}\right)+C\left(f(s)\right)^{-2\gamma}\|\varphi_{s}\|^{2}_{2}\\
		&\leq C\|\varphi_s\|_\infty^2+C\frac{e^{-2f(s)}}{f(s)^{2\gamma}}.
		\end{align*}
		Letting $f(s)={1+s}$, we get
		\[
		\frac{1}{\lambda^{2}}\|V(x-\cdot)\varphi_{s}\|^{2}_{2}\leq C(1+s)^{-1},
		\]
		which implies that
		\begin{align*}
		\int_{0}^{t}\mathrm{d}s\,\|V(x-\cdot)\varphi_{s}\|_{2} & \leq C\int_{0}^{t}\mathrm{d}s (1+s)^{-1}
		\leq C \log(1+t).
		\end{align*}
		This completes the proof.
	\end{proof}
	
	\begin{rem}
		In the table below, we summarize the cases considered in the proof of Proposition \ref{prop:key_estimate}.
		\begin{center}
			\begin{table}[h]
				\caption{The cases in the proof of Proposition \ref{prop:key_estimate}}
				\label{table:RoC_t2}
				\centering{}
				\begin{tabular}{|c|c|c|c|c|}
					\hline 
					\multirow{2}{*}{} & \multicolumn{3}{c|}{$V(x)=\lambda|x|^{-\gamma}$} & $V(x)=\lambda\exp(-\mu|x|)|x|^{-\gamma}$, $\mu>0$\tabularnewline
					\cline{2-5} 
					& $0<\gamma<1$ & $\gamma=1$ & $1<\gamma<3/2$ & $0<\gamma<3/2$\tabularnewline
					\hline 
					$\lambda>\lambda_c$ & Case1 & \multicolumn{2}{c|}{Case 5} & Case 6\tabularnewline
					\cline{1-5} 
					$|\lambda|<\lambda_c$ & Case 2 & Case 3.1 & Case 3.2 & Case 4\tabularnewline
					\cline{1-5} 
					$\lambda<-\lambda_c$ & \multicolumn{2}{c|}{Case 1}  & Case 3.3 & Case 3.4\tabularnewline
					\hline 
				\end{tabular}
			\end{table}
			\par\end{center}
	\end{rem}
	
	\section{Fock space formalism}
	
	This section is devoted to explain Fock space formalism for studying the dynamics of the system of $N$-Bosons. We consider Bosonic Fock space as in \cite{Chen2011,Lee2013,Rodnianski2009}.
	The Bosonic Fock space is a Hilbert space defined by
	\[
	\mathcal{F}=\bigoplus_{n\geq0}L^{2}\left(\mathbb{R}^{3},\mathrm{d}x\right)^{\otimes_{s}n}=\mathbb{C}\oplus\bigoplus_{n\geq1}L_{s}^{2}\left(\mathbb{R}^{3n},\mathrm{d}x_{1},\dots,\mathrm{d}x_{n}\right),
	\]
	where $L_{s}^{2}=L_{s}^{2}(\mathbb{R}^{3n},\mathrm{d}x_{1},\dots,\mathrm{d}x_{n})$
	is a subspace of $L^{2}(\mathbb{R}^{3n},\mathrm{d}x_{1},\dots,\mathrm{d}x_{n})$
	that is the space of all functions symmetric under any permutation
	of $x_{1},x_{2},\dots,x_{n}$. It is convenient to let $L_{s}^{2}(\mathbb{R}^{3})^{\otimes0}=\mathbb{C}$.
	An element $\psi\in\mathcal{F}$ can be understood as a sequence $\psi=\{\psi^{\left(n\right)}\}_{n\geq0}$
	of $n$-particle wave functions $\psi^{\left(n\right)}\in L_{s}^{2}(\mathbb{R}^{3n})$
	or as a vector in a countable dimensional vector space such that each
	$n$-th component is a function $\psi^{\left(n\right)}\in L_{s}^{2}(\mathbb{R}^{3n})$.
	The inner product on $\mathcal{F}$ is defined by 
	\[
	\begin{aligned}\langle\psi_{1},\psi_{2}\rangle_{\mathcal{F}} & =\sum_{n\geq0}\langle\psi_{1}^{\left(n\right)},\psi_{2}^{\left(n\right)}\rangle_{L^{2}\left(\mathbb{R}^{3n}\right)}\\
	& =\overline{\psi_{1}^{(0)}}\psi_{2}^{(0)}+\sum_{n\geq0}\int\mathrm{d}x_{1}\dots\mathrm{d}x_{n}\,\overline{\psi_{1}^{\left(n\right)}\left(x_{1},\dots,x_{n}\right)}\,\psi_{2}^{\left(n\right)}\left(x_{1},{\dots},x_{n}\right).
	\end{aligned}
	\]
	We denote $\|\psi\|_{\mathcal{F}}=\langle\psi,\psi\rangle_\mathcal{F}^{1/2}$.
	The vector $\Omega:=\left\{ 1,0,0,{\dots}\right\} \in\mathcal{F}$
	is called the vacuum. Note that an element $\psi\in\mathcal{F}$ is
	denoting a many-body quantum state which can have uncertainty of the number
	of particles of the quantum system. Because of that one can think
	of generation or annihilation of a particle. For $f\in L^{2}(\mathbb{R}^{3})$,
	we define the creation operator $a^{*}(f)$ and the annihilation operator
	$a(f)$ on $\mathcal{F}$ by 
	\begin{equation}
	\left(a^{*}\left(f\right)\psi\right)^{\left(n\right)}(x_{1},\dots,x_{n})=\frac{1}{\sqrt{n}}\sum_{j=1}^{n}f(x_{j})\psi^{\left(n-1\right)}(x_{1},\dots,x_{j-1},x_{j+1},\dots,x_{n})\label{eq:creation}
	\end{equation}
	and 
	\begin{equation}
	\left(a\left(f\right)\psi\right)^{\left(n\right)}\left(x_{1},\dots,x_{n}\right)=\sqrt{n+1}\int\mathrm{d}x\overline{f\left(x\right)}\psi^{\left(n+1\right)}\left(x,x_{1},\dots,x_{n}\right),\label{eq:annihilation}
	\end{equation}
	each of which denotes the creation or annihilation of a particle having wave function $f$.
	By definition, the creation operator $a^{*}(f)$ is the adjoint of
	the annihilation operator of $a(f)$, and in particular, $a^{*}(f)$
	and $a(f)$ are not self-adjoint. We will use the self-adjoint operator
	$\phi\left(f\right)$ defined as 
	\[
	\phi(f)=a^{*}(f)+a(f).
	\]
	Let $a_{x}^{*}$ and $a_{x}$ operator-valued distributions such that
	\[
	a^{*}(f)=\int\mathrm{d}x\,f\left(x\right)a_{x}^{*},\qquad a(f)=\int\mathrm{d}x\,\overline{f\left(x\right)}a_{x}
	\]
	for any $f\in L^{2}(\mathbb{R}^{3})$. 
	For each non-negative
	integer $n$, we introduce the projection operator onto the $n$-particle
	sector of the Fock space, for $\psi=(\psi^{(0)},\psi^{(1)},\dots)\in\mathcal{F}$,
	\begin{equation}
	P_{n}(\psi):=(0,0,\dots,0,\psi^{(n)},0,\dots).\label{eq:projection}
	\end{equation}
	For simplicity,
	with slight abuse of notation, we will use $\psi^{(n)}$ to denote
	$P_{n}\psi$. The will use number operator $\mathcal{N}$ which counts the expected number of particles of a vector in $\mathcal{F}$ and is defined by 
	\begin{equation}
	\mathcal{N}=\int\mathrm{d}x\,a_{x}^{*}a_{x}.\label{eq:number operator}
	\end{equation}
	Note that $\mathcal{N}$ satisfies that $\left(\mathcal{N}\psi\right)^{\left(n\right)}=n\psi^{\left(n\right)}$.
	Let $J$ be an operator defined on the one-particle sector
	$L^{2}\left(\mathbb{R}^{3},\mathrm{d}x\right)$, then 
	we extend this operator into Fock space by $d\Gamma\left(J\right)$, which is called
	its second quantization and whose action
	on the $n$-particle sector is given by 
	\[
	\left(d\Gamma\left(J\right)\psi\right)^{\left(n\right)}=\sum_{j=1}^{n}J_{j}\psi^{\left(n\right)}
	\]
	where $J_{j}=1\otimes{\dots}\otimes J\otimes{\dots}\otimes1$ is the
	operator $J$ acting on the $j$-th variable only. 
	%The number operator
	%defined above can also be understood as the second quantization of
	%the identity, i.e., $\mathcal{N}=d\Gamma\left(1\right)$.
	With a kernel
	$J\left(x;y\right)$ of the operator $J$, the second quantization
	$d\Gamma\left(J\right)$ can be also be written as 
	\[
	d\Gamma\left(J\right)=\int\mathrm{d}x\mathrm{d}y\,J\left(x;y\right)a_{x}^{*}a_{y}.
	\]
	%which is consistent with \eqref{eq:number operator}.
	
	The following lemma shows that the annihilation operator
	and the creation operator can be bounded roughly $\mathcal{N}^{1/2}$ or $(\mathcal{N}+1)^{1/2}$. Moreover, it gives a bound of the second quantization operators.
	
	\begin{lem}[Lemma 2.1 in \cite{Chen2011a}]
		For $\alpha>0$, let $D(\mathcal{N}^{\alpha})=\{\psi\in\mathcal{F}:\sum_{n\geq1}n^{2\alpha}\|\psi^{(n)}\|^{2}_{2}<\infty\}$
		denote the domain of the operator $\mathcal{N}^{\alpha}$. For any
		$f\in L^{2}(\mathbb{R}^{3},dx)$ and any $\psi\in D(\mathcal{N}^{1/2})$,
		we have 
		\begin{equation}
		\begin{split}\|a(f)\psi\|_{\mathcal{F}} & \leq\|f\|_{2}\,\|\mathcal{N}^{1/2}\psi\|_{\mathcal{F}},\\
		\|a^{*}(f)\psi\|_{\mathcal{F}} & \leq\|f\|_{2}\,\|(\mathcal{N}+1)^{1/2}\psi\|_{\mathcal{F}},\\
		\|\phi(f)\psi\|_{\mathcal{F}} & \leq2\|f\|_{2}\|\left(\mathcal{N}+1\right)^{1/2}\psi\|_{\mathcal{F}}\,.
		\end{split}
		\label{eq:bd-a}
		\end{equation}
		Moreover, for any bounded one-particle operator $J$ on $L^{2}(\mathbb{R}^{3},dx)$
		and for every $\psi\in D(\mathcal{N})$, we find 
		\begin{equation}
		\|d\Gamma(J)\psi\|_{\mathcal{F}}\leq\|J\|_{\mathrm{op}}\|\mathcal{N}\psi\|_{\mathcal{F}}\,.\label{eq:J-bd}
		\end{equation}
	\end{lem}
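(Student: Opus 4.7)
The proof decomposes into four essentially independent bounds, each of which reduces to a term-by-term estimate on the $n$-particle sectors followed by Cauchy--Schwarz in the index $n$. My plan is to handle the annihilation operator first by a direct computation from \eqref{eq:annihilation}, deduce the creation operator bound via the canonical commutation relation, combine the two to treat $\phi(f)$, and finally treat $d\Gamma(J)$ separately. Throughout, the assumption $\psi\in D(\mathcal{N}^{1/2})$ (resp.\ $D(\mathcal{N})$) ensures absolute convergence of every series encountered, so no subtle approximation argument is needed.

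For the annihilation bound I would start from $(a(f)\psi)^{(n)}(x_1,\dots,x_n) = \sqrt{n+1}\int\mathrm{d}x\,\overline{f(x)}\psi^{(n+1)}(x,x_1,\dots,x_n)$ and apply Cauchy--Schwarz in $x$ to obtain
\[
\bigl|(a(f)\psi)^{(n)}(x_1,\dots,x_n)\bigr|^2 \leq (n+1)\,\|f\|_2^2 \int\mathrm{d}x\,\bigl|\psi^{(n+1)}(x,x_1,\dots,x_n)\bigr|^2.
\]
Integrating in the remaining variables yields $\|(a(f)\psi)^{(n)}\|_2^2 \leq (n+1)\|f\|_2^2\,\|\psi^{(n+1)}\|_2^2$; summing over $n\geq 0$ and relabelling $m=n+1$ gives exactly $\|f\|_2^2\sum_{m\geq 1} m\|\psi^{(m)}\|_2^2 = \|f\|_2^2\,\|\mathcal{N}^{1/2}\psi\|_{\mathcal{F}}^2$. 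Rather than unfold the symmetrization sum in \eqref{eq:creation} for the creation bound, I would invoke the canonical commutation relation $a(f)a^*(f)-a^*(f)a(f)=\|f\|_2^2$, which gives
\[
\|a^*(f)\psi\|_{\mathcal{F}}^2 = \langle\psi, a(f)a^*(f)\psi\rangle_{\mathcal{F}} = \|a(f)\psi\|_{\mathcal{F}}^2 + \|f\|_2^2\,\|\psi\|_{\mathcal{F}}^2 \leq \|f\|_2^2\,\|(\mathcal{N}+1)^{1/2}\psi\|_{\mathcal{F}}^2,
\]
where the last step uses the annihilation bound just proven. The inequality for $\phi(f)=a(f)+a^*(f)$ is then immediate from the triangle inequality and $\mathcal{N}\leq\mathcal{N}+1$.

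For the second quantization bound, I would note that on the $n$-particle sector $(d\Gamma(J)\psi)^{(n)}=\sum_{j=1}^n J_j\psi^{(n)}$, where each $J_j$ (acting on the $j$-th variable only) is bounded on $L^2_s(\mathbb{R}^{3n})$ with norm at most $\|J\|_{\mathrm{op}}$. The triangle inequality then gives $\|(d\Gamma(J)\psi)^{(n)}\|_2 \leq n\,\|J\|_{\mathrm{op}}\,\|\psi^{(n)}\|_2$, and squaring and summing produces $\|d\Gamma(J)\psi\|_{\mathcal{F}}^2 \leq \|J\|_{\mathrm{op}}^2 \sum_{n\geq 0} n^2\|\psi^{(n)}\|_2^2 = \|J\|_{\mathrm{op}}^2\|\mathcal{N}\psi\|_{\mathcal{F}}^2$. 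The whole lemma is essentially routine; the only mild subtlety is the creation operator bound, where a direct attack on \eqref{eq:creation} would require tracking the symmetrization sum, and this is precisely the computation the CCR argument sidesteps.
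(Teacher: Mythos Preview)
Your proof is correct and follows the standard route for these well-known Fock-space bounds. The paper itself does not prove this lemma; it merely cites it as Lemma~2.1 in \cite{Chen2011a}, so there is no in-paper argument to compare against.
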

	
	To consider the problem embedded into the Fock space, we extend
	Hamiltonian in \eqref{eq:N_body_Hamiltonian} to the Fock space by
	\begin{equation}
	\mathcal{H}_{N}:=\int\mathrm{d}x\,{\nabla_x a_{x}^{*}\nabla_x a_{x}}+\frac{1}{2N}\int\mathrm{d}x\mathrm{d}y\,V\left(x-y\right)a_{x}^{*}a_{y}^{*}a_{y}a_{x}.\label{eq:Fock_space_Hamiltonian}
	\end{equation}
	This definition satisfies $(\mathcal{H}_{N}\psi)^{(N)}=H_{N}\psi^{(N)}$
	for $\psi\in\mathcal{F}$. Hence it is a generalization of \eqref{eq:N_body_Hamiltonian} into the Fock space. The one-particle marginal density $\gamma_{\psi}^{\left(1\right)}$
	associated with $\psi$ is 
	\begin{equation}
	\gamma_{\psi}^{\left(1\right)}\left(x;y\right)=\frac{1}{\left\langle \psi,\mathcal{N}\psi\right\rangle_{\mathcal{F}} }\left\langle \psi,a_{y}^{*}a_{x}\psi\right\rangle_{\mathcal{F}} .\label{eq:Kernel_gamma}
	\end{equation}
	Note that $\gamma_{\psi}^{\left(1\right)}$ is a trace class operator
	on $L^{2}\left(\mathbb{R}^{3}\right)$ and $\text{Tr }\gamma_{\psi}^{\left(1\right)}=1$.
	It can be easily checked that \eqref{eq:Kernel_gamma} is equivalent
	to \eqref{eq:Kernel_of_Marginal_Density}.
	
	We defined a coherent state which is an eigenvector of annihilation operator $a(f)$ such that
	\[
	\psi\left(f\right)=e^{-\left\Vert f\right\Vert^{2}_{2}/2}\sum_{n\geq0}\frac{\left(a^{*}\left(f\right)\right)^{n}}{n!}\Omega=e^{-\left\Vert f\right\Vert^{2}_{2}/2}\sum_{n\geq0}\frac{1}{\sqrt{n!}}f^{\otimes n}.
	\]
	For $f\in L^{2}\left(\mathbb{R}^{3}\right)$, the
	Weyl operator $W\left(f\right)$ is defined by 
	\[
	W\left(f\right):=\exp\left(a^{*}\left(f\right)-a\left(f\right)\right)
	\]
	and it also satisfies 
	\[
	W\left(f\right)=e^{-\left\Vert f\right\Vert^{2}_{2}/2}\exp\left(a^{*}\left(f\right)\right)\exp\left(-a\left(f\right)\right),
	\]
	which is known as the Hadamard lemma in Lie algebra. 
	The Weyl operator is closely related to the coherent states.
	The coherent state can also be expressed in terms of the Weyl operator as 
	\begin{equation}
	\psi\left(f\right)=W\left(f\right)\Omega=e^{-\left\Vert f\right\Vert^{2}_{2}/2}\exp\left(a^{*}\left(f\right)\right)\Omega=e^{-\left\Vert f\right\Vert^{2}_{2}/2}\sum_{n\geq0}\frac{1}{\sqrt{n!}}f^{\otimes n}.\label{Weyl_f}
	\end{equation}
	
	We collect the useful properties of the Weyl operator and the coherent
	states in the following lemma. 
	\begin{lem}[Part of Lemma 2.2 in \cite{Chen2011a}]
		\label{lem:Basic_Weyl} Let $f,g\in L^{2}(\mathbb{R}^{3},\mathrm{d}x)$. 
		\begin{enumerate}
			\item The commutation relation between the Weyl operators is given by 
			\[
			W\left(f\right)W\left(g\right)=W\left(g\right)W\left(f\right)e^{-2\mathrm{i}\cdot\mathrm{Im}\left\langle f,g\right\rangle }=W\left(f+g\right)e^{-\mathrm{i}\cdot\mathrm{Im}\left\langle f,g\right\rangle }.
			\]
			\item The Weyl operator is unitary and satisfies that 
			\[
			W\left(f\right)^{*}=W\left(f\right)^{-1}=W\left(-f\right).
			\]
			\item The coherent states are eigenvectors of annihilation operators, i.e.,
			\[
			a_{x}\psi\left(f\right)=f\left(x\right)\psi\left(f\right)\quad\Rightarrow\quad a\left(g\right)\psi\left(f\right)=\left\langle g,f\right\rangle _{L^{2}}\psi\left(f\right).
			\]
			The commutation relation between the Weyl operator and the annihilation
			operator (or the creation operator) is thus 
			\[
			W^{*}\left(f\right)a_{x}W\left(f\right)=a_{x}+f\left(x\right)\quad\text{and}\quad W^{*}\left(f\right)a_{x}^{*}W\left(f\right)=a_{x}^{*}+\overline{f\left(x\right)}.
			\]
			\item The distribution of $\mathcal{N}$ with respect to the coherent state
			$\psi\left(f\right)$ is Poisson. In particular, 
			\[
			\left\langle \psi\left(f\right),\mathcal{N}\psi\left(f\right)\right\rangle_{\mathcal{F}} =\|f\|^{2}_{2},\qquad\left\langle \psi\left(f\right),\mathcal{N}^{2}\psi\left(f\right)\right\rangle_{\mathcal{F}} -\left\langle \psi\left(f\right),\mathcal{N}\psi\left(f\right)\right\rangle^{2}_{\mathcal{F}}=\|f\|^{2}_{2}.
			\]
		\end{enumerate}
	\end{lem}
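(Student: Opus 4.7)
The four statements all rest on the canonical commutation relations $[a_x, a_y^*] = \delta(x-y)$, $[a_x, a_y] = [a_x^*, a_y^*] = 0$, together with the observation that the commutator $[a^*(f) - a(f),\, a^*(g) - a(g)]$ is a scalar. This scalarity makes the Baker--Campbell--Hausdorff identity applicable in its simplest form $e^X e^Y = e^{X+Y} e^{[X,Y]/2}$, from which everything else follows quickly.

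I would begin with (1). From the canonical commutation relations one gets $[a(f), a^*(g)] = \langle f, g\rangle$, whence
\[
[a^*(f) - a(f),\, a^*(g) - a(g)] = \langle g, f\rangle - \langle f, g\rangle = -2\mathrm{i}\,\mathrm{Im}\langle f, g\rangle,
\]
a multiple of the identity. BCH then gives $W(f)W(g) = W(f+g)\, e^{-\mathrm{i}\,\mathrm{Im}\langle f, g\rangle}$; interchanging $f,g$ and dividing yields the other formula in (1). Part (2) is immediate: taking the formal adjoint in the exponent gives $W(f)^* = \exp(a(f) - a^*(f)) = W(-f)$, and then $W(f)W(-f) = W(0) = I$ by (1).

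For (3), applying $a_x$ termwise to the series $\psi(f) = e^{-\|f\|_2^2/2} \sum_{n\geq 0} f^{\otimes n}/\sqrt{n!}$ and using $a_x f^{\otimes n} = \sqrt{n}\, f(x)\, f^{\otimes(n-1)}$ gives $a_x\psi(f) = f(x)\psi(f)$; integrating against $\overline{g(x)}$ recovers the statement for $a(g)$. The conjugation formula $W^*(f) a_x W(f) = a_x + f(x)$ then follows from the scalar commutator $[a^*(f) - a(f), a_x] = -f(x)$: differentiating $t \mapsto e^{-t(a^*(f) - a(f))} a_x e^{t(a^*(f)-a(f))}$ shows this map is affine linear in $t$ and equals $a_x + t f(x)$, which at $t=1$ gives the claim; the statement for $a_x^*$ is the adjoint.

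Finally, for (4), I would integrate the conjugation formula of (3) to obtain
\[
W^*(f)\, \mathcal{N}\, W(f) = \mathcal{N} + \phi(f) + \|f\|_2^2,
\]
and then, using $\psi(f) = W(f)\Omega$, reduce coherent-state moments of $\mathcal{N}$ to vacuum expectations of the conjugated operator. Since $a(f)\Omega = 0$ and $\mathcal{N}\Omega = 0$, the first moment collapses to $\|f\|_2^2$. Squaring the identity above and expanding, the only surviving vacuum contributions beyond $\|f\|_2^4$ come from $\langle \Omega, \phi(f)^2 \Omega\rangle = \langle \Omega, a(f) a^*(f) \Omega\rangle = \|f\|_2^2$, yielding $\langle \psi(f), \mathcal{N}^2 \psi(f)\rangle_{\mathcal{F}} = \|f\|_2^4 + \|f\|_2^2$ and hence the stated variance. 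The Poisson law itself is immediate from the series: $\|P_n \psi(f)\|_{\mathcal{F}}^2 = e^{-\|f\|_2^2}\|f\|_2^{2n}/n!$. The main care required is bookkeeping of operator orderings in (4); otherwise the proof is algebraic and reduces everywhere to the centrality of the canonical commutator.
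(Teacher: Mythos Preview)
Your argument is correct and is the standard textbook route to these identities: the centrality of the commutator $[a^*(f)-a(f),\,a^*(g)-a(g)]$ triggers the elementary Baker--Campbell--Hausdorff formula, and everything else is unwound from there. The paper itself does not prove this lemma; it is quoted as ``Part of Lemma 2.2 in \cite{Chen2011a}'' and simply stated without argument. So there is no in-paper proof to compare against, and your sketch would serve perfectly well as a self-contained justification.

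One small expository remark: in part (4) you write that ``the only surviving vacuum contributions beyond $\|f\|_2^4$ come from $\langle\Omega,\phi(f)^2\Omega\rangle$''. This is true, but when squaring $\mathcal{N}+\phi(f)+\|f\|_2^2$ one also encounters cross terms such as $\mathcal{N}\phi(f)$ and $\phi(f)\mathcal{N}$, which are not individually zero as operators. It would be worth saying explicitly that their vacuum expectations vanish because $\mathcal{N}\Omega=0$ and $\langle\Omega,\mathcal{N}a^*(f)\Omega\rangle=\langle\Omega,a^*(f)\Omega\rangle=0$ (orthogonality of particle sectors). This is exactly the ``bookkeeping of operator orderings'' you flag, so just make it visible in the write-up.
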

	%Since it is easy to derive the lemma from definitions and direct calculation, the proof will be omitted.
	
	We define, for following lemmas,
	\begin{equation}
	d_{N}:=\frac{\sqrt{N!}}{N^{N/2}e^{-N/2}}\label{eq:d_N}
	\end{equation}
	and note that $C^{-1}N^{1/4}\leq d_{N}\leq CN^{1/4}$ for some constant
	$C>0$ independent of $N$, which can be easily checked by using Stirling's
	formula.
	\begin{lem}
		\label{lem:coherent_all} There exists a constant $C>0$ independent
		of $N$ such that, for any $\varphi\in L^{2}(\mathbb{R}^{3})$ with
		$\|\varphi\|_{2}=1$, we have 
		\[
		\left\Vert (\mathcal{N}+1)^{-1/2}W^{*}(\sqrt{N}\varphi)\frac{(a^{*}(\varphi))^{N}}{\sqrt{N!}}\Omega\right\Vert_{\mathcal{F}} \leq\frac{C(t)}{d_{N}}.
		\]
	\end{lem}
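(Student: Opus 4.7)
The key observation is that $\Psi_N := W^*(\sqrt{N}\varphi)\frac{(a^*(\varphi))^N}{\sqrt{N!}}\Omega$ lies entirely in the subspace $\mathcal{F}_\varphi\subset\mathcal{F}$ spanned by $\{\varphi^{\otimes n}\}_{n\geq 0}$, since $a^*(\varphi)$ and $W(\sqrt{N}\varphi)$ act nontrivially only on the span of $\varphi$. Setting $b:=a(\varphi)$ and $b^*:=a^*(\varphi)$, so that $[b,b^*]=1$ on $\mathcal{F}_\varphi$, the Weyl operator restricts to the standard displacement $D(\sqrt{N}) := e^{\sqrt{N}(b^*-b)}$, and $\Psi_N = D(-\sqrt{N})\varphi^{\otimes N}$ is a displaced number state. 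The plan is to perform the whole estimate in this single-mode setting.

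The first step is to move $(\mathcal{N}+1)^{-1/2}$ through the Weyl operator. Using unitarity,
\[
\|(\mathcal{N}+1)^{-1/2}\Psi_N\|_\mathcal{F}^2 = \langle \varphi^{\otimes N}, (\widetilde{\mathcal{N}}+1)^{-1}\varphi^{\otimes N}\rangle,
\]
with $\widetilde{\mathcal{N}} := W(\sqrt{N}\varphi)\mathcal{N}W^*(\sqrt{N}\varphi) = \mathcal{N} - \sqrt{N}\phi(\varphi) + N$; restricted to $\mathcal{F}_\varphi$ this equals $\tilde b^*\tilde b$ with $\tilde b := b - \sqrt{N}$, a shifted number operator with orthonormal eigenvectors $\eta_n := (\tilde b^*)^n\psi(\sqrt{N}\varphi)/\sqrt{n!}$ for $n\geq 0$ (the $\tilde b$-vacuum $\eta_0$ being the coherent state $\psi(\sqrt{N}\varphi)$). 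Expanding $\varphi^{\otimes N} = \sum_n d_n \eta_n$, the target becomes $\sum_n |d_n|^2/(n+1)$. Writing $b^* = \tilde b^* + \sqrt{N}$, using the binomial theorem and the fact that $\Omega$ is a $\tilde b$-coherent state with parameter $-\sqrt{N}$, a direct computation gives
\[
d_n = \frac{(-1)^n N^{n/2}}{d_N\sqrt{n!}}\sum_{k=0}^{\min(n,N)}\binom{n}{k}(-1)^k \prod_{i=0}^{k-1}\Bigl(1-\frac{i}{N}\Bigr).
\]
In particular $d_0 = 1/d_N$, which already shows that the bound is saturated at $n=0$; moreover $d_1 = 0$, and for $n\geq 2$ the inner sum exhibits a leading-order cancellation because $\sum_k\binom{n}{k}(-1)^k = 0$.

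Expanding the products $\prod(1-i/N)$ in powers of $1/N$ and exploiting the cancellation yields, to leading order, $|d_{2k}|^2 = \binom{2k}{k}/(4^k d_N^2)+O(1/(Nd_N^2))$ and $|d_{2k+1}|^2 = O(1/(Nd_N^2))$; since $\sum_k \binom{2k}{k}/(4^k(2k+1)) = \pi/2 < \infty$ (this is $\arcsin(1)$), summing gives $\sum_n|d_n|^2/(n+1)\leq C/d_N^2$, which is the claim. The main technical obstacle is making the cancellation bound on $|d_n|^2$ uniform in both $n$ and $N$, since naive termwise estimates on the alternating combinatorial sum are too crude. A cleaner route is to identify $|d_n|^2 = e^{-N}(n!/N!)N^{N-n}[L_n^{N-n}(N)]^2$ for $n\leq N$ (and its analogue for $n>N$) via the standard displaced-number-state formula, and then apply standard pointwise bounds on associated Laguerre polynomials.
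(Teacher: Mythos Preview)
The paper does not prove this lemma at all; it simply cites Lemma~6.3 of \cite{Chen2011}. So there is no ``paper's own proof'' to compare against beyond a reference. (Note also that the $C(t)$ on the right-hand side is a typo in the paper: the vector on the left has no time dependence, and the intended bound is $C/d_N$.)

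Your strategy---reducing to the single $\varphi$-mode, identifying $\Psi_N$ as a displaced number state, and expressing $\|(\mathcal N+1)^{-1/2}\Psi_N\|^2=\sum_n |d_n|^2/(n+1)$---is exactly the right framework, and is essentially how the cited reference proceeds. The identifications $d_0=1/d_N$, $d_1=0$, and the parity pattern you describe are correct, and the series $\sum_k\binom{2k}{k}/(4^k(2k+1))=\pi/2$ is indeed the right limiting object.

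However, as you yourself flag, your argument is not a complete proof: the step ``to leading order, $|d_{2k}|^2=\binom{2k}{k}/(4^k d_N^2)+O(1/(Nd_N^2))$'' with an unspecified dependence of the $O$-constant on $k$ does not suffice, since you must sum over all $k$ and the errors could accumulate. This is not a minor bookkeeping issue---termwise expansion of $\prod_{i<k}(1-i/N)$ produces $O(k^2/N)$ corrections, and for $k\gtrsim\sqrt N$ the ``leading order'' picture breaks down entirely. The alternating combinatorial sum is genuinely delicate in the transition regime.

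Your proposed remedy via the displaced-number-state/Laguerre identity is the right way to close the gap, but you stop before executing it. One workable route is: for $n\le N^{1/3}$ (say), use the explicit expansion to get $|d_n|\le C/d_N$ uniformly (this is the content of Lemma~\ref{lem:coherent_even_odd}, also cited from the literature), which already gives $\sum_{n\le N^{1/3}}|d_n|^2/(n+1)\le C\log N/d_N^2$; for $n>N^{1/3}$, use the crude bound $|d_n|^2/(n+1)\le N^{-1/3}|d_n|^2$ together with $\sum_n|d_n|^2=1$ to get a contribution $\le N^{-1/3}\le C/d_N^2$. A sharper argument avoiding the $\log N$ requires genuinely controlling the Laguerre asymptotics or, equivalently, the oscillatory cancellation in your sum---this is what the referenced proof in \cite{Chen2011} does.
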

	
	\begin{proof}
		See \cite[Lemma 6.3]{Chen2011}. 
	\end{proof}
	\begin{lem}
		\label{lem:coherent_even_odd} Let $P_{m}$ be the projection onto
		the $m$-particle sector of the Fock space $\mathcal{F}$ for a non-negative
		integer $m$. Then, for any non-negative integers $k\leq(1/2)N^{1/3}$,
		\[
		\left\Vert P_{2k}W^{*}(\sqrt{N}\varphi)\frac{(a^{*}(\varphi))^{N}}{\sqrt{N!}}\Omega\right\Vert_{\mathcal{F}} \leq\frac{2}{d_{N}}
		\]
		and 
		\[
		\left\Vert P_{2k+1}W^{*}(\sqrt{N}\varphi)\frac{(a^{*}(\varphi))^{N}}{\sqrt{N!}}\Omega\right\Vert_{\mathcal{F}} \leq\frac{2(k+1)^{3/2}}{d_{N}\sqrt{N}}.
		\]
	\end{lem}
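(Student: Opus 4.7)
The plan is to reduce the problem to an explicit computation of the scalar coefficient appearing in each particle sector, and then to estimate it via a generating-function argument. I first use the Weyl commutation
\[
W^{*}(\sqrt{N}\varphi)\, a^{*}(\varphi) = \bigl(a^{*}(\varphi) + \sqrt{N}\bigr)\, W^{*}(\sqrt{N}\varphi),
\]
which is the integrated form of the relation in Lemma \ref{lem:Basic_Weyl} together with $\|\varphi\|_{2} = 1$, and the identity $W^{*}(\sqrt{N}\varphi)\Omega = \psi(-\sqrt{N}\varphi)$, to rewrite
\[
W^{*}(\sqrt{N}\varphi)\, \frac{(a^{*}(\varphi))^{N}}{\sqrt{N!}}\,\Omega \;=\; \frac{1}{\sqrt{N!}}\bigl(a^{*}(\varphi)+\sqrt{N}\bigr)^{N}\,\psi(-\sqrt{N}\varphi).
\]
Expanding the binomial, substituting the explicit form of $\psi(-\sqrt{N}\varphi)$, and using $(a^{*}(\varphi))^{n}\Omega = \sqrt{n!}\,\varphi^{\otimes n}$, a direct bookkeeping shows that the $m$-particle projection is a scalar multiple of $\varphi^{\otimes m}$, with
\[
\Bigl\| P_{m}\, W^{*}(\sqrt{N}\varphi)\, \frac{(a^{*}(\varphi))^{N}}{\sqrt{N!}}\,\Omega \Bigr\|_{\mathcal{F}} \;=\; \frac{N^{m/2}}{d_{N}\sqrt{m!}}\,|\Delta^{m}c_{0}|, \qquad c_{j} := \frac{N!}{(N-j)!\,N^{j}},
\]
where $\Delta$ denotes the forward difference in $j$. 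The problem thus reduces to bounding $|\Delta^{m}c_{0}|$ in the range $m \leq N^{1/3}+1$.

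Next, I establish the generating-function identity
\[
\sum_{k\geq 0}\frac{\Delta^{k}c_{0}}{k!}\,z^{k} \;=\; e^{-z}\Bigl(1+\frac{z}{N}\Bigr)^{N} \;=\; \exp\!\left(-\frac{z^{2}}{2N} + \sum_{j\geq 3}\frac{(-1)^{j+1}}{j}\,\frac{z^{j}}{N^{j-1}}\right),
\]
which follows from Newton's series $c_{j} = \sum_{k}\binom{j}{k}\Delta^{k}c_{0}$ together with the elementary identity $\sum_{j}c_{j}z^{j}/j! = (1+z/N)^{N}$. The key structural observation is that the Gaussian factor $e^{-z^{2}/(2N)}$ is even in $z$, so every odd-order coefficient of the full series must be generated by at least one factor of a correction term $z^{j}/N^{j-1}$ with $j\geq 3$; this is precisely the origin of the $1/\sqrt{N}$ gain in the odd case.

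For the even case $m = 2k \leq N^{1/3}$, extracting the leading $[z^{2k}]$ of $e^{-z^{2}/(2N)}$ gives $\Delta^{2k}c_{0} = (-1)^{k}(2k)!/(2^{k}k!\,N^{k}) + O(N^{-k-1})$, and substituting this together with the elementary bound $\sqrt{(2k)!}/(2^{k}k!) = \sqrt{\binom{2k}{k}}/2^{k}\leq 1$ yields a sector norm at most $(1+O(1/N))/d_{N}\leq 2/d_{N}$ in the prescribed range. For the odd case $m = 2k+1$ with $k\geq 1$ (the case $k = 0$ is trivial since $\Delta c_{0} = c_{1}-c_{0} = 0$), the leading odd-power contribution comes from the product $[z^{2(k-1)}]\,e^{-z^{2}/(2N)}\cdot [z^{3}]$ of the correction expansion, producing $|\Delta^{2k+1}c_{0}|\sim (2k+1)!/(3\cdot 2^{k-1}(k-1)!\,N^{k+1})$. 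The extra factor of $1/N$ relative to the even case supplies the additional $1/\sqrt{N}$ in the final bound, and Stirling's approximation controls the remaining factorial ratio $\sqrt{(2k+1)!}/(2^{k-1}(k-1)!)$ by a constant multiple of $(k+1)^{3/2}$, giving the claimed $2(k+1)^{3/2}/(d_{N}\sqrt{N})$.

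The main obstacle will be uniformly controlling the higher-order remainders for $m$ as large as $N^{1/3}$. Representing $[z^{m}]$ as a Cauchy integral on a contour of radius $R\sim\sqrt{mN}$, each correction term satisfies $|z^{j}/N^{j-1}|\sim m^{j/2}/N^{j/2-1}$, which for $j = 3$ and $m\leq N^{1/3}$ is bounded by $m^{3/2}/\sqrt{N}\lesssim 1$. Hence the full correction factor $\exp(\sum_{j\geq 3}\ldots)$ contributes at most a multiplicative $1+O(1/N)$, and the leading-order analysis above is legitimate throughout the admissible range. The overall argument parallels Lemma 6.3 of \cite{Chen2011} used for Lemma \ref{lem:coherent_all}, adapted here to isolate the parity structure responsible for the sharp odd/even dichotomy.
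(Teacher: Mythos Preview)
The paper does not prove this lemma; it simply cites \cite[Lemma 7.2]{Lee2013}. Your reduction to the closed form
\[
\Bigl\|P_m W^{*}(\sqrt{N}\varphi)\tfrac{(a^{*}(\varphi))^{N}}{\sqrt{N!}}\Omega\Bigr\|_{\mathcal{F}}=\frac{N^{m/2}}{d_N\sqrt{m!}}\,|\Delta^m c_0|,\qquad c_j=\frac{N!}{(N-j)!\,N^j},
\]
and the generating-function identity $\sum_k(\Delta^k c_0)z^k/k!=e^{-z}(1+z/N)^N$ are both correct, and the parity observation---that the Gaussian factor $e^{-z^2/(2N)}$ is even, so every odd Taylor coefficient must carry at least one factor $z^3/N^2$---is exactly the mechanism that produces the extra $1/\sqrt{N}$ in the odd bound.

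The gap is in your remainder control. You note that on the contour $|z|\sim\sqrt{mN}$ the cubic correction satisfies $|z^3/(3N^2)|\sim m^{3/2}/\sqrt{N}\lesssim 1$ for $m\le N^{1/3}$, and then assert that the full correction factor is $1+O(1/N)$. That does not follow: an exponent that is merely $O(1)$ yields an $O(1)$ multiplicative factor, not $1+O(1/N)$. Worse, a crude Cauchy bound on $|z|=\sqrt{mN}$ (where $|e^{-z^2/(2N)}|$ can be as large as $e^{m/2}$) combined with Stirling gives only a sector norm of order $m^{1/4}/d_N$, which for $m\sim N^{1/3}$ is $N^{1/12}/d_N$, not $2/d_N$. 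To recover the stated constants you must either carry out a genuine steepest-descent expansion about the saddle $z=\pm i\sqrt{mN}$, tracking the cubic perturbation through the Laplace method, or abandon the contour entirely and expand $\exp\bigl(\sum_{j\ge 3}(-1)^{j+1}z^j/(jN^{j-1})\bigr)$ as a multinomial series, bounding the contribution of each monomial to $[z^m]$ directly and summing. Either route is where the real work of the lemma lies, and your sketch has not done it; the edge of the range $m\sim N^{1/3}$ is precisely where the correction terms cease to be perturbative and the estimate becomes delicate.
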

	
	\begin{proof}
		See \cite[Lemma 7.2]{Lee2013}. 
	\end{proof}
	
	\section{Proof of Main Theorem\label{sec:Pf-of-Main-Thm}}
	
	In this section, we prove the main result of the paper, Theorem 1.1 following the same logic given in \cite{Chen2018}.
	
	\subsection{Unitary operators and their generators}
	
	We let
	\[
	\psi_t = e^{-\mathrm{i}\mathcal{H}_N t}\varphi^{\otimes N}
	\]
	so that $\psi_t$ is the time evolution of the factorized state
	$\varphi^{\otimes N}$ with respect to the Hamiltonian $\mathcal{H}_{N}$.
	Noting the definition of $k$-particle marginal density \eqref{eq:Kernel_of_Marginal_Density}, the one-particle marginal density associated with $\psi_t$ can be written as
	\begin{align}
	\gamma_{N,t}^{(1)} & =\frac{\left\langle e^{-\mathrm{i}\mathcal{H}_{N}t}\varphi^{\otimes N},a_{y}^{*}a_{x}e^{-\mathrm{i}\mathcal{H}_{N}t}\varphi^{\otimes N}\right\rangle_{\mathcal{F}} }{\left\langle e^{-\mathrm{i}\mathcal{H}_{N}t}\varphi^{\otimes N},\mathcal{N}e^{-\mathrm{i}\mathcal{H}_{N}t}\varphi^{\otimes N}\right\rangle_{\mathcal{F}} }=\frac{1}{N}\left\langle \varphi^{\otimes N},e^{\mathrm{i}\mathcal{H}_{N}t}a_{y}^{*}a_{x}e^{-\mathrm{i}\mathcal{H}_{N}t}\varphi^{\otimes N}\right\rangle_{\mathcal{F}} \nonumber \\
	& =\frac{1}{N}\left\langle \frac{\left(a^{*}(\varphi)\right)^{N}}{\sqrt{N!}}\Omega,e^{\mathrm{i}\mathcal{H}_{N}t}a_{y}^{*}a_{x}e^{-\mathrm{i}\mathcal{H}_{N}t}\frac{\left(a^{*}(\varphi)\right)^{N}}{\sqrt{N!}}\Omega\right\rangle_{\mathcal{F}} .\label{eq:marginal_factorized}
	\end{align}
	We want to argue that \eqref{eq:marginal_factorized} can be approximated by the one-particle marginal density associated with the coherent states.
	To use the coherent state, we expand $a_{y}^{*}a_{x}$ around $N\overline{\varphi_{t}(y)}\varphi_{t}(x)$.
	The expansion leads us to investigate
	\begin{align}
	& W^{*}(\sqrt{N}\varphi_{s})e^{\mathrm{i}\mathcal{H}_{N}\left(t-s\right)}(a_{x}-\sqrt{N}\varphi_{t}(x))e^{-\mathrm{i}\mathcal{H}_{N}\left(t-s\right)}W(\sqrt{N}\varphi_{s})\label{eq:introducing U}\\
	& =W^{*}(\sqrt{N}\varphi_{s})e^{\mathrm{i}\mathcal{H}_{N}\left(t-s\right)}W(\sqrt{N}\varphi_{t})a_{x}W^{*}(\sqrt{N}\varphi_{t})e^{-\mathrm{i}\mathcal{H}_{N}\left(t-s\right)}W(\sqrt{N}\varphi_{s}).\nonumber 
	\end{align}
	By differentiating $W^{*}(\sqrt{N}\varphi_{t})e^{-\mathrm{i}\mathcal{H}_{N}\left(t-s\right)}W(\sqrt{N}\varphi_{s})$ with respect to $t$ as in \cite{Chen2011a,Lee2013,Rodnianski2009}, we have
	\begin{align}
	&\mathrm{i}\partial_{t}W^{*}(\sqrt{N}\varphi_{t})e^{-\mathrm{i}\mathcal{H}_{N}\left(t-s\right)}W(\sqrt{N}\varphi_{s})\notag\\
	&\qquad=:\left(\sum_{k=0}^{4}\mathcal{L}_{k}(t)\right)W^{*}(\sqrt{N}\varphi_{t})e^{-\mathrm{i}\mathcal{H}_{N}\left(t-s\right)}W(\sqrt{N}\varphi_{s}),\label{eq:derivative decomposition}
	\end{align}
	where
	\begin{align}
	\mathcal{L}_{0}(t) & :=\frac{N}{2}\int_{s}^{t}\mathrm{d}\tau\int\mathrm{d}x(V*|\varphi_{\tau}|^{2})(x)|\varphi_{\tau}(x)|^{2},\nonumber \\
	\mathcal{L}_{1}(t) & =0,\nonumber \\
	\mathcal{L}_{2}(t) & :=\int\mathrm{d}x\,\nabla_x a_{x}^{*}\nabla_x a_{x}+\int\mathrm{d}x\,\left(V*\left|\varphi_{t}\right|^{2}\right)\left(x\right)a_{x}^{*}a_{x}\nonumber\\
	&\qquad+\int\mathrm{d}x\mathrm{d}y\,V\left(x-y\right)\overline{\varphi_{t}\left(x\right)}\varphi_{t}\left(y\right)a_{y}^{*}a_{x}\nonumber \\
	&\qquad+\frac{1}{2}\int\mathrm{d}x\mathrm{d}y\,V\left(x-y\right)\left(\varphi_{t}\left(x\right)\varphi_{t}\left(y\right)a_{x}^{*}a_{y}^{*}+\overline{\varphi_{t}\left(x\right)}\,\overline{\varphi_{t}\left(y\right)}a_{x}a_{y}\right),\label{eq:L_2}\\
	\mathcal{L}_{3}(t) & :=\frac{1}{\sqrt{N}}\int\mathrm{d}x\mathrm{d}y\,V\left(x-y\right)\left(\varphi_{t}\left(y\right)a_{x}^{*}a_{y}^{*}+\overline{\varphi_{t}\left(y\right)}a_{x}^{*}a_{y}\right)a_{x},\text{ and}\label{eq:L_3}\\
	\mathcal{L}_{4} & :=\frac{1}{2N}\int\mathrm{d}x\mathrm{d}y\,V\left(x-y\right)a_{x}^{*}a_{y}^{*}a_{x}a_{y}.\label{eq:L_4}
	\end{align}
	Because the phase factor $\mathcal{L}_{0}(t)$ is just a complex-valued
	function, we can cancel this term by multiplying the right-hand
	side of \eqref{eq:derivative decomposition} by a function $e^{-\mathrm{i}\mathcal{L}_{0}(t)}$
	(see Section 3 of \cite{Lee2013}).
	Thus, if we define the unitary
	operator $\mathcal{U}(t;s)$ by 
	\[
	\mathcal{U}(t;s):=e^{-\mathrm{i}\omega(t;s)}W^{*}(\sqrt{N}\varphi_{t})e^{-\mathrm{i}\mathcal{H}_{N}\left(t-s\right)}W(\sqrt{N}\varphi_{s})
	\]
	with the phase factor 
	\[
	\omega(t;s):=\frac{N}{2}\int_{s}^{t}\mathrm{d}\tau\int\mathrm{d}x(V*|\varphi_{\tau}|^{2})(x)|\varphi_{\tau}(x)|^{2},
	\]
	then 
	\begin{equation}
	\mathrm{i}\partial_{t}\mathcal{U}\left(t;s\right)=\left(\mathcal{L}_{2}+\mathcal{L}_{3}+\mathcal{L}_{4}\right)\mathcal{U}\left(t;s\right)\quad\text{and}\quad\mathcal{U}\left(s;s\right)=I\label{eq:def_mathcalU}
	\end{equation}
	and 
	\[
	W^{*}(\sqrt{N}\varphi_{s})e^{\mathrm{i}\mathcal{H}_{N}\left(t-s\right)}\left(a_{x}-\sqrt{N}\varphi_{t}\left(x\right)\right)e^{-\mathrm{i}\mathcal{H}_{N}\left(t-s\right)}W(\sqrt{N}\varphi_{s})=\mathcal{U}^{*}\left(t;s\right)a_{x}\,\mathcal{U}\left(t;s\right).
	\]
	Let $\widetilde{\mathcal{L}}=\mathcal{L}_{2}+\mathcal{L}_{4}$ and
	define the unitary operator $\widetilde{\mathcal{U}}\left(t;s\right)$
	by 
	\begin{equation}
	\mathrm{i}\partial_{t}\widetilde{\mathcal{U}}\left(t;s\right)=\widetilde{\mathcal{L}}\left(t\right)\widetilde{\mathcal{U}}\left(t;s\right)\quad\text{and }\quad\widetilde{\mathcal{U}}\left(s;s\right)=1.\label{eq:def_mathcaltildeU}
	\end{equation}
	Since $\widetilde{\mathcal{L}}$ does not change the parity of the
	number of particles, 
	\begin{equation}
	\left\langle \Omega,\widetilde{\mathcal{U}}^{*}\left(t;0\right)a_{y}\,\widetilde{\mathcal{U}}\left(t;0\right)\Omega\right\rangle_{\mathcal{F}} =\left\langle \Omega,\widetilde{\mathcal{U}}^{*}\left(t;0\right)a_{x}^{*}\,\widetilde{\mathcal{U}}\left(t;0\right)\Omega\right\rangle_{\mathcal{F}} =0\label{eq:Parity_Consevation}
	\end{equation}
	We refer to Lemma 8.2 in \cite{Lee2013} for a rigorous proof of \eqref{eq:Parity_Consevation}.
	
	\subsection{Proof of Theorem \ref{thm:classical_Main}}
	
	As explained in Section \ref{sec:intro}, we use the technique developed
	in \cite{Lee2013} to prove Theorem \ref{thm:classical_Main}. The
	proof of Theorem \ref{thm:classical_Main} consists of the following
	two propositions. 
	
	\begin{prop}
		\label{prop:Et1} Suppose that the assumptions in Theorem \ref{thm:classical_Main}
		hold. For a Hermitian operator $J$ on $L^{2}(\mathbb{R}^{3})$, let
		\[
		E_{t}^{1}(J):=\frac{d_{N}}{N}\left\langle W^{*}(\sqrt{N}\varphi)\frac{(a^{*}(\varphi))^{N}}{\sqrt{N!}}\Omega,\mathcal{U}^{*}(t)d\Gamma(J)\mathcal{U}(t)\Omega\right\rangle_{\mathcal{F}} 
		\]
		Then, there exist a constant $C(t)$ depending only on $\lambda$, $\varphi_{0}$, and $t$ such that 
		\[
		\left|E_{t}^{1}(J)\right|\leq\frac{C(t)\|J\|_{\mathrm{op}}}{N}.
		\]
	\end{prop}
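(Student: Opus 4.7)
The plan is a Cauchy--Schwarz split followed by a moment bound on the number operator $\mathcal{N}$ along the evolution $\mathcal{U}(t;0)$, with the time dependence inherited from Proposition \ref{prop:key_estimate}.

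First, applying Cauchy--Schwarz gives
\[
|E_t^1(J)| \leq \frac{d_N}{N}\left\|(\mathcal{N}+1)^{-1/2}W^*(\sqrt{N}\varphi)\frac{(a^*(\varphi))^N}{\sqrt{N!}}\Omega\right\|_{\mathcal{F}}\left\|(\mathcal{N}+1)^{1/2}\mathcal{U}^*(t;0)d\Gamma(J)\mathcal{U}(t;0)\Omega\right\|_{\mathcal{F}}.
\]
Lemma \ref{lem:coherent_all} controls the first norm by $C/d_N$, so the factor $d_N$ cancels and it suffices to show that the remaining quantity is bounded by $C(t)\|J\|_{\mathrm{op}}$.

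The key analytic ingredient I would establish is a moment bound
\[
\left\|(\mathcal{N}+1)^{k/2}\mathcal{U}^*(t;s)\psi\right\|_{\mathcal{F}} \leq C_k(t)\left\|(\mathcal{N}+1)^{k/2}\psi\right\|_{\mathcal{F}}
\]
for small integer $k$. Granted this, one moves $(\mathcal{N}+1)^{1/2}$ past $\mathcal{U}^*(t;0)$ at the cost of a factor $C(t)$, uses \eqref{eq:J-bd} to replace $d\Gamma(J)$ by $\|J\|_{\mathrm{op}}\mathcal{N}$, and applies the same moment bound with $k=3$ to $\mathcal{U}(t;0)\Omega$, noting $\langle \Omega,(\mathcal{N}+1)^k\Omega\rangle_\mathcal{F}=1$.

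To prove the moment bound I would differentiate $f_k(t):=\langle \phi_t,(\mathcal{N}+1)^k \phi_t\rangle_{\mathcal{F}}$ with $\phi_t=\mathcal{U}(t;s)\psi$ using \eqref{eq:def_mathcalU}. Because $\mathcal{L}_4$ in \eqref{eq:L_4} commutes with $\mathcal{N}$, only the commutators with $\mathcal{L}_2$ and $\mathcal{L}_3$ contribute. The nontrivial pieces of $\mathcal{L}_2$ in \eqref{eq:L_2} are the pair creation/annihilation terms $\int V(x-y)\varphi_t(x)\varphi_t(y)a_x^*a_y^*\,dxdy$ and its adjoint, while $\mathcal{L}_3$ in \eqref{eq:L_3} carries a prefactor $1/\sqrt{N}$ that matches one creation operator to keep the estimate uniform in $N$. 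In each case, identifying $f(y)=V(x-y)\varphi_t(y)$ pointwise in $x$ and applying the creation/annihilation bounds \eqref{eq:bd-a} produces a coefficient $\sup_x\|V(\cdot-x)\varphi_t\|_2$ after a final Cauchy--Schwarz in the $x$ integration. The outcome is a differential inequality of the form
\[
|f_k'(t)| \leq C_k \sup_x\|V(\cdot-x)\varphi_t\|_2 \, f_k(t),
\]
and Gr\"onwall's inequality, combined with Proposition \ref{prop:key_estimate}, converts the time integral $\int_0^t\sup_x\|V(\cdot-x)\varphi_s\|_2\,ds$ into the $C(t)$ of Table \ref{table:RoC_t}.

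The main obstacle is the bookkeeping of the commutators at each order $k$: one must identify every non-number-preserving piece of $\mathcal{L}_2$ and $\mathcal{L}_3$ and bound it by $\sup_x \|V(\cdot-x)\varphi_t\|_2$ rather than a rougher quantity such as $\|V\|_\infty$, which would fail for the singular potentials of interest. This is precisely the structural reason that the time dependence of $C(t)$ in the final bound mirrors the one derived in Proposition \ref{prop:key_estimate}.
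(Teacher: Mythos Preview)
Your overall architecture matches the paper's proof: the same Cauchy--Schwarz split, the same use of Lemma~\ref{lem:coherent_all} to cancel $d_N$, and then a moment bound for $\mathcal{N}$ along $\mathcal{U}$ together with \eqref{eq:J-bd}. The time dependence indeed enters only through $\int_0^t\sup_x\|V(\cdot-x)\varphi_s\|_2\,ds$, exactly as you say.

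The gap is in the moment bound itself. The Gr\"onwall inequality $|f_k'(t)|\le C_k\sup_x\|V(\cdot-x)\varphi_t\|_2\,f_k(t)$ does \emph{not} close as written, because the cubic term $\mathcal{L}_3$ is one field operator too long. Concretely, for the piece $N^{-1/2}\int V(x-y)\varphi_t(y)\,a_x^*a_y^*a_x$, the commutator with $(\mathcal{N}+1)^k$ contributes a term bounded by
\[
C\,N^{-1/2}\sup_x\|V(\cdot-x)\varphi_t\|_2\,\|\mathcal{N}^{1/2}\phi_t\|_{\mathcal{F}}\,\|(\mathcal{N}+1)\,\phi_t\|_{\mathcal{F}},
\]
so for $k=1$ you get $N^{-1/2}f_1^{1/2}f_2^{1/2}$ rather than $f_1$. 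The prefactor $1/\sqrt{N}$ does \emph{not} ``match one creation operator'' in the sense you need: that would require $\mathcal{N}\lesssim N$ on $\phi_t=\mathcal{U}(t)\Omega$, which is precisely what you are trying to prove. This is why the paper's Lemma~\ref{lem:NjU} reads $\langle\mathcal{U}\psi,\mathcal{N}^j\mathcal{U}\psi\rangle\le C(t)\langle\psi,(\mathcal{N}+1)^{2j+2}\psi\rangle$, with a \emph{higher} power on the right; the proof (deferred to \cite{Chen2018}) handles the cascade $f_k'\lesssim f_k+N^{-1}f_{k+1}$ by iteration rather than by a single closed Gr\"onwall loop. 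Since in the end everything is applied to $\Omega$, the loss of powers is harmless and your argument goes through once you replace your same-power bound by Lemma~\ref{lem:NjU}.
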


	\begin{prop}
		\label{prop:Et2} Suppose that the assumptions in Theorem \ref{thm:classical_Main}
		hold. For a Hermitian operator $J$ on $L^{2}(\mathbb{R}^{3})$, let
		\[
		E_{t}^{2}(J):=\frac{d_{N}}{\sqrt{N}}\left\langle W^{*}(\sqrt{N}\varphi)\frac{(a^{*}(\varphi))^{N}}{\sqrt{N!}}\Omega,\mathcal{U}^{*}(t)\phi(J\varphi_{t})\mathcal{U}(t)\Omega\right\rangle_{\mathcal{F}}
		\]
		Then, there exist a constant $C(t)$ depending only on $\lambda$, $\varphi_{0}$, and $t$ such that 
		\[
		\left|E_{t}^{2}(J)\right|\leq\frac{C(t)\|J\|_{\mathrm{op}}}{N}.
		\]
	\end{prop}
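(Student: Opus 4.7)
The plan is to write $\mathcal{U}(t)$ as a perturbation of the quadratic dynamics $\widetilde{\mathcal{U}}(t)$ via Duhamel's formula and then exploit the parity preservation of $\widetilde{\mathcal{U}}$ together with the $1/\sqrt{N}$ prefactor in $\mathcal{L}_3$. Set $\Psi := W^*(\sqrt{N}\varphi)(a^*(\varphi))^N\Omega/\sqrt{N!}$, so that $E_t^2(J) = (d_N/\sqrt{N})\langle\Psi,\mathcal{U}^*(t)\phi(J\varphi_t)\mathcal{U}(t)\Omega\rangle$. From $\mathrm{i}\partial_t\mathcal{U}(t;0) = (\widetilde{\mathcal{L}}(t)+\mathcal{L}_3(t))\mathcal{U}(t;0)$ with $\mathcal{U}(0;0)=I$, a standard Duhamel argument yields
\[
\mathcal{U}(t;0) = \widetilde{\mathcal{U}}(t;0) - \mathrm{i}\int_0^t \widetilde{\mathcal{U}}(t;s)\mathcal{L}_3(s)\mathcal{U}(s;0)\,\mathrm{d}s,
\]
together with the adjoint formula. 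Inserting both expressions in $E_t^2(J)$ decomposes it into a leading term $\langle\Psi,\widetilde{\mathcal{U}}^*(t)\phi(J\varphi_t)\widetilde{\mathcal{U}}(t)\Omega\rangle$ plus three correction terms, each carrying at least one factor of $\mathcal{L}_3(s)$.

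For the leading term I would invoke parity. Since $\widetilde{\mathcal{L}} = \mathcal{L}_2 + \mathcal{L}_4$ shifts particle number by $0$ or $\pm 2$, the evolution $\widetilde{\mathcal{U}}(t)$ preserves parity; hence $\widetilde{\mathcal{U}}(t)\Omega$ lies in the even sector. After $\phi(J\varphi_t) = a^*(J\varphi_t)+a(J\varphi_t)$ flips parity and another application of $\widetilde{\mathcal{U}}^*(t)$ preserves it, the vector $\widetilde{\mathcal{U}}^*(t)\phi(J\varphi_t)\widetilde{\mathcal{U}}(t)\Omega$ lies in the odd sector. Only the odd components of $\Psi$ pair nontrivially with it, and Lemma \ref{lem:coherent_even_odd} bounds them by $\|P_{2k+1}\Psi\|\leq 2(k+1)^{3/2}/(d_N\sqrt{N})$. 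Combined with the $d_N/\sqrt{N}$ prefactor and moment bounds for $\|(\mathcal{N}+1)^{m/2}\widetilde{\mathcal{U}}(t)\Omega\|$ (derived for the quadratic generator via Grönwall), this gives the required $C(t)/N$ estimate on the leading term.

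For each of the three correction terms I would apply Cauchy--Schwarz together with the operator bound $\|\mathcal{L}_3(s)\,\psi\|\leq (C/\sqrt{N})\sup_x\|V(\cdot-x)\varphi_s\|_2\,\|(\mathcal{N}+1)^{3/2}\psi\|$, the bound $\|(\mathcal{N}+1)^{-1/2}\Psi\|\leq C/d_N$ of Lemma \ref{lem:coherent_all}, and moment estimates of the form $\|(\mathcal{N}+1)^{k}\mathcal{U}(s;0)\Omega\|\leq \exp\bigl(K\int_0^s\|V(\cdot-x)\varphi_r\|_2\,\mathrm{d}r\bigr)$ obtained by commuting $(\mathcal{N}+1)^k$ through $\mathcal{L}_3$ and closing a Grönwall loop. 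Proposition \ref{prop:key_estimate} converts the time integral into the constant $C(t)$ listed in Table \ref{table:RoC_t}. The factor $1/\sqrt{N}$ from $\mathcal{L}_3$ combines with the $d_N/\sqrt{N}\cdot 1/d_N = 1/\sqrt{N}$ coming from pairing with $\Psi$ to yield the required $1/N$; the term with two $\mathcal{L}_3$ factors is even smaller.

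The principal technical obstacle is the moment propagation under the full dynamics $\mathcal{U}$: commuting high powers of $(\mathcal{N}+1)$ past the cubic generator $\mathcal{L}_3$ produces lower-order polynomials in $\mathcal{N}$ that must be reabsorbed, and the precise time dependence recorded in Table \ref{table:RoC_t} is dictated entirely by the rate at which $\int_0^t\|V(\cdot-x)\varphi_s\|_2\,\mathrm{d}s$ grows in each regime of $(\lambda,\mu,\gamma)$. Structurally the argument is parallel to \cite{Chen2011a,Chen2018,Lee2013}; the improvement reflected in Table \ref{table:RoC_t} is inherited directly from the sharper time-dependent bound supplied by Proposition \ref{prop:key_estimate}.
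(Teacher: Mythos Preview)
Your proposal is correct and matches the paper's approach: the paper likewise splits $E_t^2(J)$ into a leading $\widetilde{\mathcal{U}}$-term handled by the parity argument and a remainder $\mathcal{R}(f)=\mathcal{U}^*(t)\phi(f)\mathcal{U}(t)-\widetilde{\mathcal{U}}^*(t)\phi(f)\widetilde{\mathcal{U}}(t)$ bounded via Lemma~\ref{lem:NjUphiUtildeUphitildeU}, whose proof (Lemma~4.7 of \cite{Chen2018}) is exactly the Duhamel-plus-$\mathcal{L}_3$ argument you outline. The one technical point to make explicit is that Lemma~\ref{lem:coherent_even_odd} only applies for $k\leq\tfrac12 N^{1/3}$, so the paper weights by $(\mathcal{N}+1)^{-5/2}$ on the odd-sector side and splits the sum at $L=\tfrac12 N^{1/3}$, controlling the tail with Lemma~\ref{lem:coherent_all}.
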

	
	Proof of Propositions \ref{prop:Et1} and \ref{prop:Et2} will be
	given later in section \ref{sec:Pf-of-Props}. With Propositions \ref{prop:Et1}
	and \ref{prop:Et2}, we now prove Theorem \ref{thm:classical_Main}.
	\begin{proof}[Proof of Theorem \ref{thm:classical_Main}]
		By the definition of $k$-particle density, in \eqref{eq:marginal_factorized} we have
		\[
		\gamma_{N,t}^{(1)}=\frac{1}{N}\left\langle \frac{\left(a^{*}(\varphi)\right)^{N}}{\sqrt{N!}}\Omega,e^{i\mathcal{H}_{N}t}a_{y}^{*}a_{x}e^{-i\mathcal{H}_{N}t}\frac{\left(a^{*}(\varphi)\right)^{N}}{\sqrt{N!}}\Omega\right\rangle_{\mathcal{F}}.
		\]
		From \eqref{eq:creation}, the factorized state $\varphi^{\otimes N}$ in $\mathcal{F}$ can be written in the following form:
		\begin{equation}
		\{0,0,\dots,0,\varphi^{\otimes N},0,\dots\}=\frac{\left(a^{*}(\varphi)\right)^{N}}{\sqrt{N!}}\Omega.\label{eq:coherent_vec}
		\end{equation}
		From \eqref{eq:projection} and \eqref{Weyl_f},
		we find that 
		\[
		\frac{\left(a^{*}(\varphi)\right)^{N}}{\sqrt{N!}}\Omega=\frac{\sqrt{N!}}{N^{N/2}e^{-N/2}}P_{N}W(\sqrt{N}\varphi)\Omega=d_{N}P_{N}W(\sqrt{N}\varphi)\Omega.
		\]
		Since $\left[\mathcal{H}_{N},\mathcal{N}\right]=0$,
		we also have that 
		\begin{align*}
		\gamma_{N,t}^{(1)}(x;y) & =\frac{1}{N}\left\langle \frac{\left(a^{*}(\varphi)\right)^{N}}{\sqrt{N!}}\Omega,e^{\mathrm{i}\mathcal{H}_{N}t}a_{y}^{*}a_{x}e^{-\mathrm{i}\mathcal{H}_{N}t}\frac{\left(a^{*}(\varphi)\right)^{N}}{\sqrt{N!}}\Omega\right\rangle_{\mathcal{F}} \\
		& =\frac{d_{N}}{N}\left\langle \frac{\left(a^{*}(\varphi)\right)^{N}}{\sqrt{N!}}\Omega,e^{\mathrm{i}\mathcal{H}_{N}t}a_{y}^{*}a_{x}e^{-\mathrm{i}\mathcal{H}_{N}t}P_{N}W(\sqrt{N}\varphi)\Omega\right\rangle_{\mathcal{F}} \\
		& =\frac{d_{N}}{N}\left\langle \frac{\left(a^{*}(\varphi)\right)^{N}}{\sqrt{N!}}\Omega,P_{N}e^{\mathrm{i}\mathcal{H}_{N}t}a_{y}^{*}a_{x}e^{-\mathrm{i}\mathcal{H}_{N}t}W(\sqrt{N}\varphi)\Omega\right\rangle_{\mathcal{F}} \\
		& =\frac{d_{N}}{N}\left\langle \frac{\left(a^{*}(\varphi)\right)^{N}}{\sqrt{N!}}\Omega,e^{\mathrm{i}\mathcal{H}_{N}t}a_{y}^{*}a_{x}e^{-\mathrm{i}\mathcal{H}_{N}t}W(\sqrt{N}\varphi)\Omega\right\rangle_{\mathcal{F}} .\\
		\end{align*}
		Moreover, using
		\[
		e^{\mathrm{i}\mathcal{H}_{N}t}a_{x}e^{-\mathrm{i}\mathcal{H}_{N}t}=W(\sqrt{N}\varphi)\mathcal{U}^{*}(t)(a_{x}+\sqrt{N}\varphi_{t}(x))\mathcal{U}(t)W^{*}(\sqrt{N}\varphi)
		\]
		and similar relation for the $a_x^*$, we obtain that
		\begin{align*}
		\gamma_{N,t}^{(1)}(x;y) & =\frac{d_{N}}{N}\left\langle \frac{\left(a^{*}(\varphi)\right)^{N}}{\sqrt{N!}}\Omega,e^{\mathrm{i}\mathcal{H}_{N}t}a_{y}^{*}a_{x}e^{-\mathrm{i}\mathcal{H}_{N}t}W(\sqrt{N}\varphi)\Omega\right\rangle_{\mathcal{F}} \\
		& =\frac{d_{N}}{N}\left\langle \frac{\left(a^{*}(\varphi)\right)^{N}}{\sqrt{N!}}\Omega,W(\sqrt{N}\varphi)\mathcal{U}^{*}(t)(a_{y}^{*}+\sqrt{N}\,\overline{\varphi_{t}\left(y\right)})(a_{x}+\sqrt{N}\varphi_{t}(x))\mathcal{U}(t)\Omega\right\rangle_{\mathcal{F}} .
		\end{align*}
		Hence, 
		\begin{align*}
		\gamma_{N,t}^{(1)}(x;y)-\overline{\varphi_{t}\left(y\right)}{\varphi_t(x)} & =\frac{d_{N}}{N}\left\langle \frac{\left(a^{*}(\varphi)\right)^{N}}{\sqrt{N!}}\Omega,W(\sqrt{N}\varphi)\mathcal{U}^{*}(t)a_{y}^{*}a_{x}\mathcal{U}(t)\Omega\right\rangle_{\mathcal{F}} \\
		& \quad+\overline{\varphi_{t}\left(y\right)}\frac{d_{N}}{\sqrt{N}}\left\langle \frac{\left(a^{*}(\varphi)\right)^{N}}{\sqrt{N!}}\Omega,W(\sqrt{N}\varphi)\mathcal{U}^{*}(t)a_{x}\mathcal{U}(t)\Omega\right\rangle_{\mathcal{F}} \\
		& \quad+\varphi_{t}(x)\frac{d_{N}}{\sqrt{N}}\left\langle \frac{\left(a^{*}(\varphi)\right)^{N}}{\sqrt{N!}}\Omega,W(\sqrt{N}\varphi)\mathcal{U}^{*}(t)a_{y}^{*}\mathcal{U}(t)\Omega\right\rangle_{\mathcal{F}} .
		\end{align*}
		By the definition of $E_{t}^{1}(J)$ and $E_{t}^{2}(J)$ in Propositions
		\ref{prop:Et1} and \ref{prop:Et2}, for any compact one-particle
		Hermitian operator $J$ on $L^{2}(\mathbb{R}^{3})$, we obtain
		\begin{align*}
		\operatorname{Tr}(J(\gamma_{N,t}^{(1)}-\left|\varphi_{t}\right\rangle \left\langle \varphi_{t}\right|) & =\int\mathrm{d}x\mathrm{d}yJ(x;y)\left(\gamma_{N,t}^{(1)}(y;x)-\varphi_{t}(y)\overline{\varphi_{t}\left(x\right)}\right)\\
		& =\frac{d_{N}}{N}\left\langle \frac{\left(a^{*}(\varphi)\right)^{N}}{\sqrt{N!}}\Omega,W(\sqrt{N}\varphi)\mathcal{U}^{*}(t)d\Gamma(J)\mathcal{U}(t)\Omega\right\rangle_{\mathcal{F}} \\
		& \quad+\frac{d_{N}}{\sqrt{N}}\left\langle \frac{\left(a^{*}(\varphi)\right)^{N}}{\sqrt{N!}}\Omega,W(\sqrt{N}\varphi)\mathcal{U}^{*}(t)\phi(J\varphi_{t})\mathcal{U}(t)\Omega\right\rangle_{\mathcal{F}} \\
		& =E_{t}^{1}(J)+E_{t}^{2}(J).
		\end{align*}
		Thus, Propositions \ref{prop:Et1} and \ref{prop:Et2} lead us
		that 
		\[
		\left|\operatorname{Tr}J(\gamma_{N,t}^{(1)}-\left|\varphi_{t}\right\rangle \left\langle \varphi_{t}\right|)\right|\leq C(t)\frac{\left\Vert J\right\Vert_{\mathrm{op}} }{N}.
		\]
		Since the space of compact operators is the dual to that of the trace
		class operators, and since $\gamma_{N,t}^{(1)}$ and $\left|\varphi_{t}\right\rangle \left\langle \varphi_{t}\right|$
		are Hermitian, 
		\[
		\operatorname{Tr}\left|\gamma_{N,t}^{(1)}-\left|\varphi_{t}\right\rangle \left\langle \varphi_{t}\right|\right|\leq\frac{C(t)}{N}
		\]
		which concludes the proof of Theorem \ref{thm:classical_Main}. 
	\end{proof}
	
	\section{Comparison of Dynamics and Proof of Propositions}

	\subsection{Comparison of dynamics}\label{sec:comparison}
	
	This section follows \cite{Rodnianski2009}. Rodnianski and Schlein
	used Hardy inequality $\sup_{x}\|V(\cdot-x)\varphi_{t}\|_{2}\leq C$
	in \cite{Rodnianski2009}. In \cite{Chen2018}, the authors used Stricharz
	estimate to bound the time integration of $\sup_{x}\|V(\cdot-x)\varphi_{t}\|_{2}$,
	i.e., \[
	\int_{0}^{t}\mathrm{d}s\sup_{x}\|V(\cdot-x)\varphi_{s}\|_{2}\leq\left(\int_{0}^{t}\mathrm{d}s\right)^{1/2}\sup_{x}\|V(\cdot-x)\varphi_{s}\|_{L^{2}((0,t),L^{\infty}(\mathbb{R}^{3}))}\leq Ct^{3/2}.
	\]
	This section will bound $\sup_{x}\|V(\cdot-x)\varphi_{t}\|_{2}$
	by $C(t)$ so that we can use the Table \ref{table:RoC_t}. Since
	the structure of each proof coincides with previous results \cite{Chen2018,Rodnianski2009},
	here we just provide the lemmas without proofs, because one can easily
	change all the $Ce^{Kt}$ appeared in \cite{Chen2018} by $C(t)$.
	\begin{lem}
		\label{lem:N_1_L3-1} Suppose that the assumptions in Theorem \ref{thm:classical_Main}
		hold. Then, for any $\psi\in\mathcal{F}$ and $j\in\mathbb{N}$, there
		exist a constant $C\equiv C(j)$ such that
		\[
		\left\Vert \left(\mathcal{N}+1\right)^{j/2}\mathcal{L}_{3}(t)\psi\right\Vert_{\mathcal{F}} \leq\frac{C}{\sqrt{N}}\sup_{x}\|V(x-\cdot)\varphi_{t}\|_{2}\left\Vert \left(\mathcal{N}+1\right)^{\left(j+3\right)/2}\psi\right\Vert_{\mathcal{F}} .
		\]
	\end{lem}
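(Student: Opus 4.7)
The plan is to split $\mathcal{L}_{3}(t)=\mathcal{L}_{3}^{(1)}(t)+\mathcal{L}_{3}^{(2)}(t)$ according to the two summands in \eqref{eq:L_3},
\[
\mathcal{L}_{3}^{(1)}(t)=\frac{1}{\sqrt{N}}\int\mathrm{d}x\mathrm{d}y\,V(x-y)\varphi_{t}(y)a_{x}^{*}a_{y}^{*}a_{x},\qquad \mathcal{L}_{3}^{(2)}(t)=\frac{1}{\sqrt{N}}\int\mathrm{d}x\mathrm{d}y\,V(x-y)\overline{\varphi_{t}(y)}a_{x}^{*}a_{y}a_{x},
\]
and to bound each piece separately. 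First I would use the canonical commutators $[\mathcal{N},a_{x}^{*}]=a_{x}^{*}$ and $[\mathcal{N},a_{x}]=-a_{x}$ to commute $(\mathcal{N}+1)^{j/2}$ past each cubic monomial, obtaining $(\mathcal{N}+1)^{j/2}\mathcal{L}_{3}^{(i)}(t)=\mathcal{L}_{3}^{(i)}(t)(\mathcal{N}+c_{i})^{j/2}$ for a small integer $c_{i}$, and then absorb $(\mathcal{N}+c_{i})^{j/2}\psi$ into $(\mathcal{N}+1)^{j/2}\psi$ up to a multiplicative constant. This reduces the lemma to the case $j=0$.

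Next I would rewrite the inner $y$-integral as a creation (resp.\ annihilation) operator for the vector $V(x-\cdot)\varphi_{t}$, so that
\[
\mathcal{L}_{3}^{(1)}(t)\psi=\frac{1}{\sqrt{N}}\int\mathrm{d}x\,a_{x}^{*}F_{1}(x),\qquad F_{1}(x):=a^{*}\bigl(V(x-\cdot)\varphi_{t}\bigr)a_{x}\psi,
\]
and analogously for $\mathcal{L}_{3}^{(2)}$ with $F_{2}(x):=a\bigl(V(x-\cdot)\varphi_{t}\bigr)a_{x}\psi$. The key auxiliary tool is the vector-valued creation estimate
\[
\Bigl\Vert\int\mathrm{d}x\,a_{x}^{*}F(x)\Bigr\Vert_{\mathcal{F}}^{2}\leq \int\mathrm{d}x\,\bigl\Vert(\mathcal{N}+1)^{1/2}F(x)\bigr\Vert_{\mathcal{F}}^{2},
\]
which I would prove sector by sector by writing $a_{x}^{*}$ via its delta-function symmetrization formula and applying Cauchy--Schwarz to the resulting sum of $(m+1)$ terms on the $(m+1)$-particle sector.

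With this in hand I would estimate the integrand for $\mathcal{L}_{3}^{(1)}$. Using $\|a^{*}(f)\eta\|_{\mathcal{F}}\leq\|f\|_{2}\|(\mathcal{N}+1)^{1/2}\eta\|_{\mathcal{F}}$ from \eqref{eq:bd-a}, together with $(\mathcal{N}+1)^{1/2}a^{*}(f)=a^{*}(f)(\mathcal{N}+2)^{1/2}$ and $(\mathcal{N}+2)^{1/2}a_{x}=a_{x}(\mathcal{N}+1)^{1/2}$, I get
\[
\bigl\Vert(\mathcal{N}+1)^{1/2}F_{1}(x)\bigr\Vert_{\mathcal{F}}\leq C\,\|V(x-\cdot)\varphi_{t}\|_{2}\,\|a_{x}(\mathcal{N}+1)\psi\|_{\mathcal{F}}.
\]
Pulling out $\sup_{x}\|V(x-\cdot)\varphi_{t}\|_{2}$ and invoking $\int\mathrm{d}x\,\|a_{x}\eta\|_{\mathcal{F}}^{2}=\|\mathcal{N}^{1/2}\eta\|_{\mathcal{F}}^{2}$ with $\eta=(\mathcal{N}+1)\psi$ yields $\|\mathcal{L}_{3}^{(1)}(t)\psi\|_{\mathcal{F}}\leq \frac{C}{\sqrt{N}}\sup_{x}\|V(x-\cdot)\varphi_{t}\|_{2}\|(\mathcal{N}+1)^{3/2}\psi\|_{\mathcal{F}}$. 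The treatment of $\mathcal{L}_{3}^{(2)}$ is analogous but uses the annihilation bound $\|a(f)\eta\|_{\mathcal{F}}\leq\|f\|_{2}\|\mathcal{N}^{1/2}\eta\|_{\mathcal{F}}$ in place of the creation bound; the bookkeeping of number-operator powers produces the same $(\mathcal{N}+1)^{3/2}$ factor on $\psi$.

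The main obstacle is the distributional character of the operator-valued integrals: since $a_{x}^{*}$ is only densely defined, the vector-valued creation estimate cannot be obtained by a naive triangle inequality and must be established by the sector-wise symmetrization argument above. Once that inequality is in place, the remainder of the proof is a routine tracking of number-operator commutators to verify that exactly three additional factors of $(\mathcal{N}+1)^{1/2}$ appear on $\psi$ on top of the $j/2$ produced by the initial commutation step.
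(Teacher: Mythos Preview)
Your proof is correct and follows the standard route taken in the literature (the paper itself simply cites Lemma~4.6 of \cite{Chen2018} rather than reproving the estimate): split $\mathcal{L}_{3}$ into its two cubic monomials, commute $(\mathcal{N}+1)^{j/2}$ through using $[\mathcal{N},a_{x}^{*}]=a_{x}^{*}$ to reduce to $j=0$, apply the sector-wise vector-valued creation estimate $\|\int\mathrm{d}x\,a_{x}^{*}F(x)\|_{\mathcal{F}}^{2}\leq\int\mathrm{d}x\,\|(\mathcal{N}+1)^{1/2}F(x)\|_{\mathcal{F}}^{2}$, and finish with \eqref{eq:bd-a} and $\int\mathrm{d}x\,\|a_{x}\eta\|_{\mathcal{F}}^{2}=\|\mathcal{N}^{1/2}\eta\|_{\mathcal{F}}^{2}$. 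The bookkeeping of number-operator shifts (net $+1$ for $a_{x}^{*}a_{y}^{*}a_{x}$, net $-1$ for $a_{x}^{*}a_{y}a_{x}$) and the absorption $(\mathcal{N}+2)^{j/2}\leq 2^{j/2}(\mathcal{N}+1)^{j/2}$ are handled correctly, and the final exponent $(j+3)/2$ comes out as claimed.
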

	\begin{proof}
		See Lemma 4.6 of~\cite{Chen2018}.
	\end{proof}
	
	\begin{lem}
		\label{lem:NjU} Suppose that the assumptions in Theorem \ref{thm:classical_Main}
		hold. Let $\mathcal{U}\left(t;s\right)$ be the unitary evolution
		defined in \eqref{eq:def_mathcalU}. Then for any $\psi\in\mathcal{F}$
		and $j\in\mathbb{N}$, there exist constants $C(t)\equiv C(t,j)$
		such that 
		\[
		\left\langle \mathcal{U}\left(t;s\right)\psi,\mathcal{N}^{j}\mathcal{U}\left(t;s\right)\psi\right\rangle_{\mathcal{F}} \leq C(t)\left\langle \psi,\left(\mathcal{N}+1\right)^{2j+2}\psi\right\rangle_{\mathcal{F}} .
		\]
	\end{lem}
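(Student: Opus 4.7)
The plan is to prove Lemma \ref{lem:NjU} by differentiating the moment functional
\[
f_j(t) := \left\langle \mathcal{U}(t;s)\psi, (\mathcal{N}+1)^j \mathcal{U}(t;s)\psi\right\rangle_{\mathcal{F}}
\]
in $t$, using the generator identity \eqref{eq:def_mathcalU}, and then closing via a Gr\"onwall argument driven by Proposition \ref{prop:key_estimate}. Writing $\phi(t) := \mathcal{U}(t;s)\psi$, we get
\[
\partial_t f_j(t) = i\left\langle \phi(t), \bigl[\mathcal{L}_2(t)+\mathcal{L}_3(t)+\mathcal{L}_4, (\mathcal{N}+1)^j\bigr]\phi(t)\right\rangle_{\mathcal{F}}.
\]
The quartic interaction $\mathcal{L}_4 = \frac{1}{2N}\int V\,a_x^*a_y^*a_xa_y$ is number-preserving and therefore commutes with $(\mathcal{N}+1)^j$. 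Likewise, the purely diagonal ($a^*a$-type) and kinetic pieces of $\mathcal{L}_2$ in \eqref{eq:L_2} preserve particle number. What remains is (i) the pair creation/annihilation piece of $\mathcal{L}_2$, namely $B+B^*$ with $B = \frac{1}{2}\int V(x-y)\varphi_t(x)\varphi_t(y)a_x^*a_y^*$, and (ii) the cubic operator $\mathcal{L}_3$ in \eqref{eq:L_3}.

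For piece (i), I would expand the commutator as $[B, (\mathcal{N}+1)^j] = \sum_{k=0}^{j-1}(\mathcal{N}+1)^k[B, \mathcal{N}+1](\mathcal{N}+1)^{j-1-k}$ using that $B$ shifts particle number by $+2$, and bound it by Cauchy--Schwarz together with the pointwise estimate $\|V{*}|\varphi_t|^2\|_\infty \leq \|\varphi_t\|_2 \sup_x\|V(\cdot-x)\varphi_t\|_2 \leq C\sup_x\|V(\cdot-x)\varphi_t\|_2$. This yields a contribution controlled by $C\sup_x\|V(\cdot-x)\varphi_t\|_2\cdot f_j(t)$. For piece (ii), I would apply Lemma \ref{lem:N_1_L3-1} and Cauchy--Schwarz to the decomposition of $[\mathcal{L}_3, (\mathcal{N}+1)^j]$ into terms shifting particle number by $\pm 1$; this gives a bound of the form
\[
\bigl|\langle \phi, [\mathcal{L}_3, (\mathcal{N}+1)^j]\phi\rangle_{\mathcal{F}}\bigr|
\leq \frac{C}{\sqrt{N}}\sup_x\|V(\cdot-x)\varphi_t\|_2\, \bigl\|(\mathcal{N}+1)^{(j+1)/2}\phi\bigr\|_{\mathcal{F}} \cdot \bigl\|(\mathcal{N}+1)^{(j+2)/2}\phi\bigr\|_{\mathcal{F}}.
\]
Combining (i) and (ii), one obtains a differential inequality of the shape
\[
\bigl|\partial_t f_j(t)\bigr| \leq C\sup_x\|V(\cdot-x)\varphi_t\|_2\,\bigl(f_j(t) + f_{j+1}(t)\bigr).
\]

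The system in $j$ is closed by induction: to control $f_{j+1}$ one uses the same estimate to propagate to $f_{j+2}$, and so on, with each step doubling the loss in the power of $\mathcal{N}+1$. After finitely many iterations (tracking the exponent carefully) one arrives at an inequality for $f_j$ of the form $\partial_t f_j(t)\leq C\sup_x\|V(\cdot-x)\varphi_t\|_2\,\bigl(f_j(t) + f_{2j+2}(0)\bigr)$ after absorbing $\mathcal{N}+1$ factors into the initial data, and Gr\"onwall gives
\[
f_j(t) \leq \exp\!\left(C\int_s^t\sup_x\|V(\cdot-x)\varphi_\tau\|_2\,\mathrm{d}\tau\right)\bigl\langle \psi,(\mathcal{N}+1)^{2j+2}\psi\bigr\rangle_{\mathcal{F}}.
\]
By Proposition \ref{prop:key_estimate}, the exponential prefactor is bounded by the $C(t)$ from Table \ref{table:RoC_t}, which completes the argument since $\mathcal{N}^j\leq (\mathcal{N}+1)^j$.

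The main obstacle will be the bookkeeping needed to close the recursive hierarchy in $j$ so that exactly the power $2j+2$ appears on the right. The $\mathcal{L}_3$ commutator unavoidably generates a higher $\mathcal{N}$-moment than the one we started with, and one must iterate the differential inequality carefully, using the $1/\sqrt{N}$ suppression of $\mathcal{L}_3$ in Lemma \ref{lem:N_1_L3-1} to prevent the moments from cascading uncontrolled. Once this step is carried out as in \cite{Chen2018}, replacing their $Ce^{Kt}$-type estimate on $\int\sup_x\|V(\cdot-x)\varphi_\tau\|_2\,\mathrm{d}\tau$ by the sharper $C(t)$ from Proposition \ref{prop:key_estimate}, the claim follows with the time dependence dictated by Table \ref{table:RoC_t}.
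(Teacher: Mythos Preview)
Your proposal is correct and follows the same approach the paper takes: the paper's proof is simply ``See Lemma 4.1 of \cite{Chen2018},'' with the understanding (stated at the start of Section \ref{sec:comparison}) that every $Ce^{Kt}$ there is replaced by the $C(t)$ coming from Proposition \ref{prop:key_estimate}, and this is exactly the Gr\"onwall-on-$\mathcal{N}$-moments argument you outline. Your intermediate description of how the $\mathcal{L}_3$-hierarchy closes (arriving at ``$\partial_t f_j \leq Cg(t)\bigl(f_j + f_{2j+2}(0)\bigr)$'') is not quite how the argument in \cite{Chen2018} actually runs---one pairs the $1/\sqrt{N}$ in $\mathcal{L}_3$ against weights of the form $(\mathcal{N}/N+1)$ so that the differential inequality closes at each fixed level rather than cascading down to initial data---but since you explicitly defer those bookkeeping details to \cite{Chen2018}, the overall strategy stands.
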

	\begin{proof}
		See Lemma 4.1 of~\cite{Chen2018}.
	\end{proof}
	
	\begin{lem}
		\label{lem:tildeNj} Suppose that the assumptions in Theorem \ref{thm:classical_Main}
		hold. 
		Let $\widetilde{\mathcal{U}}\left(t;s\right)$ be the unitary evolution
		defined in \eqref{eq:def_mathcaltildeU}.
		Then, for any $\psi\in\mathcal{F}$ and $j\in\mathbb{N}$, there
		exist a constant $C(t)\equiv C(t,j)$ such that
		\[
		\left\langle \widetilde{\mathcal{U}}\left(t;s\right)\psi,\mathcal{N}^{j}\widetilde{\mathcal{U}}\left(t;s\right)\psi\right\rangle_{\mathcal{F}} \leq C(t)\left\langle \psi,\left(\mathcal{N}+1\right)^{2j+2}\psi\right\rangle_{\mathcal{F}} .
		\]
	\end{lem}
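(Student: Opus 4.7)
The plan is a Gronwall-type argument on the quantity $g_j(t) := \langle \widetilde{\mathcal{U}}(t;s)\psi,(\mathcal{N}+1)^j \widetilde{\mathcal{U}}(t;s)\psi\rangle_\mathcal{F}$, running in parallel with the proof of Lemma \ref{lem:NjU} (i.e.\ Lemma 4.1 of \cite{Chen2018}), but in fact simpler, because $\widetilde{\mathcal{L}} = \mathcal{L}_2 + \mathcal{L}_4$ does not contain the cubic term $\mathcal{L}_3$ which was the principal source of difficulty in the $\mathcal{U}$-case. Differentiating in $t$ and using \eqref{eq:def_mathcaltildeU} yields
\[
\frac{d}{dt}g_j(t) = i\bigl\langle \widetilde{\mathcal{U}}(t;s)\psi, [\widetilde{\mathcal{L}}(t),(\mathcal{N}+1)^j]\widetilde{\mathcal{U}}(t;s)\psi\bigr\rangle_\mathcal{F},
\]
so the entire argument reduces to estimating this commutator.

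The quartic interaction $\mathcal{L}_4$ preserves particle number and commutes with $\mathcal{N}$, as do the kinetic term, the mean-field potential term $\int (V*|\varphi_t|^2)a_x^*a_x$, and the exchange term $\int V(x-y)\overline{\varphi_t(x)}\varphi_t(y)a_y^*a_x$ appearing in \eqref{eq:L_2}. The only non-trivial contribution therefore comes from the pair creation/annihilation piece
\[
K(t) := \tfrac{1}{2}\int \mathrm{d}x\mathrm{d}y\, V(x-y)\bigl(\varphi_t(x)\varphi_t(y)a_x^*a_y^* + \overline{\varphi_t(x)\varphi_t(y)}a_xa_y\bigr).
\]
Using the canonical commutation relation $(\mathcal{N}+1)^j a_x^*a_y^* = a_x^*a_y^*(\mathcal{N}+3)^j$ and the mean-value theorem, $[a_x^*a_y^*,(\mathcal{N}+1)^j]$ is $O(a_x^*a_y^*(\mathcal{N}+1)^{j-1})$, and similarly for $a_xa_y$. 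Pairing against $\widetilde{\mathcal{U}}\psi$, applying Cauchy--Schwarz, and invoking \eqref{eq:bd-a} yields
\[
\bigl|\tfrac{d}{dt}g_j(t)\bigr| \leq C\,\bigl\|V(x-y)\varphi_t(x)\varphi_t(y)\bigr\|_{L^2(\mathbb{R}^6)}\, g_{j+c}(t)
\]
for a fixed small integer $c$ accounting for the power loss in the commutation.

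The double-$L^2$ factor is controlled by $\|\varphi_t\|_\infty\,\sup_x\|V(x-\cdot)\varphi_t\|_{L^2}$; this is uniformly bounded under the hypotheses of Theorem \ref{thm:classical_Main} via Proposition \ref{prop:key_estimate} combined with Lemma \ref{lem:bdd_H1}, with any remaining $t$-dependence folded into $C(t)$. The fact that the differential inequality couples $g_j$ to a slightly higher moment $g_{j+c}$, rather than to $g_j$ itself, is the reason the right-hand side of the lemma involves $(\mathcal{N}+1)^{2j+2}$ instead of $(\mathcal{N}+1)^j$: one runs the Gronwall argument at a higher power first and interpolates back. Concretely, one shows $g_{2j+2}(t) \leq C(t)\,g_{2j+2}(s)$ by iterating the commutator estimate $O(1)$ times and invoking Gronwall, and then uses $g_j(t)^2 \leq g_{2j}(t)\,g_0 \leq g_{2j+2}(t)$ to conclude.

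The main technical obstacle is the combinatorial bookkeeping at higher powers of $\mathcal{N}$: one must check that every commutator $[(\mathcal{N}+1)^{j'},K(t)]$ produced at each iteration yields only benign additional factors of $(\mathcal{N}+1)^{1/2}$ that can be absorbed into the higher-moment bound, and that the time-dependent constants remain integrable on $[s,t]$ so that Gronwall closes. This is exactly the bookkeeping carried out in the proof of Lemma 4.1 of \cite{Chen2018}, now stripped of all contributions from $\mathcal{L}_3$ and therefore requiring no appeal to the Strichartz-based integral $\int_0^t \|V(x-\cdot)\varphi_s\|_2\,ds$; only the milder $\|V(x-y)\varphi_t\otimes\varphi_t\|_{L^2}$ bound is needed.
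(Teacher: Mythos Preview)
The paper simply cites Lemma 4.5 of \cite{Chen2018}, and your Gronwall-on-moments strategy is exactly the standard route taken there: differentiate $g_j$, observe that only the pair term $K(t)$ in $\mathcal{L}_2$ survives the commutator with $(\mathcal{N}+1)^j$, estimate, and integrate.

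Two points in your sketch are wrong, however. First, your last sentence has it backwards: the Gronwall step \emph{does} require the time integral $\int_s^t\sup_x\|V(x-\cdot)\varphi_\tau\|_2\,d\tau$, because integrating the coefficient of the differential inequality is precisely what produces the factor $C(t)$ via Proposition~\ref{prop:key_estimate}. The simplification over the $\mathcal{U}$ case is the absence of $\mathcal{L}_3$, not the disappearance of the time integration. (Also, $\|V(x-y)\varphi_t(x)\varphi_t(y)\|_{L^2(\mathbb{R}^6)}\leq\|\varphi_t\|_2\sup_x\|V(x-\cdot)\varphi_t\|_2$, with $\|\varphi_t\|_2$ rather than $\|\varphi_t\|_\infty$.) Second, the coupling of $g_j$ to a higher moment $g_{j+c}$ and the subsequent interpolation paragraph are unnecessary and, as written, circular. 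Since $a_x^*a_y^*$ shifts $\mathcal{N}$ by exactly $2$, one may distribute powers of $(\mathcal{N}+1)$ symmetrically and bound the commutator contribution directly by $C\sup_x\|V(x-\cdot)\varphi_t\|_2\,g_j(t)$, so Gronwall closes at level $j$ without ascending. The exponent $2j+2$ on the right-hand side of the lemma is simply a weaker bound carried over from the $\mathcal{U}$ statement for uniformity; you need not climb to level $2j+2$ and interpolate back, and your sketch of doing so (``one shows $g_{2j+2}(t)\leq C(t)g_{2j+2}(s)$'') assumes the very estimate it purports to prove at a higher index.
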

	\begin{proof}
		See Lemma 4.5 of~\cite{Chen2018}.
	\end{proof}
	
	The following lemma will be used in the proof of Proposition \ref{prop:Et2} in the following Section \ref{sec:Pf-of-Props}.
	\begin{lem}
		\label{lem:NjUphiUtildeUphitildeU} Suppose that the assumptions in
		Theorem \ref{thm:classical_Main} hold.
		Let $\mathcal{U}\left(t;s\right)$ and $\widetilde{\mathcal{U}}\left(t;s\right)$ be the unitary evolution
		defined in \eqref{eq:def_mathcalU} and \eqref{eq:def_mathcaltildeU} respectively.
		Then, for all $j\in\mathbb{N}$,
		there exist constants $C(t)\equiv C(t,j)$ such that, for any $f\in L^{2}(\mathbb{R}^{3})$,
		\[
		\left\Vert \left(\mathcal{N}+1\right)^{j/2}\left(\mathcal{U}^{*}\left(t\right)\phi(f)\mathcal{U}\left(t\right)-\mathcal{\widetilde{U}}^{*}\left(t\right)\phi(f)\mathcal{\widetilde{U}}\left(t\right)\right)\Omega\right\Vert_{\mathcal{F}} \leq C(t)\frac{\|f\|_{2}}{\sqrt{N}}.
		\]
	\end{lem}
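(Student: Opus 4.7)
The plan is to reduce the difference $\mathcal{U}^*(t)\phi(f)\mathcal{U}(t)-\widetilde{\mathcal{U}}^*(t)\phi(f)\widetilde{\mathcal{U}}(t)$ to a time integral involving only $\mathcal{L}_3$, which is the sole difference between the generators $\mathcal{L}_2+\mathcal{L}_3+\mathcal{L}_4$ of $\mathcal{U}$ and $\widetilde{\mathcal{L}}=\mathcal{L}_2+\mathcal{L}_4$ of $\widetilde{\mathcal{U}}$. Since $\mathcal{L}_3$ carries an explicit $1/\sqrt{N}$ and is controlled through $\sup_x\|V(x-\cdot)\varphi_s\|_2$ by Lemma \ref{lem:N_1_L3-1}, this is precisely where the required $1/\sqrt{N}$ and the time-dependent constant from Proposition \ref{prop:key_estimate} will come from.

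First, by Duhamel's formula applied to the defining equations \eqref{eq:def_mathcalU} and \eqref{eq:def_mathcaltildeU},
\[
\mathcal{U}(t)-\widetilde{\mathcal{U}}(t)=-\mathrm{i}\int_{0}^{t}\mathrm{d}s\,\widetilde{\mathcal{U}}(t;s)\,\mathcal{L}_{3}(s)\,\mathcal{U}(s),
\]
and an analogous identity for the adjoints (using that $\mathcal{L}_3$ is self-adjoint). Substituting these into $\mathcal{U}^*\phi(f)\mathcal{U}-\widetilde{\mathcal{U}}^*\phi(f)\widetilde{\mathcal{U}}$ and splitting the telescoping difference as
\[
[\mathcal{U}^*(t)-\widetilde{\mathcal{U}}^*(t)]\phi(f)\mathcal{U}(t)+\widetilde{\mathcal{U}}^*(t)\phi(f)[\mathcal{U}(t)-\widetilde{\mathcal{U}}(t)],
\]
applied to $\Omega$, produces two analogous integrals, each of the schematic form $\int_0^t ds\,\mathcal{A}(s)\,\mathcal{L}_3(s)\,\mathcal{B}(s)\Omega$, where $\mathcal{A}$ and $\mathcal{B}$ are compositions of $\mathcal{U}^{(*)}$, $\widetilde{\mathcal{U}}^{(*)}$, and possibly $\phi(f)$.

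Next I would bound $\|(\mathcal{N}+1)^{j/2}\cdot\|_\mathcal{F}$ of each integrand by peeling operators off one at a time. Using Lemma \ref{lem:NjU} (resp.\ Lemma \ref{lem:tildeNj}) one passes the power of $(\mathcal{N}+1)^{j/2}$ through $\mathcal{U}^*(s)$ (resp.\ $\widetilde{\mathcal{U}}^*(t;s)$) at the cost of doubling the exponent and picking up a time-dependent constant $C(t)$. Then Lemma \ref{lem:N_1_L3-1} applied to the resulting $\mathcal{L}_3(s)$ factor yields a gain of $\frac{1}{\sqrt{N}}\sup_x\|V(x-\cdot)\varphi_s\|_2$ at the cost of raising the exponent by $3/2$. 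The remaining factor acts on $\Omega$ (or on $\phi(f)\mathcal{U}(t)\Omega$), and one uses \eqref{eq:bd-a} together with the commutation $a_x f(\mathcal{N})=f(\mathcal{N}+1)a_x$ to move $(\mathcal{N}+1)$-powers past $\phi(f)$, producing a factor $\|f\|_2$; then Lemma \ref{lem:NjU}/\ref{lem:tildeNj} once more gives a finite moment bound on $\mathcal{U}(t)\Omega$ or $\widetilde{\mathcal{U}}(t;s)\Omega$ because all moments of $\mathcal{N}$ on $\Omega$ are trivially bounded.

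Combining these ingredients, each integrand is dominated by $C(t)N^{-1/2}\|f\|_2\sup_x\|V(x-\cdot)\varphi_s\|_2$; integrating in $s$ and invoking Proposition \ref{prop:key_estimate} to control $\int_0^t \sup_x\|V(x-\cdot)\varphi_s\|_2\,\mathrm{d}s\leq C(t)$ yields the claimed bound. The main obstacle is the book-keeping of $(\mathcal{N}+1)$-powers: each pass through $\mathcal{U}$, $\widetilde{\mathcal{U}}$, $\mathcal{L}_3$, or $\phi(f)$ enlarges the exponent, so one must fix a chain of admissible exponents (in terms of $j$) and verify that at the end only a finite moment of $\mathcal{N}$ on $\Omega$ is needed. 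Once this exponent tracking is set up correctly, the remaining estimates are direct applications of the lemmas already established in Sections 2 and 5.
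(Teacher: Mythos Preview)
Your proposal is correct and is precisely the approach the paper has in mind: the paper defers to Lemma~4.7 of \cite{Chen2018}, whose proof follows exactly the Duhamel-plus-telescoping scheme you describe, using Lemmas~\ref{lem:N_1_L3-1}, \ref{lem:NjU}, and \ref{lem:tildeNj} together with the bound on $\int_0^t\sup_x\|V(x-\cdot)\varphi_s\|_2\,\mathrm{d}s$ from Proposition~\ref{prop:key_estimate}. Your remarks about tracking the growth of the $(\mathcal{N}+1)$-exponents are also exactly the bookkeeping needed, and no essential step is missing.
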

	\begin{proof}
		See Lemma 4.7 of~\cite{Chen2018}. To prove this lemma, we use Lemma \ref{lem:N_1_L3-1} and \ref{lem:tildeNj} as Lemma 4.7 of \cite{Chen2018} used Lemma 4.1 and 4.5 of \cite{Chen2018}.
	\end{proof}
	
	\subsection{Proof of Propositions \ref{prop:Et1} and \ref{prop:Et2}}\label{sec:Pf-of-Props-1}\label{sec:Pf-of-Props}
	
	In this section, we prove Propositions \ref{prop:Et1} and \ref{prop:Et2}
	by applying the lemmas provided in Subsection \ref{sec:comparison}.
	
	\begin{proof}[Proof of Proposition \ref{prop:Et1}]
		Note that
		\[
		E_{t}^{1}(J)=\frac{d_{N}}{N}\left\langle W^{*}(\sqrt{N}\varphi)\frac{(a^{*}(\varphi))^{N}}{\sqrt{N!}}\Omega,\mathcal{U}^{*}(t)d\Gamma(J)\mathcal{U}(t)\Omega\right\rangle_{\mathcal{F}}
		\]
		implies that
		\begin{align}
		\left|E_{t}^{1}(J)\right| & =\left|\frac{d_{N}}{N}\left\langle W^{*}(\sqrt{N}\varphi)\frac{(a^{*}(\varphi))^{N}}{\sqrt{N!}}\Omega,\mathcal{U}^{*}(t)d\Gamma(J)\mathcal{U}(t)\Omega\right\rangle_{\mathcal{F}} \right|\label{eq:E_t^1 1-1}\\
		& \leq\frac{d_{N}}{N}\left\Vert (\mathcal{N}+1)^{-\frac{1}{2}}W^{*}(\sqrt{N}\varphi)\frac{(a^{*}(\varphi))^{N}}{\sqrt{N!}}\Omega\right\Vert_{\mathcal{F}} \nonumber\\
		&\qquad\quad\times\left\Vert (\mathcal{N}+1)^{\frac{1}{2}}\mathcal{U}^{*}(t)d\Gamma(J)\mathcal{U}(t)\Omega\right\Vert_{\mathcal{F}}. \nonumber 
		\end{align}
		Using Lemma \ref{lem:coherent_all}, we have
		\begin{equation}
		\left\Vert (\mathcal{N}+1)^{-\frac{1}{2}}W^{*}(\sqrt{N}\varphi)\frac{(a^{*}(\varphi))^{N}}{\sqrt{N!}}\Omega\right\Vert_{\mathcal{F}} \leq\frac{C(t)}{d_{N}}.\label{eq:E_t^1 2-1}
		\end{equation}
		By applying Lemma \ref{lem:NjU} and \eqref{eq:J-bd} several times, we get 
		\begin{align}
		\left\Vert (\mathcal{N}+1)^{\frac{1}{2}}\mathcal{U}^{*}(t)d\Gamma(J)\mathcal{U}(t)\Omega\right\Vert_{\mathcal{F}}  & \leq C(t)\left\Vert (\mathcal{N}+1)^{2}d\Gamma(J)\mathcal{U}(t)\Omega\right\Vert_{\mathcal{F}}\notag\\
		&\leq C(t)\left\Vert J\right\Vert_{\mathrm{op}} \left\Vert (\mathcal{N}+1)^{3}\mathcal{U}(t)\Omega\right\Vert_{\mathcal{F}} \nonumber \\
		& \leq C(t)\left\Vert J\right\Vert_{\mathrm{op}} \left\Vert (\mathcal{N}+1)^{7}\Omega\right\Vert_{\mathcal{F}} .\label{eq:E_t^1 3-1}
		\end{align}
		Therefore, from \eqref{eq:E_t^1 1-1}, \eqref{eq:E_t^1 2-1}, and \eqref{eq:E_t^1 3-1}, we have the desired bound
		\[
		\left|E_{t}^{1}(J)\right|\leq\frac{C(t)\|J\|_{\mathrm{op}}}{N}.
		\]
	\end{proof}
	For the proof of Proposition \ref{prop:Et2}, We apply a very similar approach to the one used in the proof of Lemma 4.2 in \cite{Lee2013}. To obtain the logical completeness, we fill the detail.
	\begin{proof}[Proof of Proposition \ref{prop:Et2}]
		
		Let
		\[
		\mathcal{R}(f)=\mathcal{U}^{*}(t)\phi(f)\mathcal{U}(t)-\widetilde{\mathcal{U}}^{*}(t)\phi(f)\widetilde{\mathcal{U}}(t).
		\]
		According to \eqref{eq:Parity_Consevation}, the even sector will have zero amplitude, i.e. 
		\[
		P_{2\ell}\,\widetilde{\mathcal{U}}^{*}(t)\phi(J\varphi_{t})\widetilde{\mathcal{U}}(t)\Omega=0
		\]
		for all $\ell=0,1,{\dots}$. (See Lemma 8.2 in \cite{Lee2013} for more
		detail.) This gives us that
		\begin{align}
		\left|E_{t}^{2}(J)\right| & =\frac{d_{N}}{\sqrt{N}}\left\langle \frac{(a^{*}(\varphi))^{N}}{\sqrt{N!}}\Omega,W^{*}(\sqrt{N}\varphi)\mathcal{\widetilde{\mathcal{U}}}^{*}(t)\phi(J\varphi_{t})\mathcal{\widetilde{\mathcal{U}}}(t)\Omega\right\rangle_{\mathcal{F}} \nonumber \\
		& \qquad+\frac{d_{N}}{\sqrt{N}}\left\langle \frac{(a^{*}(\varphi))^{N}}{\sqrt{N!}}\Omega,W^{*}(\sqrt{N}\varphi)\mathcal{R}(J\varphi_{t})\Omega\right\rangle_{\mathcal{F}} \nonumber \\
		& \leq\frac{d_{N}}{\sqrt{N}}\left\Vert \sum_{\ell=1}^{\infty}(\mathcal{N}+1)^{-\frac{5}{2}}P_{2\ell-1}W^{*}(\sqrt{N}\varphi)\frac{(a^{*}(\varphi))^{N}}{\sqrt{N!}}\Omega\right\Vert_{\mathcal{F}} \nonumber\\
		&\qquad\qquad\times \left\Vert (\mathcal{N}+1)^{\frac{5}{2}}\mathcal{\widetilde{\mathcal{U}}}^{*}(t)\phi(J\varphi_{t})\widetilde{\mathcal{U}}(t)\Omega\right\Vert_{\mathcal{F}} \nonumber \\
		& \qquad+\frac{d_{N}}{\sqrt{N}}\left\Vert (\mathcal{N}+1)^{-\frac{1}{2}}W^{*}(\sqrt{N}\varphi)\frac{(a^{*}(\varphi))^{N}}{\sqrt{N!}}\Omega\right\Vert_{\mathcal{F}} \nonumber\\
		&\qquad\qquad \times\left\Vert (\mathcal{N}+1)^{\frac{1}{2}}\mathcal{R}(J\varphi_{t})\Omega\right\Vert_{\mathcal{F}} \label{eq:e_t^2 1-1}
		\end{align}
		We divide the sum into two group using $L=\frac{1}{2}N^{1/3}$, Lemma \ref{lem:coherent_all},
		and Lemma \ref{lem:coherent_even_odd} such that
		\begin{align}
		& \left\Vert \sum_{\ell=1}^{\infty}(\mathcal{N}+1)^{-\frac{5}{2}}P_{2\ell-1}W^{*}(\sqrt{N}\varphi)\frac{(a^{*}(\varphi))^{N}}{\sqrt{N!}}\Omega\right\Vert^{2}_{\mathcal{F}}\nonumber\\
		& \qquad\leq\sum_{\ell=1}^{L}\left\Vert (\mathcal{N}+1)^{-\frac{5}{2}}P_{2\ell-1}W^{*}(\sqrt{N}\varphi)\frac{(a^{*}(\varphi))^{N}}{\sqrt{N!}}\Omega\right\Vert_{\mathcal{F}} ^{2}\nonumber\\
		& \qquad\qquad+\frac{1}{L^{4}}\sum_{\ell=L}^{\infty}\left\Vert (\mathcal{N}+1)^{-1/2}P_{2\ell-1}W^{*}(\sqrt{N}\varphi)\frac{(a^{*}(\varphi))^{N}}{\sqrt{N!}}\Omega\right\Vert^{2}_{\mathcal{F}}\nonumber\\
		& \qquad\leq\left(\sum_{\ell=1}^{L}\frac{C}{\ell^{2}d_{N}^{2}N}\right)+\frac{C}{N^{4/3}}\left\Vert (\mathcal{N}+1)^{-1/2}W^{*}(\sqrt{N}\varphi)\frac{(a^{*}(\varphi))^{N}}{\sqrt{N!}}\Omega\right\Vert_{\mathcal{F}} \leq\frac{C(t)}{d_{N}^{2}N}.\label{eq:E2bdd}
		\end{align}
		Applying Lemma \ref{lem:tildeNj}, 
		\begin{alignat*}{1}
		& \left\Vert (\mathcal{N}+1)^{\frac{5}{2}}\widetilde{\mathcal{U}}^{*}(t)\phi(J\varphi_{t})\widetilde{\mathcal{U}}(t)\Omega\right\Vert_{\mathcal{F}} \leq C(t)\left\Vert (\mathcal{N}+1)^{\frac{5}{2}}\phi(J\varphi_{t})\widetilde{\mathcal{U}}(t)\Omega\right\Vert_{\mathcal{F}} \\
		& \quad\leq C(t)\|J\varphi_{t}\|\left\Vert (\mathcal{N}+1)^{3}\mathcal{\widetilde{\mathcal{U}}}(t)\Omega\right\Vert_{\mathcal{F}} \leq C(t)\|J\|\left\Vert (\mathcal{N}+1)^{3}\Omega\right\Vert_{\mathcal{F}} \leq C\|J\|_{\mathrm{op}}.
		\end{alignat*}
		For the second term of \eqref{eq:E2bdd}, we apply Lemmas \ref{lem:coherent_all}
		and \ref{lem:NjUphiUtildeUphitildeU}, and put $J\varphi_{t}$ into  $f$.
		Altogether, we get the desired bound
		\[
		\left\Vert (\mathcal{N}+1)^{j/2}\mathcal{R}(f)\Omega\right\Vert_{\mathcal{F}} \leq\frac{C(t)\|f\|_{2}}{N}.
		\] 
	\end{proof}
	
	\section*{Acknowledgments}
	The author is grateful to numerous helpful discussions
	and suggestions from Ji Oon Lee.
	The author also would like to thank the anonymous referee for carefully reading the manuscript and providing helpful comments.
	This research is supported in part by KIA Motors Scholarship.

\end{document}